\documentclass[envcountsame]{llncs}

\usepackage{latexsym,amsfonts}
\usepackage{amsmath}
\usepackage{amssymb}
\usepackage{array}
\usepackage{graphicx,color}
\usepackage{latexsym,amsfonts}
\usepackage{bbm}
\usepackage{tikz}
\usetikzlibrary{positioning}
\usepackage{url}

\newcommand{\entry}{\mathsf{entry}}
\newcommand{\vote}{\mathsf{vote}}
\newcommand{\cert}{\mathcal{Y}}
\newcommand{\minimize}[1]{\text{minimize} \ #1}
\newcommand{\maximize}[1]{\text{maximize} \ #1}
\newcommand{\minmarg}{\textsc{MinMargin}}

\DeclareMathOperator*{\argmax}{arg\,max}

\makeatletter
\newcommand\myitem[1][]{\item[#1]\def\@currentlabel{#1}}
\makeatother

\definecolor{blue}{RGB}{0,82,147} 
\definecolor{red}{RGB}{202,033,063}
\definecolor{green}{RGB}{98,158,31} 

\usepackage[breaklinks=true]{hyperref} 
\hypersetup{
	colorlinks=true,
	citecolor=green,
	linkcolor=blue,  
}

\newcounter{lpnumber} \setcounter{lpnumber}{0}
\newenvironment{linearprogram}[1]
{ \stepcounter{lpnumber}
  \begin{gather} #1 \tag{LP\arabic{lpnumber}} \\[-5ex] \notag
  \end{gather}
  \hspace{1cm} subject to \\[-8ex]
  \align }
{ \endalign }

\spnewtheorem{observation}[theorem]{Observation}{\bfseries}{\itshape}
\spnewtheorem{mythm}{Theorem}{\bfseries}{\itshape}
\spnewtheorem{mylem}{Lemma}{\bfseries}{\itshape}

\usepackage{environ}
\usepackage{mathdots}

\newcommand{\repeattheorem}[1]{%
  \begingroup
  \renewcommand{\themythm}{\ref{#1}}%
  \expandafter\expandafter\expandafter\mythm
  \csname reptheorem@#1\endcsname
  \endmythm
  \endgroup
}
\NewEnviron{reptheorem}[1]{%
  \global\expandafter\xdef\csname reptheorem@#1\endcsname{%
    \unexpanded\expandafter{\BODY}%
  }%
  \expandafter\theorem\BODY\unskip\label{#1}\endtheorem
}

\newcommand{\repeatlemma}[1]{%
  \begingroup
  \renewcommand{\themylem}{\ref{#1}}%
  \expandafter\expandafter\expandafter\mylem
  \csname replemma@#1\endcsname
  \endmylem
  \endgroup
}
\NewEnviron{replemma}[1]{%
  \global\expandafter\xdef\csname replemma@#1\endcsname{%
    \unexpanded\expandafter{\BODY}%
  }%
  \expandafter\lemma\BODY\unskip\label{#1}\endlemma
}


\begin{document}
\title{Popular Branchings and Their Dual Certificates\thanks{Part of this work was done at the 9th Eml\'ekt\'abla workshop in G\'ardony, Hungary.}}
\author{Telikepalli Kavitha\inst{1} \and Tam\'{a}s Kir\'{a}ly\inst{2} \and Jannik Matuschke\inst{3} \and \\ Ildik\'{o} Schlotter\inst{4} \and Ulrike Schmidt-Kraepelin\inst{5}}
\institute{TIFR, Mumbai, India; \email{kavitha@tifr.res.in} \and E\"otv\"os University, Budapest, Hungary; \email{tkiraly@cs.elte.hu} \and KU Leuven, Belgium; \email{jannik.matuschke@kuleuven.be} \and Budapest University of Technology and Economics, Hungary; \email{ildi@cs.bme.hu} \and Technische Universit\"at Berlin, Germany; \email{u.schmidt-kraepelin@tu-berlin.de}}
\maketitle
\pagestyle{plain}

\begin{abstract}
Let $G$ be a digraph where every node has preferences over its incoming edges. The preferences of a node extend
naturally to preferences over {\em branchings}, i.e., directed forests; a branching $B$ is  {\em popular} if $B$ does not lose a head-to-head
election (where nodes cast votes) against any branching. Such popular branchings have a natural application in liquid democracy. The popular branching problem is to decide if $G$ admits a popular branching or not.
We give a characterization of popular branchings in terms of {\em dual certificates} and use this characterization to design an efficient 
combinatorial algorithm for the popular branching problem. 
When preferences are weak rankings, we use our characterization to formulate the {\em popular branching polytope} in the original space and
also show that our algorithm can be modified to compute a branching with {\em least unpopularity margin}.
When preferences are strict rankings, we show that ``approximately popular'' branchings always exist. 
\end{abstract}

\section{Introduction}
\label{sec:intro}
Let $G$ be a directed graph where every node has preferences (in partial order) over its incoming edges. When $G$ is simple, the preferences can equivalently be defined on in-neighbors. 
We define a {\em branching} as a subgraph of $G$ that is a directed forest where any node has in-degree at most~1; a node with in-degree 0 is a \emph{root}.
The problem we consider here is to find a branching  that is {\em popular}. 

Given any pair of branchings, we say a node $u$ prefers the branching where it has a more preferred incoming edge
(being a root is $u$'s worst choice). If neither incoming edge is preferred to the other, then $u$ is indifferent between 
the two branchings.
So any pair of branchings, say $B$ and $B'$, can be compared by asking for the majority opinion, i.e., every node opts for the branching that it prefers,
and it abstains if it is indifferent between them.
Let $\phi(B,B')$ (resp., $\phi(B',B)$) be the number of nodes that prefer $B$ (resp., $B'$) in the $B$-vs-$B'$ comparison. 
If $\phi(B',B) > \phi(B,B')$, then we say $B'$ is {\em more popular} than $B$.

\begin{definition}
  \label{pop-branching}
  A branching $B$ is popular in $G$ if there is no branching that is more popular than $B$. That is, $\phi(B,B') \ge \phi(B',B)$ for all
branchings $B'$ in $G$.
\end{definition}

\noindent{\bf An application in computational social choice.}
We see the main application of popular branchings within \emph{liquid democracy}.
Suppose there is an election where a specific issue should be decided upon, and there are several proposed alternatives.
Every individual voter has an opinion on these alternatives, but might also consider certain other voters as being better informed than her.
Liquid democracy is a  novel voting scheme that
provides a middle ground between the feasibility of representative democracy and the idealistic appeal of direct democracy \cite{BlZu16a}: Voters can choose whether they delegate their vote to another, well-informed voter or cast their vote themselves. As the name suggests, voting power flows through the underlying network, or in other words, delegations are transitive.
During the last decade, this idea has been implemented within several online decision platforms such as {\em Sovereign} and {\em LiquidFeedback}\footnote{See www.democracy.earth and www.interaktive-demokratie.org, respectively.} and was used for internal decision making at Google \cite{HL15} and political parties, such as the German {\em Pirate Party} or the Swedish party {\em Demoex}.

In order to circumvent \emph{delegation cycles}, e.g., a situation in which voter $x$ delegates to voter $y$ and vice versa, and to enhance the expressiveness of delegation preferences, several authors proposed to let voters declare a set of acceptable representatives \cite{GKMP18} together with a preference relation among them \cite{Bril18a,HL15,KoRi18a}. Then, a mechanism selects one of the approved representatives for each voter, avoiding delegation cycles. Similarly as suggested in \cite{ChGr17a}, we additionally assume that voters accept themselves as their least preferred approved representative. 

This reveals the connection to branchings in simple graphs (with loops), where nodes correspond to voters and the edge $(x,y)$ indicates that voter $x$ is an approved representative of voter $y$.\footnote{Typically, such a delegation is represented by an edge $(y,x)$; for the sake of consistency with downward edges in a branching, we use $(x,y)$.} 
{Every root in the branching casts a weighted vote on behalf of all her descendants.}
{What is a good mechanism to select representatives for voters?}
A crucial aspect in liquid democracy is the {\em stability} of the delegation process~\cite{BGL18a,EGP19a}. For the {model described above}, 
we propose popular branchings as a new concept of stability, i.e., {the majority of the electorate will always weakly prefer to delegate votes along the edges of a popular branching as opposed to delegating along the edges of any other branching.}

Not every directed graph admits a popular branching. Consider the following 
{simple graph} on four nodes $a, b, c, d$ where $a, b$ (similarly, $c, d$) are each other's top choices, while $a, c$ (similarly, $b, d$) are each other's second choices. There is no edge between $a,d$ (similarly, $b,c$). 
Consider the branching $B = \{(a,b),(a,c),(c,d)\}$.
A more popular branching is $B' = \{(d,c),(c,a),(a,b)\}$. Observe that $a$ and $c$ prefer $B'$ to $B$, while $d$ prefers $B$ to $B'$ 
and $b$ is indifferent between $B$ and $B'$. We can similarly obtain a branching $B'' = \{(b,a),(b,d),(d,c)\}$ that is more popular than $B'$. 
It is easy to check that this instance has no popular branching.

\subsection{Our Problem and Results}
The popular branching problem is to decide if a given digraph $G$ admits a popular branching or not, and if so, to find one. 
We show that determining whether a given branching $B$ is popular is equivalent to solving a min-cost arborescence problem in an extension of $G$ with appropriately defined edge costs (these edge costs are a {\em function} of the arborescence).
The dual LP to this arborescence problem gives rise to a laminar set system that serves as a certificate for the popularity of $B$ if it is popular.
This dual certificate proves crucial in devising an algorithm for efficiently solving the popular branching problem.

\begin{theorem}
  \label{thm:main}
  Given a directed graph $G$ where every node has preferences in arbitrary partial order over its incoming edges, there is
  a polynomial-time algorithm to decide if $G$ admits a popular branching or not, and if so, to find one.
\end{theorem}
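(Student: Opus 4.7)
The plan is to reduce the popularity of a candidate branching $B$ to a weighted arborescence problem on an extended graph, then exploit LP duality to obtain a laminar dual certificate that both verifies popularity and drives the search for such a $B$.

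First I would extend $G$ to $G^{*}$ by adding a dummy root $r$ with an edge $(r,v)$ for every $v \in V(G)$ placed at the bottom of $v$'s preference list; every branching of $G$ is then a spanning $r$-arborescence of $G^{*}$. Given a candidate $B$, I assign to each edge $e=(u,v)$ of $G^{*}$ a weight $w_{B}(e) \in \{-1,0,+1\}$ that is $+1$ if $v$ strictly prefers $e$ to its $B$-incoming edge, $-1$ if the opposite strict preference holds, and $0$ otherwise. A direct summation shows $w_{B}(B') = \phi(B',B)-\phi(B,B')$ for every arborescence $B'$, so $B$ is popular exactly when the maximum weight of an $r$-arborescence in $(G^{*},w_{B})$ equals $0$, a value that $B$ itself already attains.

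I would then invoke Edmonds' LP for arborescences (after a harmless shift to non-negative weights) together with the classical fact that its dual admits an integral optimum supported on a laminar family. Complementary slackness will then give a clean combinatorial certificate: $B$ is popular iff there is a laminar family $\cert$ of subsets of $V$, together with non-negative multipliers, such that every edge's total multiplier does not exceed its shifted weight with equality on the edges of $B$, and the $B$-edge entering each $S \in \cert$ is undominated among all edges entering $S$. Because $w_{B}$ takes only three values, I expect this to reduce to a purely preference-based condition on the edges crossing the sets of $\cert$.

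Given this characterization, the algorithm constructs $B$ and $\cert$ top-down. Starting from $V$, one identifies a "safe" subset $S$ that must belong to $\cert$ based on the preferences of nodes with no sufficiently good external parent, picks an $S$-entering edge satisfying the dominance condition, and recurses inside $S$. Correctness splits in two: if the procedure succeeds, the resulting laminar family satisfies the certificate condition of Step~2; if it fails at some set $S$, one extracts a more-popular branching by an exchange argument around $S$, proving that $G$ admits no popular branching. Since a laminar family on $V$ has $O(|V|)$ members and each step is local, polynomial running time is routine. The main obstacle I foresee is pinning down the notion of a safe set and showing that a local, preference-driven choice at level $S$ is compatible with all choices that must later be made inside $S$; this likely needs a careful exchange argument together with an explicit construction of an improving arborescence whenever no safe set can be found.
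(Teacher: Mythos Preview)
Your reduction to an arborescence problem via a dummy root, the $\{-1,0,+1\}$ weighting (equivalently the paper's $\{0,1,2\}$ costs after shifting), and the use of Edmonds/Fulkerson LP duality to obtain a laminar certificate are all correct and match the paper almost verbatim. You also correctly anticipate that the certificate collapses to a two-layer laminar family because costs are bounded by $2$.

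The genuine gap is in your algorithmic step, and it is precisely the step you flag as the main obstacle. Two issues:

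\emph{(i) The ``safe set'' is left undefined, and the top-down recursion is the wrong shape.} The paper does not discover the laminar family one set at a time starting from $V$. Instead it defines, for \emph{every} node $v$, a set $X_v$ obtained by a fixed-point iteration: start with $X_v^0=V$ and repeatedly restrict to the nodes reachable from $v$ using only ``safe'' edges of the current set (an edge $(u,w)\in E[X]$ is safe if it is undominated inside $E[X]$ and dominates every edge entering $X$ at $w$). The resulting family $\{X_v:v\in V\}$ is automatically laminar, and its maximal members partition $V$. The key structural lemma, which your outline lacks any analogue of, is that \emph{every} dual certificate of \emph{every} popular arborescence satisfies $Y_v\subseteq X_v$; this is what makes the $X_v$'s canonical and decouples their computation from any particular candidate $B$. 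Your recursion, by contrast, picks an entering edge for $S$ before knowing what happens inside, and you give no argument why an early choice cannot block all later ones.

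\emph{(ii) The failure branch does not prove non-existence.} Exhibiting a branching more popular than one specific candidate $B$ only rules out that $B$; it says nothing about other branchings. The paper's argument is the converse: it shows that any popular arborescence must enter each maximal $X\in\mathcal{X}'$ exactly once, at a candidate entry-point, via an edge undominated among $\delta^-(X)$; hence it projects to an arborescence in the contracted graph $D'$. So non-existence of an arborescence in $D'$ (a single max-branching computation) certifies non-existence of a popular arborescence in $D$. Your exchange argument would have to be replaced by this projection-to-$D'$ step.
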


The proof of Theorem~\ref{thm:main} is presented in Section \ref{sec:algo}; it is based on a characterization of popular branchings that we develop in Section \ref{sec:lp}. 
In applications like liquid democracy, it is natural to assume that the preference order of every node is a \emph{weak ranking}, i.e., a ranking of its incoming edges with possible ties.
In this case, the proof of correctness of our popular branching algorithm leads to a formulation of the {\em popular branching polytope} ${\cal B}_G$, i.e., 
the convex hull of incidence vectors of popular branchings in $G$.  

\begin{theorem}
  \label{thm:polytope}
  Let $G$ be a digraph on $n$ nodes and $m$ edges where every node has a weak ranking over its incoming edges.
  The popular branching polytope of $G$ admits a formulation of size $O(2^n)$ in $\mathbb{R}^m$.
  Moreover, this polytope has $\Omega(2^n)$ facets.
\end{theorem}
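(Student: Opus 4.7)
The plan is to derive the formulation directly from the dual characterization of popular branchings developed in Section~\ref{sec:lp}. Under weak rankings, that characterization says that a branching $B$ is popular iff there exists a laminar family $\mathcal{F} \subseteq 2^V$ and dual weights $y : \mathcal{F} \to \mathbb{R}_{\ge 0}$ certifying optimality of $B$ in a suitable ``popularity'' min-cost arborescence LP. Since certifying sets are drawn from $2^V$, the natural formulation has one inequality per subset $S \subseteq V$, which is the source of the $O(2^n)$ bound.

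For the upper bound, I would combine three blocks of inequalities on $x \in \mathbb{R}^m$. The first block consists of the standard branching-polytope constraints: non-negativity, in-degree at most $1$ at every node, and $x(E[S]) \le |S|-1$ for every $S \subseteq V$. The second, popularity block adds, for each $S \subseteq V$, a single inequality whose coefficients depend on the ranks of edges in $\delta^{-}(S)$ and on $|S|$; this inequality is obtained by writing LP-duality feasibility for the witness that would certify $S$ as a member of some admissible $\mathcal{F}$. The correctness of the algorithm underlying Theorem~\ref{thm:main} then implies, via strong duality, that the $0/1$ feasible points of this system are exactly the incidence vectors of popular branchings; taking the convex hull yields $\mathcal{B}_G$. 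The total count of inequalities is $O(2^n)$.

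For the $\Omega(2^n)$ lower bound, I would construct an explicit instance. The natural candidate is built from $k = \Theta(n)$ pairwise disjoint gadgets, each admitting an independent binary choice among two ``equally popular'' local configurations; the resulting polytope is essentially a Cartesian product of small polytopes, which yields $\Omega(2^k) = \Omega(2^n)$ facets from its factors. For each subset $S \subseteq [k]$ encoding a combination of gadget choices, I would exhibit $m$ affinely independent incidence vectors of popular branchings satisfying the corresponding inequality with equality, together with a popular branching violating it strictly to rule out redundancy among the chosen facet-defining inequalities.

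The main obstacle will be the lower bound: even with a suitable gadget in hand, certifying that $\Omega(2^n)$ inequalities are facet-defining requires exhibiting, face by face, enough affinely independent popular branchings and a separating branching outside each face, which is a delicate combinatorial accounting. The upper bound, by contrast, should follow almost mechanically once the dual-certificate description of Section~\ref{sec:lp} is translated into inequalities over the edge space.
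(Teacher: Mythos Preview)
Your upper-bound plan has a genuine gap. The dual-certificate constraints $|\{X \in \mathcal{Y} : e \in \delta^-(X)\}| \le c_A(e)$ involve $c_A$, which depends on the branching $A$ itself, so translating ``$S$ could be a member of some certifying $\mathcal{F}$'' into a fixed linear inequality in the edge variables $x$ is not well-defined as stated. The paper does something quite different: it proves (Lemma~\ref{lem:pop-arb-structure}, using weak rankings in an essential way) that the algorithm's partition $\mathcal{X}'$ supports a size-$n$ certificate for \emph{every} popular arborescence, and hence every popular arborescence contains exactly $|X|-1$ edges from $S(X)$ for each $X \in \mathcal{X}'$ and no edges outside the algorithmically defined graph $D^*$. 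This identifies the popular arborescence polytope as a \emph{face} of the ordinary arborescence polytope (constraints~\eqref{constr1}--\eqref{constr2}), obtained by tightening $\sum_{e \in E[X]} x_e = |X|-1$ for each $X \in \mathcal{X}'$ and forcing $x_e = 0$ for $e \notin E_{D^*}$. The $O(2^n)$ count then comes entirely from the arborescence constraints~\eqref{constr1}; there is no separate ``popularity block'' indexed by all subsets.

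Your lower-bound plan is workable but far more involved than needed. The paper uses a single instance: the complete graph on $V$ where every node ranks all of $V$ tied as top choice and $r$ as second. Here $\mathcal{X}' = \{V\}$, so the face description is essentially the arborescence polytope of the complete graph restricted to span $V$. For each proper $X \subset V$ with $|X| \ge 2$, one exhibits a single point (a cycle on $X$ together with an arborescence on $V \setminus X$ hung from $r$) that satisfies every other constraint but violates $\sum_{e \in E[X]} x_e \le |X|-1$; this shows that no such inequality is redundant in any minimal system, yielding $2^n - n - 2$ facets. No gadget product or face-by-face affine-independence accounting is needed.
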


We also show an extended formulation of ${\cal B}_G$ in $\mathbb{R}^{m+mn}$ with $O(mn)$ constraints.
When $G$ has edge costs and node preferences are weak rankings, the min-cost popular branching problem can be efficiently solved. 
So we can efficiently solve extensions of the popular branching problem, such as finding one that minimizes the largest rank 
used or one with given forced/forbidden edges. 

\medskip

\noindent{\bf Relaxing popularity.}
Since popular branchings need not always 
exist in $G$, this motivates relaxing popularity to {\em approximate popularity}---do approximately popular branchings always exist in any
instance $G$? An approximately popular branching~$B$ may lose an election against another branching, however the extent of this defeat will
be bounded. There are two measures of unpopularity: {\em unpopularity factor} $u(\cdot)$ and {\em unpopularity margin} $\mu(\cdot)$.
These are defined as follows:
\[ u(B) = \max_{\phi(B',B)>0} \frac{\phi(B',B)}{\phi(B,B')} \ \ \ \text{and}\ \ \ \mu(B) = \max_{B'}\, \phi(B',B) - \phi(B,B').\]

A branching $B$ is popular if and only if $u(B) \le 1$ or $\mu(B) = 0$. 
We show the following results.

\begin{theorem}[$\star$\footnote{Theorems marked by an asterisk ($\star$) are proved in the \nameref{ch:appendix}.
}]
  \label{thm:unpop-margin}
  A branching with minimum unpopularity margin in a digraph where every node has a weak ranking over its incoming edges can be efficiently computed. In contrast, when node preferences are in arbitrary partial order, the minimum unpopularity margin problem is $\mathsf{NP}$-hard. 
\end{theorem}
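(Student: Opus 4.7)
My plan is to handle the two parts of Theorem~\ref{thm:unpop-margin} separately, building on the LP/dual-certificate framework from Section~\ref{sec:lp}.

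\textbf{Part 1 (polynomial algorithm, weak rankings).} For a fixed branching $B$, I would first verify, using the popularity characterization from Section~\ref{sec:lp}, that one can define a cost function $c_B$ on the edges of an extension of $G$ with a dummy root such that $\phi(B',B)-\phi(B,B')$ equals the total $c_B$-cost of $B'$ for every competing branching $B'$. Hence $\mu(B)$ is the optimum value of a maximum-weight spanning arborescence problem under $c_B$ in the extended graph, and in particular is computable in polynomial time for fixed $B$.

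The real work is optimizing jointly over $B$. Under weak rankings, $c_B(e)$ takes values in $\{-1,0,+1\}$ and depends on $B$ only through the rank that the head of $e$ assigns to its $B$-edge; thus $\mu(B)$ depends on $B$ only via its ``rank profile''. I would introduce, for every node $v$ and every admissible rank $k$ at $v$, a binary indicator $z_{v,k}$ indicating whether the $B$-edge entering $v$ has rank $k$ (with a distinguished ``root'' value), and link these indicators to the edge variables of $B$ through Edmonds' branching LP. The cost $c_B(e)$ then becomes a \emph{linear} function of the $z$ variables, and dualizing the inner max-weight arborescence LP---whose polyhedral description is TDI---yields a single polynomial-size LP in which a branching $B$ and a dual certificate for $\mu(B)$ are chosen simultaneously. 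The main obstacle is to establish integrality of this joint LP; I expect to do so either directly via the TDI structure or by rounding a fractional optimum using the laminar-family machinery from Section~\ref{sec:lp}, exploiting the ternary range of $c_B$.

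\textbf{Part 2 (NP-hardness, arbitrary partial orders).} Here I would give a polynomial-time reduction from a classical NP-hard problem, such as \textsc{Vertex Cover} or \textsc{3-SAT}. The extra expressiveness of arbitrary partial orders compared to weak rankings lies in a voter's ability to declare two incoming edges \emph{incomparable}, and this allows one to build gadgets in which local vote deficits do not cancel but aggregate additively into the global unpopularity margin. For a reduction from \textsc{Vertex Cover}, I would associate a subgraph with every vertex and every edge of the input, use incomparabilities so that each \emph{uncovered} edge contributes one unit to the margin while covered edges contribute nothing, and show that the minimum unpopularity margin is equal (or tightly related) to the size of a minimum vertex cover. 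The main obstacles will be (i) guaranteeing strict additivity of margin contributions across gadgets, which incomparabilities are designed to enforce, and (ii) coping with the acyclicity constraint of a branching, which is stronger than the disjointness constraint used in analogous popular-matching reductions; I would address (ii) by arranging the gadgets in a layered, DAG-like fashion so that no candidate branching can create spurious cyclic dependencies among them.
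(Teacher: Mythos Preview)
Your LP approach is genuinely different from the paper's, and the step you flag as ``the main obstacle'' is in fact a real gap you have not closed. The paper does \emph{not} solve a joint LP. Instead it runs the popular-branching algorithm to build the partition $\mathcal{X}'$ and the contracted graph $D'$, then simply takes a maximum-cardinality branching $\tilde{B}$ in $D'$ and patches in $r$-edges to its roots (Algorithm \textsc{MinMargin}). The analysis is a two-sided bound: the output has margin at most $\ell-1$ where $\ell$ is the number of roots of $\tilde{B}$ (an explicit dual certificate of size $n-\ell+1$ is written down), and \emph{every} arborescence has margin at least $\ell-1$ because each maximal subtree of $\tilde{B}$ not containing $r$ contains a source set $S$ with $\delta^-_{D'}(S)=\emptyset$, and one argues that no dual certificate can be ``complete'' on the expanded set $\bigcup_{X\in S}X$. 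This last step is where weak rankings are used in an essential, local way (to force certain $A$-edges to be top-choice). By contrast, your plan hinges on integrality of a bilevel LP whose outer variables sit in the branching polytope and whose inner constraints are the dual arborescence inequalities with cost depending linearly on the outer variables. Neither the TDI-ness of Edmonds' system nor the laminarity of optimal duals for a \emph{fixed} cost function implies integrality of this coupled system; if $x$ is a convex combination $\sum_i\lambda_i x_{B_i}$, then $\min_{A'} c_x(A') \ge \sum_i \lambda_i \min_{A'} c_{B_i}(A')$ with no reason for equality, so the relaxation can strictly overshoot. You would need a separate, nontrivial argument here, and none is sketched.

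\textbf{Part 2.} The paper reduces from \textsc{3D-Matching}, not \textsc{Vertex Cover} or \textsc{3-SAT}. Each element gets a two-node gadget with two pairs of parallel edges whose incomparability forces any dual certificate to cover the lower node at most once; each triple gets a six-edge hyperedge gadget. The point of the construction is that a matching of size $|X|$ exists iff there is an arborescence with margin at most $2|X|$, and the proof carefully tracks which sets can appear in a maximum dual certificate. Your sketch (``each uncovered edge contributes one unit, covered edges contribute nothing'') is at the right level of intuition but far from a construction: you have not said how a single vertex/edge gadget looks, how incomparability is used concretely, or how you rule out arborescences that cut across gadgets to cheat the additive accounting. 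The paper's argument shows that the bookkeeping is delicate (it needs a case analysis on which sets of $\mathcal{Y}$ can contain which gadget nodes), so this part of your plan is currently only a heuristic outline rather than a proof.
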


\begin{theorem}[$\star$]
  \label{thm:unpop-factor}
  Let $G$ be a digraph where every node has {a strict ranking} over its incoming edges. Then there always exists a branching $B$ in $G$ with
  $u(B) \le \lfloor\log n\rfloor$. Moreover, for every $n$, we can show an instance $G_n$ on $n$ nodes with strict 
  rankings such that $u(B) \ge \lfloor\log n\rfloor$ for every branching $B$ in $G_n$.
\end{theorem}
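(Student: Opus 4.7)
The statement splits into an existence (upper) bound and a matching lower bound; I would handle the two parts separately, borrowing ideas from the literature on approximately popular matchings.

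For the upper bound, my plan is to adapt the level-based construction of Huang and Kavitha. I would seek a level function $\ell \colon V \to \{0, 1, \ldots, \lfloor \log n \rfloor\}$ together with a branching $B$ in which each chosen in-edge $(u, v)$ \emph{dominates} every competing in-edge $(u', v)$ at $v$, in the sense that either $\ell(u) > \ell(u')$, or $\ell(u) = \ell(u')$ and $v$'s strict ranking prefers $u$ to $u'$. Strictness of the rankings ensures that, once $\ell$ is fixed, the dominating in-edge at each node is unique; existence of such a branching then reduces to checking acyclicity of this canonical subgraph. Assigning weight $2^{\ell(v)}$ to each node $v$ and comparing $B$ to any competitor $B'$, only nodes whose $B'$-in-neighbor sits at a strictly higher level than their $B$-in-neighbor can prefer $B'$ to $B$; a level-by-level telescoping then yields $\phi(B', B) \le \lfloor\log n\rfloor \cdot \phi(B, B')$, i.e.\ $u(B) \le \lfloor\log n\rfloor$.

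To produce a suitable level assignment, I would iterate: start from $\ell \equiv 0$; at each round, check whether the corresponding dominating branching has unpopularity factor within $\lfloor\log n\rfloor$; if not, identify a set $S$ of ``dissatisfied'' nodes witnessing the violation and increment $\ell$ on $S$. The crux of the analysis is an invariant of the form $|S_{i+1}| \ge 2\,|S_i|$, which caps the number of promotion rounds at $\lfloor\log n\rfloor$ and thus bounds the final levels. The main technical obstacle is that, in contrast to matchings, the canonical level-respecting subgraph need not contain a branching because of acyclicity. I would work around this by choosing $S$ as a union of atoms of the laminar family certifying the current failure (in the sense of the dual certificates developed in Section~\ref{sec:lp}), so that contracting $S$ leaves a digraph still admitting an arborescence.

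For the lower bound I would construct $G_n$ recursively. The base case is $G_4$, the $K_4$ instance from the introduction, which already realises $u(B) \ge 2 = \lfloor\log 4\rfloor$ for every branching $B$. Inductively, glue two disjoint copies of $G_n$ via a constant-size bridging gadget whose top-ranked edges cross between the copies; the gadget is designed so that any branching in $G_{2n}$ restricts to a branching of one of the copies and additionally loses one extra vote on the gadget. The inductive witness for the restriction, together with the gadget's preferred configuration, should combine into a single competing branching that beats $B$ by factor $\lfloor\log n\rfloor + 1 = \lfloor\log(2n)\rfloor$; arbitrary $n$ is then handled by padding with isolated dummy nodes. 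The main obstacle here is accountancy: I need the \emph{same} competing branching to simultaneously witness the inductive defeat inside the chosen copy and the bridging defeat, which constrains how the gadget's preferences can be set and where the two witnesses are allowed to overlap.
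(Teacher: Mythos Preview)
Your upper-bound sketch has a concrete error and a structural gap. First, the dominance direction is backwards: if $B(v)=(u,v)$ dominates every competitor and yet $v$ strictly prefers $B'(v)=(u',v)$, then necessarily $\ell(u)>\ell(u')$, so it is the \emph{$B$-parent} that sits at the higher level, not the $B'$-parent. More importantly, the weight-$2^{\ell(v)}$ telescoping does not yield $\phi(B',B)\le\lfloor\log n\rfloor\cdot\phi(B,B')$; those counts are unweighted, and nothing in your setup ties the level of a voter to the level of its parents. The paper obtains the bound differently: it runs an iterative procedure in which, at round~$i$, each still-\emph{active} node $v$ adds the edge from its most-preferred in-neighbor outside its current reachability set $X_v^{i-1}$; maximal reachability sets are then collapsed, keeping one active representative per component. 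Active nodes halve per round, so there are at most $\lfloor\log n\rfloor+1$ rounds. The factor bound then comes not from weights but from the dual LP: with the modified cost $c_A(e)\in\{0,1,t{+}1\}$, the family $\mathcal{Y}=\{X_v^{i-1}:v\text{ deactivated in round }i\}$ is a feasible dual of value~$n$ with at most $t{+}1$ layers, which by LP duality (the analogue of Lemma~\ref{lem:unpop-factor}) gives $u(A)\le t$. This also sidesteps your acyclicity obstacle, since the edges added across rounds form an arborescence by construction; your proposed fix via ``atoms of the laminar family'' is too vague to evaluate.

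For the lower bound, your recursive-gluing plan is in the right spirit but underspecified, and your base case is shaky: the four-node example from the introduction shows absence of a popular branching, not $u(B)\ge 2$ for every~$B$. The paper instead builds $G_k$ explicitly on $2^k$ nodes with a hypercube-like pairing (top choices pair $v_{2j}\leftrightarrow v_{2j+1}$, second choices pair blocks of four, etc.) and argues by \emph{contraction}: any finite-factor branching must use a top-choice edge inside every pair $\{v_{2j},v_{2j+1}\}$; contracting each pair yields $G_{k-1}$, and induction gives a witness $A'$ beating the contracted branching $B'$ with $\phi(A',B')\ge k{-}1$ and $\phi(B',A')=1$, where the lone loser is a \emph{root} of $A'$. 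This last structural fact is exactly the ``accountancy'' you flag as the obstacle: it lets one un-contract and flip the orientation inside the loser's pair to gain one more vote, yielding $\phi(A^*,B)\ge k$ with still only one dissenter. Your gadget-based gluing would need an analogous guarantee, and you have not supplied one.
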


\medskip

\noindent{\bf Hardness results for restricted popular branching problems.}
A natural optimization problem here is to compute a popular branching where no tree is large. In liquid democracy, a large-sized tree shows a high
concentration of power in the hands of a single voter, and this is harmful for social welfare~\cite{GKMP18}. 
When there is a fixed subset of root nodes in a directed graph, it was shown
in \cite{GKMP18} that it is $\mathsf{NP}$-hard to find a branching that minimizes the size of the largest tree.
To translate this result to popular branchings,
we need to allow ties, whereas Theorem~\ref{thm:hardness_load} below holds even for strict {rankings}.
Another natural restriction is to limit the out-degree of nodes; 
Theorem~\ref{thm:hardness_load} also shows that this variant is computationally hard.

\begin{reptheorem}{restate_thm:hardness_load}[$\star$]
  \label{thm:hardness_load}
  Given a digraph $G$ where each node has {a strict ranking} over its incoming edges, it is $\mathsf{NP}$-hard to decide if there exists 
\begin{itemize}\vspace{-10pt}
  \item[(a)] a popular branching in $G$ where each node has at most $9$ descendants;
  \item[(b)] a popular branching in $G$ with maximum out-degree at most $2$.
\end{itemize}    
\end{reptheorem}

\subsection{Background and Related Work}
The notion of popularity was introduced by G\"ardenfors~\cite{Gar75} in 1975 in the domain of bipartite matchings. Algorithmic questions in popular matchings have been well-studied for the last 10-15 years~\cite{AIKM07,BIM10,CHK15,CK16,FK20,FKPZ19,GMSZ19,HK11,HK17,Kav14,Kav16,KMN09,McC06}. 

Algorithms for popular matchings were first studied in the {\em one-sided} preferences model where vertices on only one side of the bipartite graph have preferences over their neighbors. Popular matchings need not always exist here and there is an efficient algorithm to solve the popular matching problem~\cite{AIKM07}. The functions unpopularity factor/margin were introduced in \cite{McC06} to measure the {\em unpopularity} of a matching; it was shown in \cite{McC06} that it is $\mathsf{NP}$-hard to compute a matching that minimizes either of these quantities.
In the domain of bipartite matchings with {\em two-sided} strict preferences, popular matchings always exist since stable matchings always exist~\cite{GS62} and every stable matching is popular~\cite{Gar75}.

 The concept of popularity has previously been applied to (undirected) spanning trees \cite{Darm13b,Darm16,DKP11a}. In contrast to our setting, voters have rankings over the entire edge set. This allows for a number of different ways to derive preferences over trees, most of which lead to hardness results. 

\medskip

\noindent{\bf Techniques.} We characterize popular branchings in terms of {\em dual certificates}.  This is analogous to characterizing popular matchings in terms of {\em witnesses} (see \cite{FK20,HK17,Kav16}). However, witnesses of popular matchings are in $\mathbb{R}^n$ and these 
are far simpler than  
dual certificates. A dual certificate is an appropriate family of {subsets of}
the node set $V$.
A certificate of size $k$ implies that the unpopularity margin of the branching is at most $n-k$.
Our algorithm constructs a partition ${\cal X}'$ of $V$ such that if $G$ admits popular branchings, then there has to be 
{\em some} popular branching in $G$ with a dual certificate of {size $n$   supported by ${\cal X}'$}.
Moreover, when nodes have weak rankings, ${\cal X}'$ {supports some dual certificate of size $n$}
to {\em every} popular branching in $G$: this leads to the formulation of ${\cal B}_G$ (see Section~\ref{sec:polytope}). 
Our positive results on low unpopularity branchings are extensions of our algorithm.

\medskip

\noindent{\bf Notation.} The preferences of node $v$ on its incoming edges are given by a strict partial order $\prec_v$, so $e \prec_v f$ means that $v$ prefers edge $f$ to edge $e$. We use $e \sim_v f$ to denote that $v$ is indifferent between $e$ and $f$, that is, neither $e \prec_v f$ nor $e \succ_v f$ holds. 
The relation $\prec_v$ is a \emph{weak ranking} if $\sim_v$ is transitive. In this case, $\sim_v$ is an equivalence relation and there is a strict order on the equivalence classes. {When each equivalence class has size~1, we call it a \emph{strict ranking}.}

\section{Dual Certificates}
\label{sec:lp}

We add a dummy node $r$ to $G = (V_G,E_G)$ as the {\em root} and make {$(r,v)$ the least preferred incoming edge of any node} 
{$v$} in $G$.  
Let $D = (V \cup \{r\}, E)$ be the resulting graph where $V = V_G$ and $E = E_G \cup\{(r,u): u \in V\}$. 
An \emph{$r$-arborescence} {in $D$} is an out-tree with root $r$ (throughout the paper, all arborescences are assumed to be rooted at $r$ and to span 
{$V$}, unless otherwise stated).

Note that there is a one-to-one correspondence between branchings in $G$ and arborescences in $D$ (simply make $r$ the parent of all roots of the branching). 
A branching is popular in $G$ if and only if the corresponding arborescence is popular among all arborescences in~$D$.\footnote{Note that, by the special structure of $D$, this is equivalent to $A$ being a popular branching in $D$.} 
We will therefore prove our results for arborescences in~$D$. The corresponding results for 
branchings in $G$ follow immediately by projection, i.e., removing node $r$ and its incident edges.

Let $A$ be an arborescence in $D$. There is a simple way to check if $A$ is popular in $D$.
Let $A(v)$ be the incoming edge of $v$ in $A$. For $e=(u,v)$ in $D$, define:
 \[c_A(e) := \begin{cases} 0, \quad \text{if } e \succ_v A(v), \ \ \text{i.e.,\ $v$\ prefers\ $e$\ to\ $A(v)$;}\\  
 1,\quad \text{if } e \sim_v A(v),  \ \ \text{i.e.,\ $v$\ is\ indifferent\ between\ $e$\ and\ $A(v)$;}\\ 
 2, \quad \text{if } e \prec_{v} A(v), \ \ \text{i.e.,\ $v$\ prefers\ $A(v)$\ to\ $e$.} \end{cases}\] 

Observe  that $c_A(A) = |V| = n$ since $c_A(e) = 1$ for every $e \in A$.
 Let $A'$ be any arborescence in $D$ and let $\Delta(A,A') = \phi(A,A') - \phi(A',A)$ be the difference in the number of votes for $A$ and the number of votes for $A'$ in the $A$-vs-$A'$ {comparison}.
 Observe that $c_A(A') = \Delta(A,A') + n$. Thus, $c_A(A') \ge n = c_A(A)$ if and only if $\Delta(A,A') \ge 0$. So we can conclude the following.

 \begin{proposition}
   \label{prop0}
    Let $A'$ be a min-cost arobrescence in $D$ with respect to~$c_A$. Then $\mu(A) = n - c_A(A')$. In particular, $A$ is popular in $D$ if and only if it is a min-cost arborescence in $D$ with edge costs given by $c_A$.
 \end{proposition}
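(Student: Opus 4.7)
The plan is to establish the identity $c_A(A') = n + \Delta(A,A')$ for every arborescence $A'$ in $D$, from which both statements follow by straightforward rearrangement.

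For the identity, I would first note that since $A'$ is an arborescence spanning $V \cup \{r\}$, every $v \in V$ has a unique incoming edge $A'(v)$, so $c_A(A') = \sum_{v \in V} c_A(A'(v))$. I would then partition $V$ into three sets according to the comparison between $A'(v)$ and $A(v)$ under $\prec_v$: the set $V_{>}$ of nodes with $A'(v) \succ_v A(v)$, the set $V_{\sim}$ with $A'(v) \sim_v A(v)$, and the set $V_{<}$ with $A'(v) \prec_v A(v)$. By the definitions of $\phi$, we have $|V_{>}| = \phi(A',A)$, $|V_{<}| = \phi(A,A')$, and $|V_{\sim}| = n - \phi(A,A') - \phi(A',A)$. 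Reading the three cases in the definition of $c_A$ class by class, the contributions are $0$, $1$, and $2$ per node respectively, so
\[
c_A(A') \;=\; 0 \cdot |V_{>}| + 1 \cdot |V_{\sim}| + 2 \cdot |V_{<}| \;=\; n + \phi(A,A') - \phi(A',A) \;=\; n + \Delta(A,A'),
\]
which also sanity-checks the earlier observation $c_A(A) = n$ by taking $A' = A$.

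Given the identity, the first claim is immediate: $\mu(A) = \max_{A''}(\phi(A'',A) - \phi(A,A'')) = \max_{A''}(-\Delta(A,A'')) = n - \min_{A''} c_A(A'')$, and the minimum is attained at a min-cost arborescence $A'$, so $\mu(A) = n - c_A(A')$. The second claim then follows: $A$ is popular in $D$ iff $\mu(A) = 0$ iff $c_A(A') \ge n = c_A(A)$ for every arborescence $A''$, which is exactly the statement that $A$ is a min-cost arborescence with respect to $c_A$.

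I do not anticipate a real obstacle here; the whole argument is a counting/bookkeeping exercise once the correct three-way partition of $V$ is identified. The only point I would state carefully is that the correspondence between branchings in $G$ and arborescences in $D$ preserves preferences and head-to-head comparisons, which is ensured by having made $(r,v)$ the least-preferred incoming edge of every $v \in V$; this justifies reducing the popularity question for branchings in $G$ to the min-cost arborescence computation in $D$.
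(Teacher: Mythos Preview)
Your proof is correct and follows essentially the same approach as the paper: establish the identity $c_A(A') = n + \Delta(A,A')$ for every arborescence $A'$, then read off $\mu(A) = n - \min_{A''} c_A(A'')$ and the popularity characterization. The only difference is that you spell out the three-way partition of $V$ explicitly, whereas the paper states the identity directly; there is a harmless typo in your final line (the displayed $A'$ should be the universally quantified $A''$).
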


Consider the following linear program \ref{LP1}, which computes a min-cost arborescence in $D$, and its dual \ref{LP2}. For any non-empty $X \subseteq V$,
 let $\delta^-(X)$ be the set of edges entering the set $X$ in the graph $D$.
\begin{linearprogram}
  {
    \label{LP1}
    \minimize \sum_{e \in E} c_A(e)\cdot x_e  
  }
  \qquad\sum_{e \in {\delta}^-(X)}x_e \ & \ge \ \ 1  \ \ \, \forall\, X \subseteq V, \ X \ne \emptyset \notag\\
  x_e  \ & \ge \ \ 0   \ \ \ \forall\, e \in E. \notag
\end{linearprogram}
\vspace{-0.6cm}
\begin{linearprogram}
  {
    \label{LP2}
    \maximize \sum_{X \subseteq V,\, X\ne\emptyset}y_X
  }
  \sum_{X:\, \delta^-(X) \ni e}y_X \ & \le \ \ c_A(e) \ \ \ \, \forall\, e\in E \notag\\
  y_X \ & \ge \ \ 0 \ \ \ \ \ \ \ \ \ \,\forall\, X \subseteq V, \ X \ne \emptyset. \notag
\end{linearprogram}

For any feasible solution $y$ to \ref{LP2}, let ${\cal F}_y := \{X \subseteq V: y_X > 0\}$ be the support of $y$. Inspired by Edmonds' branching algorithm~\cite{Edmo67a}, Fulkerson~\cite{Fulk74a} gave an algorithm to find an optimal solution $y$ to \ref{LP2} such that $y$ is integral. From an alternative proof in \cite{KoVy06a}, we obtain the following useful lemma.

\begin{lemma}
  \label{lem:dualLaminarity}
There exists an optimal, integral solution $y^*$ to \ref{LP2} such that $\mathcal{F}_{y^*}$ is laminar. 
\end{lemma}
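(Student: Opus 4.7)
The plan is to start from an arbitrary optimal integral solution $y^*$ to \ref{LP2} --- whose existence is guaranteed by the total dual integrality of the arborescence LP due to Edmonds and Fulkerson --- and then iteratively \emph{uncross} any two crossing sets in its support $\mathcal{F}_{y^*}$, until the resulting support is laminar.

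Concretely, suppose $X, Y \in \mathcal{F}_{y^*}$ cross, i.e., all three sets $X\cap Y$, $X\setminus Y$, and $Y\setminus X$ are non-empty. Obtain $y'$ from $y^*$ by decreasing $y^*_X$ and $y^*_Y$ by $1$ each and increasing $y^*_{X \cap Y}$ and $y^*_{X \cup Y}$ by $1$ each; this is well-defined and preserves integrality since $y^*_X, y^*_Y \ge 1$. The first step is to show that $y'$ is still feasible for \ref{LP2} and has the same objective value. The change in objective is $-1-1+1+1 = 0$, and feasibility reduces to verifying the edge-wise submodular inequality, for every edge $e=(u,v) \in E$,
\[
 |\delta^-(X\cap Y) \cap \{e\}| + |\delta^-(X\cup Y) \cap \{e\}| \;\le\; |\delta^-(X)\cap\{e\}| + |\delta^-(Y)\cap\{e\}|,
\]
which follows from a short case analysis on which of the four sets $X\cap Y$, $X\setminus Y$, $Y\setminus X$, $(V\cup\{r\})\setminus(X\cup Y)$ contains the endpoints $u$ and~$v$.

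To ensure that the procedure terminates, I would use the quadratic monovariant $\Phi(y) := \sum_{X} y_X \cdot |X|^2$. The identity $|X\cap Y| + |X\cup Y| = |X| + |Y|$ combined with the crossing condition (so that $|X\setminus Y|, |Y\setminus X| > 0$) yields
\[
 |X \cap Y|^2 + |X \cup Y|^2 \;>\; |X|^2 + |Y|^2,
\]
hence $\Phi$ strictly increases at every uncrossing step. Since $\Phi(y) \le n^2 \sum_X y_X$ and the total dual value is preserved throughout, $\Phi$ stays bounded, so only finitely many uncrossings can occur. The procedure halts at an optimal integral $y^*$ whose support is laminar.

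The main obstacle is the choice of a strict monovariant: the cardinality of $\mathcal{F}_{y^*}$ itself is not monotone under uncrossing, because $y^*_{X\cap Y}$ or $y^*_{X\cup Y}$ may have been zero before the step and become positive after it. This is precisely what forces the quadratic weighting $|X|^2$ rather than a simpler counting potential. Verifying the edge-wise submodular inequality is routine but must be done with care because of the asymmetric roles of the head and tail of $e$ relative to $X$ and $Y$.
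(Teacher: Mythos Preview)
Your uncrossing argument is correct and is in fact the standard textbook approach. The paper itself does not give a proof of this lemma at all: it simply cites Fulkerson~\cite{Fulk74a} for the existence of an integral optimum and the Korte--Vygen textbook~\cite{KoVy06a} for the laminar form, treating the statement as a known result. So there is nothing to compare your argument against within the paper.

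For completeness, the steps you outline all go through: the edge-wise inequality $\mathbbm{1}[e\in\delta^-(X\cap Y)]+\mathbbm{1}[e\in\delta^-(X\cup Y)]\le \mathbbm{1}[e\in\delta^-(X)]+\mathbbm{1}[e\in\delta^-(Y)]$ is the submodularity of the in-cut function and is verified by the case analysis you describe; the strict increase of $\Phi$ follows from $|X\cap Y|^2+|X\cup Y|^2-|X|^2-|Y|^2=2\,|X\setminus Y|\cdot|Y\setminus X|>0$ for crossing $X,Y$; and the bound $\Phi(y)\le n^2\sum_X y_X$ with the objective value fixed gives termination. One small point worth stating explicitly is that $X\cap Y$ and $X\cup Y$ are themselves non-empty subsets of $V$ (the former by the crossing assumption, the latter trivially), so they are valid indices for the dual variables.
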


  Let $y$ be an optimal, integral solution to \ref{LP2} such that $\mathcal{F}_{y}$ is laminar. Note that for any nonempty $X \subseteq V$, there is an $e \in A \cap \delta^{-}(X)$ and thus $y_X \leq c_A(e) = 1$. This implies that $y_X \in \{0, 1\}$ for all $X$. We conclude that $\mathcal{F}_y$ is a dual certificate for $A$ in the sense of the following definition.
  
\begin{definition}
  \label{def:dual-certificate}
A dual certificate for $A$ is a laminar family $\mathcal{Y} \subseteq 2^V$ such that $|\{X \in \mathcal{Y} : e \in \delta^-(X)\}| \leq c_A(e)$ for all $e \in E$.
\end{definition}

For the remainder of this section, let $\mathcal{Y}$ be a dual certificate maximizing $|\mathcal{Y}|$.

\begin{lemma}\label{lem:one-to-one-sets-and-edges}\label{lem:dual-characterization}
  Arborescence $A$ has unpopularity margin $\mu(A) = n - |\mathcal{Y}|$. Furthermore, the following three statements are equivalent:\vspace{-0.2cm}
  \begin{itemize}
    \item[(1)] $A$ is popular.
    \item[(2)] $|\mathcal{Y}| = n$.
    \item[(3)] $|A \cap \delta^{-}(X)| = 1$ for all $X \in \mathcal{Y}$ and $|\{X \in \mathcal{Y} : e \in \delta^-(X)\}| = 1$ for all $e \in A$.
  \end{itemize}
\end{lemma}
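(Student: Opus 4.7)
The plan is to establish the identity $\mu(A) = n - |\cert|$ first via LP duality, and then derive the three equivalences using a clean double-counting argument.

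For the identity, I would argue as follows. By Proposition~\ref{prop0}, $\mu(A) = n - c_A(A')$, where $A'$ is a min-cost arborescence in $D$ under $c_A$. Strong LP duality between \ref{LP1} and \ref{LP2} gives $c_A(A') = \max \sum_X y_X$ over feasible $y$. Lemma~\ref{lem:dualLaminarity} says this maximum is attained by an integral $y^*$ whose support is laminar; moreover, since $A \cap \delta^-(X) \neq \emptyset$ for every nonempty $X \subseteq V$ (the root $r$ lies outside $X$), we have $y^*_X \in \{0,1\}$ for each $X$. Thus $\mathcal{F}_{y^*}$ is a dual certificate of size equal to the LP optimum, and conversely every dual certificate $\cert$ yields the integral feasible dual solution $\mathbbm{1}_{\cert}$ of value $|\cert|$. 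Taking $\cert$ of maximum size therefore gives $|\cert| = c_A(A')$, whence $\mu(A) = n - |\cert|$.

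The equivalence (1)$\Leftrightarrow$(2) is then immediate from the definition: $A$ is popular iff $\mu(A) = 0$ iff $|\cert| = n$. For (2)$\Rightarrow$(3), I would use the following double counting. Set
\[
S \;=\; \sum_{X \in \cert} |A \cap \delta^-(X)| \;=\; \sum_{e \in A}\, \bigl|\{X \in \cert : e \in \delta^-(X)\}\bigr|.
\]
For every nonempty $X \subseteq V$, we have $|A \cap \delta^-(X)| \geq 1$ because $A$ is an $r$-arborescence spanning $V \cup \{r\}$; hence $S \geq |\cert| = n$. On the other hand, dual feasibility gives $|\{X \in \cert : e \in \delta^-(X)\}| \leq c_A(e) = 1$ for every $e \in A$, so $S \leq |A| = n$. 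Equality forces each term on both sides to equal $1$, which is exactly (3).

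For (3)$\Rightarrow$(2) I would repeat the double count: if both counts per $X \in \cert$ and per $e \in A$ equal $1$, then $S = |\cert|$ and $S = |A| = n$, so $|\cert| = n$. The main subtlety, and the only place one must be careful, is to observe that the dual feasibility bound $c_A(e) = 1$ for $e \in A$ together with the arborescence property $|A \cap \delta^-(X)| \geq 1$ squeezes $S$ from both sides; everything else is bookkeeping once the LP-duality identity $|\cert| = c_A(A')$ has been set up.
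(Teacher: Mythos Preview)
Your proof is correct and follows essentially the same route as the paper, which also rests on LP duality between \ref{LP1} and \ref{LP2}. The only cosmetic difference is that the paper identifies (3) directly with complementary slackness between the characteristic vectors of $A$ and $\cert$ (and hence with optimality of $A$), whereas you unpack that complementary slackness via the double-count $S = \sum_{X \in \cert} |A \cap \delta^-(X)| = \sum_{e \in A} |\{X \in \cert : e \in \delta^-(X)\}|$; these are the same argument written at two levels of abstraction.
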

\begin{proof}
  Let $x$ and $y$ be the characteristic vectors of $A$ and $\mathcal{Y}$, respectively.
  By Proposition~\ref{prop0}, $A$ is popular if and only if $x$ is an optimal solution to \ref{LP1}. This is equivalent to (2) because $c_A(A) = n$. Note also that (3) is equivalent to $x$ and $y$ fulfilling complementary slackness, which is equivalent to $x$ being optimal. \qed
\end{proof}

Lemma~\ref{lem:one-to-one-sets-and-edges} establishes the following one-to-one correspondence between the nodes in $V$ and the sets of $\mathcal{Y}$: For every set $X \in \mathcal{Y}$, there is a unique edge $(u, v) \in A$ that enters $X$. We call $v$ the \emph{entry-point} for $X$. Conversely, we let $Y_v$ be the unique set in $\mathcal{Y}$ for which $v$ is the entry-point; thus $\mathcal{Y} = \{Y_v : v \in V\}$.

\begin{observation}
  \label{obs:rootedTwoLayered}
  For every $v \in V$ we have $|\{X \in \mathcal{Y} : v \in X\}| \leq 2$.
\end{observation}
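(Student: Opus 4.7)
The plan is to derive the bound directly from the dual-feasibility condition (Definition~\ref{def:dual-certificate}) applied to the edges emanating from the dummy root $r$. The key fact to exploit is that by construction of $D$, the edge $(r,v)$ is the least preferred incoming edge of $v$. Consequently, either $(r,v) = A(v)$, in which case $(r,v) \sim_v A(v)$ and $c_A((r,v)) = 1$, or else $A(v) \succ_v (r,v)$ and $c_A((r,v)) = 2$. In both cases $c_A((r,v)) \le 2$.

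Next, I would identify which sets in $\mathcal{Y}$ are counted by the constraint for the edge $(r,v)$. Because $\mathcal{Y} \subseteq 2^V$ and $r \notin V$, we have $r \notin X$ for every $X \in \mathcal{Y}$, so $(r,v) \in \delta^-(X)$ holds if and only if $v \in X$. Substituting this observation into the defining inequality for a dual certificate at $e = (r,v)$ gives
\[ |\{X \in \mathcal{Y} : v \in X\}| \;=\; |\{X \in \mathcal{Y} : (r,v) \in \delta^-(X)\}| \;\le\; c_A((r,v)) \;\le\; 2, \]
which is exactly the claim.

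I do not foresee any real obstacle: the observation is essentially a one-line consequence of the dual constraint together with the deliberate choice to make root-edges the worst incoming edges at each node. It is worth noting that neither the laminarity of $\mathcal{Y}$ nor its maximality enter the argument, so the bound of $2$ is a structural property enjoyed by every dual certificate.
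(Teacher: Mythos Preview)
Your argument is correct and is essentially the same as the paper's: both use the edge $(r,v)$, note that $c_A((r,v)) \le 2$ because $(r,v)$ is the worst incoming edge at~$v$, and observe that the sets of $\mathcal{Y}$ entered by $(r,v)$ are exactly those containing~$v$. The paper states this in a single line, while you spell out the case distinction and the role of $r \notin V$, but there is no substantive difference.
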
 

Observation~\ref{obs:rootedTwoLayered} is implied by the fact that $e = (r,v)$ is an edge in $D$ for every $v \in V$ and $c_A(e) \leq 2$. Laminarity of $\mathcal{Y}$ yields the following corollary:

\begin{corollary}\label{cor:singletons}
If $|\mathcal{Y}| = n$, then $w \in Y_v \setminus \{v\}$ for some $v \in V$ implies $Y_w = \{w\}$.
\end{corollary}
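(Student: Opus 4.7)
The plan is to argue by contradiction, combining Observation~\ref{obs:rootedTwoLayered} with a structural property of the sets $Y_u$ that follows from the unique-entry condition of Lemma~\ref{lem:dual-characterization}~(3).

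The first step will be to establish that each set $Y_u \in \mathcal{Y}$ consists only of $u$ together with descendants of $u$ in the arborescence $A$. Indeed, the unique edge of $A$ entering $Y_u$ is $A(u) = (p(u), u)$, where $p(u)$ denotes the parent of $u$ in $A$. For any $x \in Y_u \setminus \{u\}$, the edge $A(x)$ cannot also enter $Y_u$, so its tail $p(x)$ must lie in $Y_u$. Iterating $x \mapsto p(x)$ traces a path in $A$ that stays inside $Y_u$ and, since $r \notin Y_u$, must terminate at $u$. Hence every element of $Y_u$ is a descendant of $u$ in $A$.

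Now suppose for contradiction that $w \in Y_v \setminus \{v\}$ but $Y_w \ne \{w\}$, and pick some $z \in Y_w \setminus \{w\}$. By the previous step, $w$ is a strict descendant of $v$ in $A$ and $z$ is a strict descendant of $w$. In particular, $v$ is a strict ancestor of $w$, so $v$ cannot be a descendant of $w$; the structural property therefore gives $v \notin Y_w$. Since $Y_v$ and $Y_w$ both contain $w$, laminarity of $\mathcal{Y}$ forces either $Y_v \subseteq Y_w$ or $Y_w \subseteq Y_v$; the first option is ruled out by $v \in Y_v$ but $v \notin Y_w$, hence $Y_w \subseteq Y_v$.

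Finally, $z \in Y_w \subseteq Y_v$, so $z$ belongs to the three sets $Y_z, Y_w, Y_v$ of $\mathcal{Y}$. These are pairwise distinct because their entry-points $z, w, v$ are pairwise distinct (and the correspondence $u \mapsto Y_u$ is a bijection when $|\mathcal{Y}| = n$). This contradicts Observation~\ref{obs:rootedTwoLayered}, which permits at most two sets of $\mathcal{Y}$ to contain any given node. The main subtlety is the descendant structure of $Y_u$, not stated explicitly in the excerpt but falling out cleanly from the unique-entry-edge property; once in hand, the corollary reduces to a single application of laminarity followed by a counting step against the observation.
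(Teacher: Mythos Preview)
Your proof is correct and follows essentially the same route the paper intends: the corollary is stated there as an immediate consequence of Observation~\ref{obs:rootedTwoLayered} together with laminarity, and your argument spells out precisely how those two ingredients combine (via the unique-entry property of Lemma~\ref{lem:dual-characterization}(3)) to force a third set through $z$ and hence a contradiction. The intermediate fact that every element of $Y_u$ is a descendant of $u$ in $A$ is a clean way to organize the unique-entry reasoning needed to rule out $Y_v \subseteq Y_w$ and to ensure $z \neq v$; one could also argue these two points directly from Lemma~\ref{lem:dual-characterization}(3) without the full descendant claim, but your packaging is perfectly sound and arguably clearer.
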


The following definition of the set of \emph{safe} edges $S(X)$ with respect to a subset $X \subseteq V$ will be useful. $S(X)$ is the set of edges $(u,v)$ in $E[X]: = E \cap (X \times X)$ such that properties~\ref{property-1} and \ref{property-2} hold:
\begin{enumerate}
\item\label{property-1} $(u,v)$ is {\em undominated} in $E[X]$, i.e., $(u,v) \not\prec_v (u',v) \ \forall\,(u',v) \in E[X]$.
\item\label{property-2} $(u,v)$ {\em dominates} $(w,v)$ {$\forall\, w \notin X$}, i.e., $(u,v) \succ_v (w,v) \ \forall\, (w,v) \in \delta^-(X)$.
\end{enumerate}

The interpretation of $S(X)$ is the following. 
Suppose that the dual certificate~$\mathcal{Y}$ proves the popularity of $A$. Let $X \in \mathcal{Y}$ with $|X| > 1$.
By Corollary~\ref{cor:singletons}, for every node $v \in X$ other than the entry-point in $X$ we have $\{v\} = Y_v \in \mathcal{Y}$. So edges in $\delta^-(v)$ within $E[X]$ enter exactly one dual set, i.e., $\{v\}$, while any edge $(w,v)$ where $w \notin X$ enters two of the dual sets: $X$ and $\{v\}$. 
This induces exactly the constraints (\ref{property-1}) and (\ref{property-2}) given above on $(u,v) \in A$ (see \ref{LP2}), showing that the edge $A(v)$ must be safe, 
as stated in Observation~\ref{obs:safe-edges}.

\begin{observation}
\label{obs:safe-edges}
  If $A$ is popular, then $A \cap E[X] \subseteq S(X)$ for all $X \in \mathcal{Y}$.
\end{observation}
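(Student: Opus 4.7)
The plan is to prove the observation by directly exploiting the LP dual constraints that $\mathcal{Y}$ must satisfy, combined with the structural consequences of popularity already spelled out in Lemma~\ref{lem:dual-characterization} and Corollary~\ref{cor:singletons}. So fix a popular arborescence $A$ with a dual certificate $\mathcal{Y}$ of size $n$, pick any $X \in \mathcal{Y}$, and let $(u,v) \in A \cap E[X]$. The first preparatory step is to locate $\{v\}$ in $\mathcal{Y}$: since $u \in X$, the edge $(u,v)$ does not enter $X$, so $v$ is not the entry-point of $X$; hence by Corollary~\ref{cor:singletons}, $Y_v = \{v\} \in \mathcal{Y}$, so $y_{\{v\}} = 1$ and also $y_X = 1$.

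Next I verify property~\ref{property-1}. Suppose, for contradiction, that some $(u',v) \in E[X]$ dominates $(u,v)$, that is $(u,v) \prec_v (u',v)$. Then $c_A((u',v)) = 0$. Since $u' \in X \setminus \{v\}$, the edge $(u',v)$ lies in $\delta^-(\{v\})$, so the dual feasibility constraint for $(u',v)$ gives
\[
1 = y_{\{v\}} \;\le\; \sum_{X' \in \mathcal{Y}\,:\,(u',v)\in\delta^-(X')} y_{X'} \;\le\; c_A((u',v)) \;=\; 0,
\]
a contradiction. For property~\ref{property-2}, suppose some $(w,v) \in \delta^-(X)$ is not dominated by $(u,v)$, i.e.\ $(w,v) \succeq_v (u,v)$. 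Then $c_A((w,v)) \le 1$. However, since $w \notin X$ and $v \in X$, the edge $(w,v)$ enters both $X$ and $\{v\}$, both of which are in $\mathcal{Y}$ with $y$-value $1$. Dual feasibility then yields
\[
2 \;=\; y_X + y_{\{v\}} \;\le\; \sum_{X' \in \mathcal{Y}\,:\,(w,v)\in\delta^-(X')} y_{X'} \;\le\; c_A((w,v)) \;\le\; 1,
\]
again a contradiction.

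The only real subtlety, and the step I would be most careful about, is making sure $\{v\}$ actually lies in $\mathcal{Y}$; this requires that $v$ is not itself the entry-point of $X$, which is why the argument hinges on the observation that $u \in X$ forces the entry-point of $X$ to be some node other than $v$. Once $\{v\} \in \mathcal{Y}$ is established, everything else is a direct dual-feasibility count. Since both properties~\ref{property-1} and~\ref{property-2} hold, $(u,v) \in S(X)$, which proves the observation. \qed
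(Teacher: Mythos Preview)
Your proof is correct and follows essentially the same argument as the paper's informal justification given in the paragraph immediately preceding the observation: you identify that $v$ cannot be the entry-point of $X$ (since $A(v)$ has tail in $X$), invoke Corollary~\ref{cor:singletons} to get $\{v\}\in\mathcal{Y}$, and then read off properties~\ref{property-1} and~\ref{property-2} from the dual feasibility constraints. One minor remark: the notation $(w,v)\succeq_v(u,v)$ is not defined in the paper and in a genuine partial order ``not dominated by'' is not the same as ``at least as good as''; however your conclusion $c_A((w,v))\le 1$ is correct regardless, so this does not affect the argument.
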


\section{Popular Branching Algorithm}
\label{sec:algo}

We are now ready to present our algorithm for deciding if $D$ admits a popular arborescence or not. 
For each $v \in V$, step~\ref{step-1} builds the largest set $X_v$ such that $v$ can reach all nodes in $X_v$ using edges in $S(X_v)$. The collection ${\cal X} = \{X_v: v \in V\}$ will be laminar (see Lemma~\ref{lem:u-in-W_v}).
To construct the sets $X_v$ we make use of the {\em monotonicity} of $S$: $X \subseteq X'$ implies $S(X) \subseteq S(X')$. 

In steps~\ref{step-2}-\ref{step-3}, the algorithm contracts each maximal set in ${\cal X}$ into a single node and builds a graph $D'$ on these
nodes and $r$. For each set $X$ here that has been contracted into a node, edges incident to $X$ in $D'$ are undominated edges
from other nodes in $D'$ to the {\em candidate entry-points} of $X$, which are nodes $v \in X$ such that $X = X_v$.
Our proof of correctness (see Theorems~\ref{thm:alg-rootedpop-safe}-\ref{thm:no-return-no-poparb}) shows that $D$ admits a
popular arborescence if and only if $D'$ admits an arborescence.

Our algorithm for computing a popular arborescence in $D$ is given below.

\begin{enumerate}
\item\label{step-1} For each $v \in V$ do:
\begin{itemize}
  \item let $X_v^0 = V$ and $i=0$;
  \item while $v$ does not reach all nodes in the graph $D^i_v = (X^i_v, S(X^i_v))$ do:
    \smallskip
    \begin{enumerate}
        \item[] $X^{i+1}_v = $ the set of nodes reachable from $v$ in $D^i_v$; let $i = i+1$.
    \end{enumerate}  
  \item let $X_v = X^i_v$.
\end{itemize}    

\smallskip

\item\label{step-2} Let $\mathcal{X} = \{X_v : v \in V\}$, $\mathcal{X}' = \{X_v\in \mathcal{X} : X_v \text{ is} \subseteq\text{-maximal in }\mathcal{X}\}$, $E' = \emptyset$.

  \smallskip

\item\label{step-3} For every edge $e = (u,v)$ in $D$ such that $X_v \in \mathcal{X}'$ and $u \notin X_v$ do:
  \begin{itemize}
 \item {if $e$ is undominated (i.e.,  $e \not\prec_v e'$) among all edges $e' \in \delta^-(X_v)$,} then
  	\vspace{-6pt}
    \[ \vspace{-6pt}
    f(e) = \begin{cases} (U,X_v) \quad \text{where } u \in U \ \text{and}\ U \in {\mathcal{X}'},\\
      (r,X_v) \quad \text{ if } u = r;\end{cases}\] 
   \item let $E' := E' \cup \{f(e)\}$.
  \end{itemize}

\smallskip
  
\item\label{step-4} If $D' = (\mathcal{X}'\cup\{r\}, E')$ contains an arborescence $\tilde{A}$, then
  \begin{itemize}
    \item let $A' = \{e: f(e) \in \tilde{A}\}$; 
    \item let $R = \{v \in V: \, |X_v| \ge 2$ and $v$ has an incoming edge in $A'\}$;
    \item for each $v \in R$: let $A_v$ be an arborescence in $(X_v,S(X_v))$;
    \item return $A^* = A' \cup_{v \in R}A_v$.
  \end{itemize}

\smallskip
  
\item\label{step-5} Else return ``{\em No popular arborescence in $D$}''.  
\end{enumerate}  

\paragraph{\bf Correctness of the above algorithm.}
We will first show the easy direction, that is, if the algorithm returns an
edge set $A^*$, then $A^*$ is a popular arborescence in $D$. The following lemma will be key to this. 
Note that the set $X_u$, for each $u \in V$, is defined in step~\ref{step-1}.
Lemmas marked by ($\circ$) are proved in the \nameref{ch:appendix}.

\begin{replemma}{restate_lem:u-in-W_v}[$\circ$]
\label{lem:u-in-W_v}
 $\mathcal{X} = \{X_v: v \in V\}$ is laminar. If $u \in X_{v}$, then $X_u \subseteq X_v$.
\end{replemma}

\begin{reptheorem}{restate_thm:alg-rootedpop-safe}[$\star$]
\label{thm:alg-rootedpop-safe}
If the above algorithm returns an edge set $A^*$, then $A^*$ is a popular arborescence in $D$.  
\end{reptheorem}
\begin{proof}[Sketch]
  It is straightforward to verify that $A^*$ is an arborescence in $D$. 
  To prove the popularity of $A^*$, we construct a dual certificate $\mathcal{Y}$ of 
  size $n$ for $A^*$, 
  by setting $\mathcal{Y}:= \{X_v: v \in R\} \cup \{\{v\}: v \in V \setminus R\}$.
  
  Note that $|\mathcal{Y}| = |R| + |V\setminus R| = n$. 
  It remains to show that any edge $(w,v) \in E$ satisfies the constraints in \ref{LP2}; 
  let $(u,v)$ be the incoming edge of $v$ in~$A^*$.

Suppose $v \in R$; then $(u,v) \in A'$ and $u \notin X_v$. 
{Consider any edge} $(w,v)$: this enters one set of $\mathcal{Y}$ iff $w \not\in X_v$ and no set iff $w \in X_v$. 
Hence, it suffices to show that $c_{A^*}((w,v)) \in \{1,2\}$ for $w \notin X_v$. 
By construction of ${E}'$, $(w,v)$ does not dominate $(u,v)$ and therefore $c_{A^*}((w,v)) \in \{1,2\}$. 

Suppose $v \in V \setminus R$. Let $s$ be $v$'s local root, i.e., the unique $s \in R$ with $v \in X_s$. Then $(u,v) \in A_s \subseteq S(X_s)$ by construction of $A_s$.
Any edge $(w,v) \in \delta^-(v)$ enters at most two sets of $\mathcal{Y}$:
$\{v\}$ and possibly $X_s$.
If, on the one hand, $(w, v) \in \delta^-(X_s)$, then $(u,v) \in S(X_s)$ dominates $(w,v)$ by property~\ref{property-2} of $S(X_s)$, and hence $c_{A^*}((w,v)) = 2$. 
If, on the other hand, $w \in X_s$, then 
$(u,v) \in S(X_s)$ is not dominated by $(w,v)$ by property~\ref{property-1} of $S(X_s)$, 
and hence $c_{A^*}((w,v)) \geq 1$. 
Thus, any edge satisfies the constraints in \ref{LP2}, proving the theorem.
\qed
\end{proof}

\begin{theorem}
\label{thm:no-return-no-poparb}
If $D$ admits a popular arborescence, then our algorithm finds one.
\end{theorem}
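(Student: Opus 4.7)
The plan is to show: if $D$ has a popular arborescence $A$ with dual certificate $\mathcal{Y}$ of size $n$ (which exists by Lemma~\ref{lem:dual-characterization}), then $D'$ admits an $r$-arborescence; Theorem~\ref{thm:alg-rootedpop-safe} then finishes the proof. The approach is to project $A$ onto $D'$: for each $X \in \mathcal{X}'$, use an $A$-edge entering $X$ from outside to select an incoming edge of $E'$ at $X$, and observe that the resulting edge set forms an arborescence because it inherits acyclicity from $A$'s descent order.

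The central structural claim I would prove is: for every $X \in \mathcal{X}'$ and every $(u, v) \in A \cap \delta^{-}(X)$ with $u \notin X$, the head $v$ is a candidate entry-point of $X$, i.e.\ $X_v = X$. Given this claim, the projection goes through smoothly. Since $Y_v \subseteq X_v = X$, every $(u', v) \in \delta^{-}(X)$ with $u' \notin X$ also enters $Y_v$, so dual feasibility gives $c_A((u', v)) \geq 1$. Consequently $(u, v)$ cannot be dominated in $\delta^-(X)$, and $f((u,v)) \in E'$ is a valid edge entering $X$ in $D'$. Selecting one such edge per $X \in \mathcal{X}'$ yields a subgraph of $D'$ in which every $X$ has an incoming edge, and this subgraph is acyclic because the $A$-parent relation on $V \cup \{r\}$ projects to an acyclic relation on $\mathcal{X}' \cup \{r\}$; so it is an $r$-arborescence.

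I would prove the claim by contradiction. Suppose $X_v \subsetneq X$; since $X \in \mathcal{X}'$ there exists $v^* \in X$ with $X_{v^*} = X$, and by construction of $X_{v^*}$ there is a path $v^* = p_0, p_1, \dots, p_m = v$ in $(X, S(X))$. I would show by reverse induction on $i$ that $p_i \in Y_v$. Then $v^* \in Y_v \subseteq X_v$, and Lemma~\ref{lem:u-in-W_v} yields $X = X_{v^*} \subseteq X_v \subsetneq X$, a contradiction. The base case ($p_{m-1} \in Y_v$) uses property~\ref{property-2} of $S(X)$: since $u \notin X$, the edge $(p_{m-1}, v) \in S(X)$ dominates $A(v) = (u,v)$, giving $c_A((p_{m-1}, v)) = 0$; dual feasibility then forces $(p_{m-1}, v)$ to enter no set of $\mathcal{Y}$, so $p_{m-1} \in Y_v$. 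For the inductive step with $p_{i+1} \in Y_v \setminus \{v\}$, Corollary~\ref{cor:singletons} gives $Y_{p_{i+1}} = \{p_{i+1}\}$, and Observation~\ref{obs:rootedTwoLayered} implies that $Y_v$ and $\{p_{i+1}\}$ are the only sets of $\mathcal{Y}$ containing $p_{i+1}$. Writing $A(p_{i+1}) = (q, p_{i+1})$ and using $c_A(A(p_{i+1})) = 1$, dual feasibility forces $q \in Y_v \subseteq X$, so $A(p_{i+1}) \in E[X]$. If $p_i \notin Y_v$, then $(p_i, p_{i+1})$ enters both $Y_v$ and $\{p_{i+1}\}$, giving $c_A((p_i, p_{i+1})) = 2$ and hence $(p_i, p_{i+1}) \prec_{p_{i+1}} A(p_{i+1})$, contradicting property~\ref{property-1} of $S(X)$ applied to $(p_i, p_{i+1}) \in S(X)$ and $A(p_{i+1}) \in E[X]$. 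Thus $p_i \in Y_v$.

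The main obstacle is the inductive step, which must interlock the safe-edge structure of $S(X)$ with the tight dual-feasibility counting in $\mathcal{Y}$: it works only because Observation~\ref{obs:rootedTwoLayered} forces $p_{i+1}$ to lie in exactly two sets of $\mathcal{Y}$, which in turn pins the tail of $A(p_{i+1})$ inside $Y_v$. A delicate point is that the base case relies on property~\ref{property-2} of $S(X)$ (comparing against the outside-entering edge $A(v)$), while the step relies on property~\ref{property-1} (comparing against the inside-$X$ edge $A(p_{i+1})$).
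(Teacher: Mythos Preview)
Your proof is correct and follows essentially the same route as the paper: project $A$ onto $D'$ by showing that each edge of $A$ entering some $X\in\mathcal{X}'$ does so at a candidate entry-point and is undominated there, then invoke Theorem~\ref{thm:alg-rootedpop-safe}. The paper packages your ``central structural claim'' as a separate statement (Lemma~\ref{lem:poparb-enters-W-once}), proved by the same $S(X)$-path argument you give (the paper phrases it as a single crossing-edge contradiction rather than a full reverse induction, but the content is identical); note also that the containment $Y_v\subseteq X_v$ you invoke twice without justification is precisely Lemma~\ref{lem:witnessInWv}, so you should cite it.
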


Before we prove Theorem~\ref{thm:no-return-no-poparb}, we need Lemma~\ref{lem:witnessInWv} and Lemma~\ref{lem:poparb-enters-W-once}. 

\begin{replemma}{restate_lem:witnessInWv}[$\circ$]
  \label{lem:witnessInWv}
Let $A$ be a popular arborescence and $\mathcal{Y}$ a dual certificate {for $A$ of size $n$}.
Then $Y_v \subseteq X_v$ for any $v \in V$.
\end{replemma}

\begin{replemma}{restate_lem:poparb-enters-W-once}[$\circ$]
\label{lem:poparb-enters-W-once}
Let $A$ be a popular arborescence in $D$ and let $X \in \mathcal{X}'$. Then $A$ enters $X$ exactly once, and it enters $X$ at some node $v$ such that $X = X_v$.
\end{replemma}

\noindent{\bf Proof of Theorem~\ref{thm:no-return-no-poparb}.}
Assume there exists a popular arborescence $A$ in $D$; then there exists a dual certificate $\mathcal{Y}$ 
{of size $n$} for $A$.
We will show there exists an arborescence in $D'$. 
By Lemma~\ref{lem:poparb-enters-W-once}, for each $X \in {\mathcal{X}}'$ there exists exactly one edge $e_X = (u,v)$ of
$A$ that enters $X$, and moreover, $v$ is a candidate entry-point of~$X$. 

We claim that $(u,v)$ is not dominated by any
$(u',v) \in \delta^-(X)$. 
Recall that by Lemma~\ref{lem:witnessInWv}, we know $Y_v \subseteq X_v = X$. If some $(u',v) \in \delta^-(X)$ dominates $(u,v) \in A$, its cost must be
$c_{A}((u',v)) =0$.
However, $(u',v)$ clearly enters $Y_v \subseteq X$, and thus violates \ref{LP2}, contradicting our assumption
that $\mathcal{Y}$ is a dual solution.
Hence, $e_X$ is undominated among the edges of $\delta^-(X) \cap \delta^-(v)$ and therefore our algorithm creates an edge
$f(e_X)$ in $E'$ pointing to $X$. Using the fact that $A$ is an arborescence in $D$, it is straightforward to verify that
the edges $\{f(e_X): X \in {\cal X}'\}$ form an arborescence $\tilde{A}$ in $D'$. 
Thus our algorithm returns an edge set $A^*$,
which by Theorem~\ref{thm:alg-rootedpop-safe} must be a popular arborescence in $D$. \qed

\medskip

It is easy to see that step~\ref{step-1} (the bottleneck step) takes $O(mn)$ time per node. 
Hence the running time of the algorithm is $O(mn^2)$; thus Theorem~\ref{thm:main} follows.

\subsection{A simple extension of our algorithm: Algorithm {\sc MinMargin}}
\label{sec:relax-popularity}

Our algorithm can be extended to compute an arborescence with minimum {\em unpopularity margin} when nodes have weak rankings. 
When $D'$ does not admit an arborescence, algorithm \minmarg{} below computes a max-size branching $\tilde{B}$ in $D'$ and adds edges from the root $r$ 
to all root nodes in $\tilde{B}$ so as to make an arborescence of this branching in $D'$. This arborescence in $D'$ is then transformed into an arborescence in $D$ 
exactly as in our earlier algorithm.

\noindent
\begin{enumerate}
    \item Let $D'$ be the graph constructed in our algorithm for Theorem~\ref{thm:main}, and let $\tilde{B}$ be a branching of maximum cardinality in $D'$. 
    \item Let $B'=\{e \mid f(e) \in \tilde{B}\}$, $R_1=\{v \in V \mid \delta^-(v) \cap B' \neq \emptyset \}$, $R_2 = \emptyset$.
    \item For each $X \in \mathcal{X}'$ which is a root in the branching $\tilde{B}$, select one arbitrary $v \in V$ such that $X_v = X$, add $v$ to $R_2$ and $(r,v)$ to $B'$.
    \item For each $v \in R_1 \cup R_2$: let $A_v$ be an arborescence in $(X_v,S(X_v))$. 
    \item Return $A^*:= B' \bigcup_{v \in R_1 \cup R_2} A_v$. 
\end{enumerate}

\begin{reptheorem}{restate_thm:unpopmargin}[$\star$]
  \label{thm:unpopmargin}
  When nodes have weak rankings, Algorithm \minmarg{} returns an arborescence with minimum unpopularity margin in $D = (V \cup\{r\},E)$. 
\end{reptheorem}

\section{The Popular Arborescence Polytope of $D$}
\label{sec:polytope}
We now describe the popular arborescence polytope of $D = (V\cup\{r\},E)$ in $\mathbb{R}^m$. 
Throughout this section we assume that every node has a weak ranking over its incoming edges.
The arborescence polytope ${\cal A}$ of $D$ is described below~\cite{KoVy06a}.

\begin{eqnarray}
\sum_{e \in E[X]}x_e \ & \le & \ |X|-1 \ \ \ \ \ \ \ \ \ \forall\, X \subseteq V,\  |X| \ge 2. \label{constr1}\\
\sum_{e \in \delta^-(v)}x_e \ & = &\ 1 \ \ \ \forall\, v \in V \ \ \ \ \ \text{and} \ \ \ \ \ x_e \ \ge\ 0 \ \ \forall\, e \in E. \label{constr2}
\end{eqnarray}

We will define a subgraph $D^* = (V \cup \{r\}, E_{D^*})$ of $D$: this is essentially the {\em expanded} version of the graph $D'$ from our algorithm.
The edge set of $D^*$ is:

\smallskip
\hspace*{.2in}$E_{D^*} = \bigcup_{X \in {\cal X}'} S(X) \cup \{(u,v)\in E: X_v \in \mathcal{X}'$, $u \notin X_v$, and $(u,v)$ is \\ 
\hspace*{1.85in}undominated in $\delta^-(X_v)\}$.

\smallskip

Thus each  
{set} $X \in {\cal X}'$, {which is a node} in $D'$, is replaced in $D^*$ by 
the nodes in $X$ and with edges in $S(X)$ between nodes in $X$.
We also replace edges in $D'$  
between sets in ${\cal X}'$ by the original edges in $E$.

\begin{lemma}
\label{lem:pop-arb-structure}
If every node has a weak ranking over its incoming edges, then
every popular arborescence in $D$ is an arborescence in $D^*$ that includes exactly $|X|-1$ edges from $S(X)$ for each $X \in {\cal X}'$.
\end{lemma}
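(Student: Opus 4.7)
The plan is to show, for every popular arborescence $A$ in $D$, that $A \subseteq E_{D^*}$ and that $|A \cap S(X)| = |X|-1$ for every $X \in \mathcal{X}'$. My argument proceeds in two stages: (i) exploit the structure of $A$ across the partition $\mathcal{X}'$ using results already established, and (ii) apply a replacement-and-popularity argument, together with transitivity of indifference under weak rankings, to show that all interior edges of $A$ are safe.

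For stage (i): since $\mathcal{X}$ is laminar and $v \in X_v$ for every $v \in V$, the maximal sets $\mathcal{X}'$ partition $V$. By Lemma~\ref{lem:poparb-enters-W-once}, $A$ enters each $X \in \mathcal{X}'$ through exactly one edge, whose head $v$ satisfies $X_v = X$, so $A \cap E[X]$ is a spanning arborescence of $X$ of size $|X|-1$. The unique crossing edge $(u,v)$ with $u \notin X$ must be undominated in $\delta^-(X)$: reusing the argument from the proof of Theorem~\ref{thm:no-return-no-poparb}, any dominating edge $(u',v) \in \delta^-(X)$ would have $c_A$-cost zero while entering $Y_v \subseteq X_v = X$ by Lemma~\ref{lem:witnessInWv}, violating dual feasibility. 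Thus $(u,v) \in E_{D^*}$.

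For stage (ii), fix $X \in \mathcal{X}'$ with $|X| \geq 2$ and entry-point $v$. By construction of $X_v$ in step~\ref{step-1} of the algorithm, the graph $(X, S(X))$ admits a spanning arborescence $A_v$ rooted at $v$. I would consider $A' := (A \setminus E[X]) \cup A_v$, which agrees with $A$ on the crossing edge and on all edges outside $E[X]$ but uses $A_v$ inside $X$; one checks directly that $A'$ is an arborescence in $D$. In the $A$-vs-$A'$ comparison, only nodes $z \in X \setminus \{v\}$ can vote. For such a $z$, property~\ref{property-1} of $S(X)$ says $A_v(z)$ is undominated in $\delta^-(z) \cap E[X]$, and under weak rankings this sharpens to $A_v(z) \succeq_z A(z)$. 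Popularity of $A$ then forces $\phi(A', A) = 0$, so $A(z) \sim_z A_v(z)$ for every $z \in X \setminus \{v\}$.

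The final step is to invoke transitivity of $\sim_z$, valid under weak rankings, to transfer both defining conditions of $S(X)$ from $A_v(z)$ to $A(z)$: if some $(u',z) \in E[X]$ satisfied $(u',z) \succ_z A(z)$, then $(u',z) \succ_z A(z) \sim_z A_v(z)$ would give $(u',z) \succ_z A_v(z)$, contradicting property~\ref{property-1} for $A_v(z)$; similarly, $A_v(z) \succ_z (u',z)$ for any $(u',z) \in \delta^-(X)$ (property~\ref{property-2}) transfers via $A(z) \sim_z A_v(z)$ to $A(z) \succ_z (u',z)$. Hence $A(z) \in S(X)$ for every interior $z$, which yields $A \cap E[X] \subseteq S(X)$ and completes the proof. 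The main obstacle I expect is precisely this last transitivity step, which is where the weak-ranking hypothesis is essential: under an arbitrary partial order, $A(z) \sim_z A_v(z)$ does not carry strict preferences from $A_v(z)$ to $A(z)$, and the lemma can fail.
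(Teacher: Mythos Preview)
Your proof is correct and takes a genuinely different route from the paper. Stage~(i) matches the paper: both invoke Lemma~\ref{lem:poparb-enters-W-once} to get a single entry edge into each $X \in \mathcal{X}'$ and reuse the argument of Theorem~\ref{thm:no-return-no-poparb} to place that edge in $E_{D^*}$. The divergence is in stage~(ii). The paper argues on the dual side: assuming $A(u) \in E[X] \setminus S(X)$ for some interior $u$, it uses $Y_u \subseteq X_u \subseteq X$ (Lemma~\ref{lem:witnessInWv}) together with the fact that the entry-point of $X$ lies outside $Y_u$ to find a safe edge $e \in S(X) \cap \delta^-(Y_u)$; a short case analysis on the head of $e$ then produces a violated constraint of \ref{LP2}. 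Your argument is primal: you build an explicit competitor $A'$ by swapping the interior of $X$ for a safe arborescence $A_v$, use popularity to force $A(z) \sim_z A_v(z)$ for every interior $z$, and then transfer the two defining properties of $S(X)$ from $A_v(z)$ to $A(z)$ via transitivity of $\sim_z$. Both proofs invoke the weak-ranking hypothesis at the analogous pinch point---the paper when it deduces $e \succ_u A(u)$ from $e \in S(X)$ and $A(u) \notin S(X)$, you when you carry $\succ_z$ across $\sim_z$. Your approach is more elementary in that it avoids unpacking the structure of the dual certificate $\mathcal{Y}$ inside $X$, at the cost of needing to verify that the hybrid $A'$ is an arborescence (which is straightforward but not entirely trivial, since $A$ may have edges leaving $X$); the paper's approach fits more seamlessly with the dual-certificate machinery used throughout.
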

\begin{proof}
Let $A$ be a popular arborescence in $D$ and let $X \in \mathcal{X}'$. By Lemma~\ref{lem:poparb-enters-W-once} we know $|A \cap \delta^-(X)| = 1$; moreover, the proof of Theorem~\ref{thm:no-return-no-poparb} tells us that the unique edge in $A \cap \delta^-(X)$ is contained in $D^*$. 
So $A$ contains $|X|-1$ edges from $E[X]$ for each $X \in {\cal X}'$.  It remains to show that these $|X|-1$ edges are in $S(X)$.

Let $u \in X$. Suppose $A(u) \in E[X]\setminus S(X)$. This means that either (i)~$A(u)$ is dominated by some edge in $E[X] \cup \delta^-(X)$ or (ii)~$u$ is indifferent between $A(u)$ and some edge in $\delta^-(X)$. Let ${\cal Y}$ be a dual certificate of $A$.
We know that $Y_u \subseteq X_u \subseteq X$ (by Lemma~\ref{lem:witnessInWv}). Since the entry point of $A$ into $X$ is not in $Y_u$,
there is an edge $e \in S(X)\cap\delta^-(Y_u)$. 

Let $e$ enter $w \in Y_u$. Since $e \in S(X)$, we have $e \succ_w A(w)$ or $e \sim_w A(w)$, hence $c_A(e) \in \{0,1\}$. If $w \ne u$, then
$e$ enters two sets $Y_u$ and $\{w\}$---thus the constraint in \ref{LP2} corresponding to edge $e$ is violated. 
If $w = u$ then $e \succ_u A(u)$ (since $A(u) \in E[X]\setminus S(X)$, $e \in S(X)$, and $u$ has a weak ranking over its incoming edges): 
so $c_A(e) = 0$. Since $e$ enters one set $Y_u$, the constraint corresponding to $e$ in \ref{LP2} is again violated. So $A(u) \in S(X)$, i.e.,
$A \cap E[X] \subseteq S(X)$. \qed
\end{proof}

Hence, every popular arborescence in $D$ satisfies constraints~\eqref{constr1}-\eqref{constr2} along with constraints~\eqref{popular-constr}
given below, where $E_{D^*}$ is the edge set of $D^*$.
\begin{equation}
\label{popular-constr}
\sum_{e \in E[X]}x_e \ = \ |X|-1 \ \ \forall\, X \in {\cal X}',\ |X|\ge 2\ \ \ \ \text{and} \ \ \ \ \ x_e \ =\ 0 \ \ \forall\, e \in E \setminus E_{D^*}
\end{equation}

\vspace{-4pt}
\noindent
Note that constraints~\eqref{popular-constr} define a face ${\cal F}$ of the arborescence polytope ${\cal A}$ of $D$. Thus every popular arborescence 
in $D$ belongs to face ${\cal F}$.

Consider a vertex in face ${\cal F}$: this is an arborescence $A$ in $D$ of the form $A' \cup_{X\in{\cal X}'}A_X$ where
(i)~$A_X$ is an arborescence in $(X,S(X))$ whose root is an entry-point of $X$ and
(ii)~$A' = \{e_X: X \in {\cal X}'\}$ where $e_X$ is an edge in $D^*$ entering the root of $A_X$.
Theorem~\ref{thm:alg-rootedpop-safe} proved that such an arborescence $A$ is popular in $D$. 
Thus we can conclude Theorem~\ref{lem:polytope} which proves the upper bound in Theorem~\ref{thm:polytope}. 
The lower bound in Theorem~\ref{thm:polytope} is given in the \nameref{ch:appendix}. 
\begin{theorem}
\label{lem:polytope}
If every node has a weak ranking over its incoming edges, then
face ${\cal F}$ (defined by constraints~\eqref{constr1}-\eqref{popular-constr}) is the popular arborescence polytope of $D$.
\end{theorem}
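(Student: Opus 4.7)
The plan is to prove the set equality by establishing two inclusions. Since $\mathcal{F}$ is a face of the arborescence polytope $\mathcal{A}$, and $\mathcal{A}$ has only integral vertices, $\mathcal{F}$ is itself integral; hence it suffices to show that the integer points of $\mathcal{F}$ are exactly the incidence vectors of popular arborescences in $D$.

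For the first inclusion (every popular arborescence lies in $\mathcal{F}$), Lemma~\ref{lem:pop-arb-structure} already gives that any popular arborescence $A$ satisfies $|A \cap E[X]| = |X|-1$ and $A \cap E[X] \subseteq S(X)$ for each $X \in \mathcal{X}'$, so the equalities in~\eqref{popular-constr} hold. The vanishing condition $x_e = 0$ for $e \notin E_{D^*}$ then follows because the argument in the proof of Theorem~\ref{thm:no-return-no-poparb} shows that the unique edge of $A$ entering each $X \in \mathcal{X}'$ points to a candidate entry-point and is undominated in $\delta^-(X)$, hence belongs to $E_{D^*}$ by definition.

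For the reverse inclusion, let $A$ be an arborescence of $D$ corresponding to an integer vertex of $\mathcal{F}$. Then $A \subseteq E_{D^*}$ and $|A \cap E[X]| = |X|-1$ for every $X \in \mathcal{X}'$. Because $E_{D^*} \cap E[X] = S(X)$ by construction, these $|X|-1$ edges lie in $S(X)$, and since $A$ is an arborescence, they form a spanning arborescence $A_X$ of $(X, S(X))$ rooted at some $v_X \in X$. Writing $A' = A \setminus \bigcup_{X \in \mathcal{X}'} A_X$, we obtain the decomposition $A = A' \cup \bigcup_{X \in \mathcal{X}'} A_X$ described immediately before the theorem. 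I would then verify that $v_X$ is a candidate entry-point of $X$: since $v_X$ reaches every node of $X$ through $S(X)$-edges, the definition of $X_{v_X}$ gives $X \subseteq X_{v_X}$, and the laminarity of $\mathcal{X}$ from Lemma~\ref{lem:u-in-W_v} together with the $\subseteq$-maximality of $X$ in $\mathcal{X}'$ forces $X_{v_X} = X$. The unique edge $e_X \in A'$ entering $X$ therefore points to $v_X$ and is undominated in $\delta^-(X)$ (as required by membership in $E_{D^*}$), so $A$ has precisely the structure to which Theorem~\ref{thm:alg-rootedpop-safe} applies, and is thus popular.

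The main obstacle I anticipate is the verification that $v_X$ is necessarily a candidate entry-point, where both the laminarity of $\mathcal{X}$ and the $\subseteq$-maximality of $X$ in $\mathcal{X}'$ must be combined to rule out $X \subsetneq X_{v_X}$. The remaining steps—integrality of $\mathcal{F}$ as a face of $\mathcal{A}$, the decomposition of the arborescence into $A'$ and the $A_X$, and the final appeal to Theorem~\ref{thm:alg-rootedpop-safe}—are comparatively routine once the definition of $E_{D^*}$ is unfolded.
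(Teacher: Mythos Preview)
Your proposal is correct and follows essentially the same approach as the paper: show that $\mathcal{F}$ is an integral face of the arborescence polytope, that every popular arborescence lies in $\mathcal{F}$ via Lemma~\ref{lem:pop-arb-structure}, and that every integral point of $\mathcal{F}$ decomposes as $A' \cup_{X \in \mathcal{X}'} A_X$ with $A_X$ an arborescence in $(X,S(X))$ rooted at a candidate entry-point, so that Theorem~\ref{thm:alg-rootedpop-safe} applies. In fact you supply more detail than the paper on the one nontrivial step, namely why the root $v_X$ of $A_X$ satisfies $X_{v_X} = X$; the paper simply asserts this, whereas you correctly argue $X \subseteq X_{v_X}$ from reachability via $S(X)$ and monotonicity of $S$, and then $X_{v_X} = X$ from the $\subseteq$-maximality of $X$ in $\mathcal{X}$ (laminarity is not actually needed here).
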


A compact extended formulation of this polytope and all missing proofs are in the \nameref{ch:appendix}. 
We also discuss popular {\em mixed branchings} (probability distributions over branchings) there. 

\medskip

\noindent{\bf Acknowledgments.} Thanks to Markus Brill for helpful discussions on liquid democracy, and to Nika Salia for our  conversations in G\'ardony. Tam\'{a}s Kir\'{a}ly is supported by NKFIH grant no.~K120254 and by the HAS, grant no.~KEP-6/2017, Ildik\'{o} Schlotter was supported by NFKIH grants no. K 128611 and no. K 124171, and Ulrike Schmidt-Kraepelin by the Deutsche For\-schungs\-gemeinschaft (DFG) under grant BR~4744/2-1.

\bibliographystyle{splncs04}
\bibliography{abb,poparb_literature}
\newpage

\chapter*{Appendix}\label{ch:appendix}
\appendix
\section{Missing proofs from Sections~\ref{sec:algo} and \ref{sec:polytope}}
\repeatlemma{restate_lem:u-in-W_v}

\begin{proof}
We first show that $X_u^i \subseteq X_{v}^i$ for any $i$, where we set $X_v^{i}:=X_v$ whenever $X_v^{i}$ is not defined by the above algorithm.
The claim clearly holds for $i=0$. Let $i$ be the smallest index such that $x \in X_u^i \setminus X_{v}^i$ for some node $x$; 
we must have $x \in X_{u}^{i-1} \cap X_{v}^{i-1}$. By the definition of $X_{u}^{i}$, $x$ is reachable from $u$ in $S(X_{u}^{i-1})$. 
Note that $X_{u}^{i-1} \subseteq X_{v}^{i-1}$ implies $S(X_{u}^{i-1}) \subseteq S(X_{v}^{i-1})$, 
which yields that $x$ is reachable from $u$ in $S(X_{v}^{i-1})$ as well. Moreover, $u$ is reachable from $v$ in $S(X_{v}^{i-1}) \supseteq S(X_v)$ because $u \in  X_{v}$ and $S(\cdot)$ is monotone. Hence it follows that $x$ is reachable from $v$ in $S(X_{v}^{i-1})$ via $u$, contradicting the assumption that $x \notin X_{v}^{i}$. This proves the second statement of
the lemma. 

Now we will show the laminarity of $\mathcal{X}$. For contradiction, assume there exist $s,t \in V$ such that $X_s$ and $X_t$ {\em cross},
i.e., their intersection is non-empty, and neither contains the other.
Then, by the second statement of the lemma, neither $s \in X_{t}$ nor $t \in X_{s}$ can hold.
So we have that $s \notin X_t$ and $t \notin X_{s}$. 

Let $(x,y)$ be an edge in $S(X_{t})$ such that $y \in X_{s} \cap X_{t}$ but $x \in X_{t} \setminus X_{s}$; since each node in $X_{t}$ is reachable from $t$ in $S(X_{t})$, such an edge exists. Since $y \in X_{s} \setminus \{s\}$, there also exists an edge $(u,y)$ in $S(X_{s})$. As $x \notin X_{s}$ but $(u,y) \in S(X_{s})$, we know that $(u,y) \succ_y (x,y)$ which contradicts $(x,y) \in S(X_{t})$.~\qed
\end{proof}

\repeattheorem{restate_thm:alg-rootedpop-safe}
\begin{proof}
  We start by showing that $A^*$ is an arborescence in $D$. Then, we construct a dual certificate of value $n$ for $A^*$. This will prove the popularity of $A^*$.
  
  The laminarity of $\mathcal{X}$ implies that sets in ${\mathcal{X}}'$ are pairwise disjoint. Moreover, by construction, each node in $V$ is included in at least one set in $\mathcal{X}$, namely $v \in X_v$ for each $v \in V$. Hence, ${\mathcal{X}}'$ forms a partition of $V$. By construction of $A^*$, each node $v \in V$ can be reached in $A^*$ from the \emph{local root} $s \in R$ for which $v \in X_s$. It remains to show that the root $r$ reaches all local roots $s \in R$ in $A^*$. This can be shown by induction over the distance of $s$ from the root $r$ within the arborescence $A'$. It remains to show that $|A^*| =  n$. Let $k := |{\mathcal{X}}'|$. Since ${\mathcal{X}}'$ is a partition of $V$, we get that
  \begin{align*} \textstyle \vert A^* \vert = \vert A'\vert + \sum_{v \in R} \vert A_v \vert = k + \sum_{v \in R} ( \vert X_v \vert - 1) = k + n - k=n. \end{align*}

We turn to the second part of the proof and show a dual certificate of size $n$ for $A^*$. We claim that $\mathcal{Y}:= \{X_v: v \in R\} \cup \{\{v\}: v \in V \setminus R\}$ is such a dual solution. Note that $|\mathcal{Y}| = |R| + |V\setminus R| = n$. We now show that for all $v \in V$, the incoming edges satisfy the constraints in \ref{LP2}. 

Suppose $v \in R$. An edge $(w,v) \in E$ enters one set of $\mathcal{Y}$ iff $w \not\in X_v$ and no set iff $w \in X_v$. Hence, it suffices to show that $c_{A^*}((w,v)) \in \{1,2\}$ for $w \notin X_v$. Let $(u,v)$ be the incoming edge of $v$ in arborescence $A^*$; note that $(u,v) \in A'$ and $u \notin X_v$. By construction of ${E}'$, $(w,v)$ does not dominate $(u,v)$ and therefore $c_{A^*}((w,v)) \in \{1,2\}$. The same argument
works for $v \in V\setminus R$ and $\{v\} \in {\cal X}'$.

Suppose $v \in V \setminus R$. Let $s$ be $v$'s local root, i.e., the unique $s \in R$ with $v \in X_s$. Then $(u,v) \in A_s \subseteq S(X_s)$ by construction of $A_s$.
Any edge $(w,v) \in \delta^-(v)$ enters at most two sets of $\mathcal{Y}$:
$\{v\}$ and possibly $X_s$.
If, on the one hand, $(w, v) \in \delta^-(X_s)$, then $(u,v) \in S(X_s)$ dominates $(w,v)$ by property~\ref{property-2} of $S(X_s)$, and hence $c_{A^*}((w,v)) = 2$. 
If, on the other hand, $w \in X_s$, then 
$(u,v) \in S(X_s)$ is not dominated by $(w,v)$ by property~\ref{property-1} of $S(X_s)$,
and hence $c_{A^*}((w,v)) \geq 1$.
This completes the proof that $\mathcal{Y}$ is a dual certificate of size~$n$ for $A^*$, thus $A^*$ is popular. \qed
\end{proof}

\repeatlemma{restate_lem:witnessInWv}
\begin{proof}
If $Y_v=\{v\}$, then $Y_v \subseteq X_v$ is trivial, so suppose that $Y_v$ is not a singleton.
We know from Corollary~\ref{cor:singletons} that $Y_w$ is a singleton set for each $w \in Y_v \setminus \{v\}$. Moreover, for every $(u,w) \in A$ with $w \in Y_v \setminus \{v\}$ it holds that $u \in Y_v$ since this edge would otherwise enter two sets; however, $c_A((u,w))=1$ as $(u,w) \in A$. 

Assume for contradiction that $Y_v \setminus X_v \neq \emptyset$. 
Let $i$ be the last iteration when $Y_v \subseteq X_v^i$. Then there exists a subset of $Y_v$ which is not reachable by edges in $S(X_v^i)$, i.e., $\delta^{-}(Y_v \setminus X_v^{i+1}) \cap S(X_v^{i}) = \emptyset$. On the other hand, we know that the arborescence $A$ can only enter nodes in $Y_v\setminus \{v\}$ by edges from $E[Y_v]$, and therefore, it needs to contain at least one edge from $\delta^{-}(Y_v \setminus X_v^{i+1}) \cap \delta^{+}(X_v^{(i+1)})$. Let $(u,w)$ be this edge (see Fig.~\ref{fig:lemma5}). By construction of $X_v^i$ and $X_v^{i+1}$, we know that one of the following cases has to be true. 

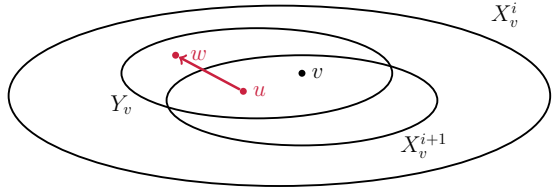
\begin{figure}[h]
\centering
\scalebox{.6}{
\begin{tikzpicture}
\draw[very thick] (.5,-.5) ellipse (6cm and 2cm);
\draw[very thick] (0,0) ellipse (3cm and 1cm);
\draw[very thick] (1,-.6) ellipse (3cm and 1cm);

\node(wi) at (5.5,1.3){\Large $X_v^{i}$};
\node(wi1) at (3.7,-1.6){\Large $X_v^{i+1}$};
\node(yv) at (-3,-.7){\Large $Y_v$};

\node[circle, black, fill, inner sep=.4ex](v) at (1,0){}; \node[right= 3pt] at (v){\Large $v$};

\node[circle, inner sep = .4ex, red,fill](u) at (-.3,-.4){}; \node[right= 3pt,red] at (u){\Large $u$};
\node[circle, inner sep = .4ex, red,fill](w) at (-1.8,.4){}; \node[right= 6pt,red] at (w){\Large $w$};

\draw[->, red, ultra thick] (u) -- (w);
\end{tikzpicture}
}
\caption{Illustration of the situation in the proof of Lemma \ref{lem:witnessInWv}.}\label{fig:lemma5}
\end{figure}

\textbf{Case 1}. There exists an edge $(x,w) \in E[X_v^{i}]$ which dominates $(u,w)$. Note that we do not know if $(x,w) \in E[Y_v]$ or not. However, $c_A((x,w)) = 0$ in either case, but by Corollary~\ref{cor:singletons}, $(x,w)$ enters at least one set in $\mathcal{Y}$, namely $\{w\}$. This is a violation of \ref{LP2} and it contradicts $\mathcal{Y}$ being a dual certificate for $A$.

\smallskip

\textbf{Case 2}. There exists an edge $(x,w) \in \delta^{-}(X_v^{i})$ which is not dominated by $(u,w)$. Note that $c_A((x,w)) \in \{0,1\}$, but $(x,w) \in \delta^{-}(Y_v)$ and so the edge~$(x,w)$ enters two dual sets: $Y_v$ and $\{w\}$. This contradicts $\mathcal{Y}$ being a dual solution. \qed
\end{proof}

\repeatlemma{restate_lem:poparb-enters-W-once}
\begin{proof}
Let $X \in \cal{X}'$ and let $A$ be a popular arborescence which enters $X$ at some node $v \in V$ through an edge $(u,v) \in A \cap \delta^-(X)$. 
Moreover, let $\mathcal{Y}$ be a dual certificate for $A$, and let $Y_v$ be the set whose entry-point is $v$.

Let $\entry(X) := \{w \in V: X_w = X\}$. We first show that $\entry(X) \subseteq Y_v$. Assume for contradiction that there exists 
$w \in \entry(X)$ such that $w \notin Y_v$. Since $X_w = X$ we know that there exists a $w$-$v$ path $P$ in $(X,S(X))$. Hence, there exists
an edge $e \in P$ which enters $Y_v$. If the head of $e$ is $v$, we know that $e$ dominates $(u,v) \in \delta^-(X)$ and hence $c_A(e)=0$, a contradiction 
to the feasibility of $\mathcal{Y}$. If $v$ is not the head of $e$, then $e$ not only enters $Y_v$, but also the singleton set 
corresponding to its head. However, $c_A(e) \leq 1$ since $e$ is an undominated edge by $e \in S(X)$, a contradiction to the feasibility of $\mathcal{Y}$. 

To prove that $v \in \entry(X)$, let us choose some $s \in \entry(X)$. By the previous paragraph and Lemma~\ref{lem:witnessInWv}, we get 
$s \in Y_v \subseteq X_v$, from which Lemma~\ref{lem:u-in-W_v} implies $X_s \subseteq X_v$. Because $s \in \entry(X)$, we have $X = X_s \subseteq X_v$. Because $X \in \mathcal{X}'$ is inclusionwise maximal in $\mathcal{X}$, we get $X = X_v$, proving $v \in \entry(X)$.

It remains to prove that $A$ enters $X$ only once. 
Suppose for contradiction that there exist two nodes $v,v' \in \entry(X)$ such that $(u,v), (u',v') \in A \cap \delta^-(X)$. 
By $\emptyset \neq \entry(X) \subseteq Y_v \cap Y_{v'}$ and the laminarity of $\mathcal{Y}$, 
we can assume w.l.o.g. that $Y_v \subseteq Y_{v'}$. Moreover, since $Y_{v'} \subseteq X$, the arborescence edge $(u,v)$ enters both
$Y_v$ and $Y_{v'}$, a contradiction to the feasibility of the dual solution $\mathcal{Y}$. \qed
\end{proof}

\paragraph{\bf  Lower bound for the popular arborescence polytope of $D$.}
Let $D = (V \cup\{r\},E)$ be the complete graph where every node $v \in V$ regards all other nodes $u \in V$ as top-choice in-neighbors and
$r$ as its second-choice in-neighbor.
Here ${\cal X}' = \{V\}$ and 
$D^*$ is the complete bidirected graph on $V$ along with edges $(r,v)$ for all $v \in V$. 
We claim that
in any minimal system contained in \eqref{constr1}-\eqref{popular-constr}, the constraint $\sum_{e \in E[X]}x_e  \le |X|-1$ for
every $X \subset V$ with $|X| \ge 2$ has to be present. This is because a cycle on the nodes in $X$ along with any {rooted}
arborescence $A$ on 
$V \setminus X$ plus $(r,v)$, where $v$ is the root of $A$, satisfies all the remaining constraints. Thus any minimal 
system of inequalities from \eqref{constr1}-\eqref{popular-constr} has to contain $2^n - n - 2$ inequalities from \eqref{constr1}:
one for every $X \subset V$ with $|X| \ge 2$.
Since inequalities in a minimal system are in one-to-one correspondence with the facets of the polyhedron they 
describe~\cite[Theorem~3.30]{CCZ}, the lower bound given in Theorem~\ref{thm:polytope} follows.

\paragraph{\bf A compact extended formulation.}
We now describe a compact extended formulation of the popular arborescence polytope of $D$ when node preferences are weak rankings. 
We know from Lemma~\ref{lem:pop-arb-structure} that every popular arborescence in $D$ is an arborescence in $D^*$ that includes exactly $|X|-1$ edges 
from $S(X)$ for each $X \in {\cal X}'$. Conversely, any such arborescence in $D^*$ is a popular arborescence in $D$ 
(by Theorem~\ref{thm:alg-rootedpop-safe}).

Thus the popular arborescence polytope of $D$ is the face of the arborescence polytope of $D^*$ that corresponds to the constraints
$\sum_{e \in E_{D^*}[X]}x_e  = |X|-1$ for all $X \in {\cal X}'$. 
Let ${\cal A}_{D^*}$ be the arborescence polytope of $D^* = (V \cup \{r\}, E_{D^*})$.
We will now use a compact extended formulation of ${\cal A}_{D^*}$.

Recall that $|V| = n$. Let ${\cal P}_{D^*}$ be the polytope defined by constraints~\eqref{new-constr1}-\eqref{new-constr4} on
variables $x_e, f_e^v$ for $e \in E_{D^*}$ and $v \in V$. 
It is known~\cite{CCZ} that ${\cal P}_{D^*}$ is a compact extended formulation of the arborescence polytope ${\cal A}_{D^*}$.  
Note that ${\cal A}_{D^*}$ is the projection of ${\cal P}_{D^*}$ on to $x$-space. 
\begin{eqnarray}
x_e \ & \ge & \ f^v_e \ \ \ge \ \ 0 \ \ \ \forall v \in V \ \text{and}\ e \in E_{D^*}\label{new-constr1}\\
\sum_{e\in\delta^+(r)}f^v_e \ & = & \ 1 \ \ \ \ \ \ \ \ \ \ \ \ \ \ \forall v \in V \label{new-constr2}\\
\sum_{e\in\delta^+(u)}f^v_e \ - \ \sum_{e\in\delta^-(u)}f^v_e \ & = & \ 0 \ \ \ \ \ \ \ \ \ \ \ \ \ \ \forall u, v \in V, \ u \ne v \label{new-constr3}\\
\sum_{e\in E_{D^*}}x_e \ & = & \ n. \label{new-constr4}
\end{eqnarray}

For any $X \subseteq V$ with $|X| \ge 2$, the constraint $\sum_{e\in E_{D^*}[X]}x_e \le |X|-1$ is a valid inequality 
for ${\cal A}_{D^*}$ and also for ${\cal P}_{D^*}$. Thus the intersection of ${\cal A}_{D^*}$ along with the tight constraints 
$\sum_{e\in E_{D^*}[X]}x_e = |X|-1$ for all $X \in {\cal X}'$ is a face of ${\cal A}_{D^*}$. Call this face ${\cal F}_{D^*}$---this is the
popular arborescence polytope of $D$. 

\smallskip

Consider the face of ${\cal P}_{D^*}$ that is its intersection with $\sum_{e\in E_{D^*}[X]}x_e = |X|-1$ for all $X \in {\cal X}'$. 
This face of ${\cal P}_{D^*}$ is an extension ${\cal F}_{D^*}$. The total number of constraints used to describe this face of ${\cal P}_{D^*}$ is $O(mn)$.

\section{Branchings with minimum unpopularity margin}
 
Recall the definition of the {\em unpopularity margin} for branchings from Section~\ref{sec:intro}. Again,
instead of studying minimum unpopularity margin branchings within the digraph $G$, we look at $r$-arborescences of minimum unpopularity margin within the digraph $D$. 
It is easy to see that the unpopularity margin of a branching in $G$ is the same as the unpopularity margin of the corresponding arborescence in $D$.\footnote{Note that, due to the special structure of $D$, there always exists an arborescence $A'$ such that $A' \in \argmax_{B \in \mathcal{B(D)}}\phi(B,A)-\phi(A,B)$, where $\mathcal{B}(D)$ is the set of branchings in $D$.} Thus we are looking for an arborescence of minimum unpopularity margin in $D$.

Furthermore, recall that by Proposition~\ref{prop0} and Lemma~\ref{lem:dual-characterization} in Section~\ref{sec:lp}, the unpopularity margin $\mu(A)$ of an arborescence $A$ fulfills
\[\mu(A) = n - c_A(A') = n - |\mathcal{Y}|,\]
 where $A'$ is a min-cost arborescence in $D$ with respect to $c_A$ and $\mathcal{Y}$ is a dual certificate of maximum cardinality for $A$.

\repeattheorem{restate_thm:unpopmargin}
Algorithm \minmarg{} is described in Section \ref {sec:relax-popularity}. To prove Theorem~\ref{thm:unpopmargin}, we first show in Lemma~\ref{lem:l-roots-means-l-1-margin} that the size of the maximum cardinality branching $\tilde{B}$ 
bounds the unpopularity margin of the arborescence $A^*$ returned by Algorithm \minmarg{}. 
Then we provide Lemma~\ref{lem:witnessInWv_weakorder} and Observation~\ref{obs:source-of-subtrees} that will be helpful in showing the optimality of Algorithm \minmarg{}.

\begin{lemma}
\label{lem:l-roots-means-l-1-margin}
If the number of roots in the branching $\tilde{B}$ is $\ell$, then aborescence $A^*$ returned by Algorithm \minmarg{} has unpopularity margin at most $\ell-1$. 
\end{lemma}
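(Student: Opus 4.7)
The plan is to mimic the popular-case proof of Theorem~\ref{thm:alg-rootedpop-safe} and exhibit a laminar dual certificate $\mathcal{Y}$ for $A^*$ of size at least $n-\ell+1$; since Lemma~\ref{lem:dual-characterization} identifies $\mu(A^*)$ with $n - |\mathcal{Y}_{\max}|$, this immediately yields $\mu(A^*) \le \ell-1$. Observe first that $r$ has no incoming edges in $D'$ and is therefore always among the $\ell$ roots of $\tilde{B}$, whence $|R_2| = \ell - 1$.

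The concrete candidate is
\[\mathcal{Y} := \{X_v : v \in R_1\} \cup \{\{v\} : v \in V \setminus (R_1 \cup R_2)\},\]
i.e., the family from the proof of Theorem~\ref{thm:alg-rootedpop-safe}, except that for each $v \in R_2$ we drop \emph{both} $X_v$ and the singleton $\{v\}$. Since $\mathcal{X}'$ partitions $V$ and each $X \in \mathcal{X}'$ contains a unique representative in $R_1 \cup R_2$ (the unique head in $X$ of an edge in $B'$ if $X$ is a non-root of $\tilde{B}$, or the chosen rep in $R_2$ otherwise), the sets in $\mathcal{Y}$ are pairwise distinct, yielding $|\mathcal{Y}| = |R_1| + (n - |R_1| - |R_2|) = n - (\ell-1)$. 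Laminarity of $\mathcal{Y}$ follows from laminarity of $\mathcal{X}$.

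The main obstacle is to verify dual feasibility $|\{X \in \mathcal{Y} : e \in \delta^-(X)\}| \le c_{A^*}(e)$ for every edge $e = (u,v) \in E$, which I would handle by case analysis on $v$. When $v \in R_1$, only $X_v$ from $\mathcal{Y}$ contains $v$, and since $A^*(v)$ was placed into $E'$ as an undominated edge in $\delta^-(X_v)$, every $(u,v)$ with $u \notin X_v$ satisfies $c_{A^*}((u,v)) \ge 1$. When $v \in V \setminus (R_1 \cup R_2)$ has local root $s$, the edge $A^*(v) \in A_s \subseteq S(X_s)$, and properties~\ref{property-1} and~\ref{property-2} of $S(X_s)$ give $c_{A^*}((u,v)) \ge 1$ for $u \in X_s$ and $c_{A^*}((u,v)) = 2$ for $u \notin X_s$; these bounds cover $\{v\}$ alone (when $s \in R_2$, so $X_s \notin \mathcal{Y}$) or $\{v\}$ together with $X_s$ (when $s \in R_1$), exactly as in the popular case. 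The trickiest case is $v \in R_2$: here $A^*(v) = (r,v)$ is $v$'s worst incoming edge, so $c_{A^*}((u,v)) = 0$ for every $u \ne r$; but precisely because $X_v$ is a root of $\tilde{B}$, no node inside $X_v$ is the head of an edge in $B'$, hence no $w \in X_v$ lies in $R_1$, so $X_v \notin \mathcal{Y}$, and $\{v\} \notin \mathcal{Y}$ by construction. This makes the constraints for all edges into $v$ vacuous, confirming that $\mathcal{Y}$ is a valid dual certificate of size $n - \ell + 1$ and completing the proof.
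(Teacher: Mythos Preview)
Your proof is correct and follows essentially the same approach as the paper: both construct the dual certificate $\mathcal{Y} = \{X_v : v \in R_1\} \cup \{\{v\} : v \in V \setminus (R_1 \cup R_2)\}$ and verify feasibility by the same case analysis on whether $v$ lies in $R_1$, $R_2$, or neither. Your treatment is in fact slightly cleaner in two places: you correctly account for $r$ always being a root of $\tilde{B}$ when computing $|R_2| = \ell - 1$, and you spell out explicitly why no set of $\mathcal{Y}$ contains a node $v \in R_2$ (namely, $X_v$ being a root of $\tilde{B}$ forces $X_v \cap R_1 = \emptyset$).
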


\begin{proof}
We show that $A^*$ has unpopularity margin at most $\ell-1$  by constructing a dual certificate of size $n-\ell+1$; by Lemma~\ref{lem:dual-characterization} this is sufficient. Define $\mathcal{Y}:= \{X_v \mid v \in R_1\} \cup \{\{v\} \mid v \in V \setminus \{R_1 \cup R_2\}\}$. It is easy to see that $\mathcal{Y}$ contains $n-(|R_2|-1)$ = $n-\ell+1$ sets ($r \in R_2$ but $r\notin V$). It remains to show that any edge $(w,v)$ satisfies the constraints in \ref{LP2}; the argumentation is analogous to the one in the proof for Theorem~\ref{thm:alg-rootedpop-safe}.

First, if $v \in R_2$, then $v$ is not contained in any set of $\cert$, so no constraints are violated by $(w,v)$.
Otherwise, let $(u,v)$ be an incoming edge of $v$ in $A^*$.

Suppose $v \in R_1$; then $(u,v) \in B'$ and $u \notin X_v$. 
Edge $(w,v)$ enters one set of $\mathcal{Y}$ iff $w \not\in X_v$ and no set iff $w \in X_v$. 
Hence, it suffices to show that $c_{A^*}((w,v)) \in \{1,2\}$ for $w \notin X_v$. 
By construction of ${E}'$ (recall that $E'$ is the edge set of $D'$), $(w,v)$ does not dominate $(u,v)$ and therefore $c_{A^*}((w,v)) \in \{1,2\}$.

Suppose now $v \in V \setminus (R_1 \cup R_2)$. Let $s$ be $v$'s local root, i.e., $s \in R_1 \cup R_2, v \in X_s$; then $(u,v) \in A_s$.
Edge $(w,v)$ enters two sets of $\mathcal{Y}$ iff $w \not\in X_s$ and one set iff $w \in X_s$. 
If, on the one hand, $w \notin X_s$, then by construction of $A_s$ and property~\ref{property-2} of $S(X_s)$, it holds that $(w,v)$ is dominated by $(u,v)$, 
and hence $c_{A^*}((w,v)) = 2$. 
If, on the other hand, $w \in X_s$, then by construction of $A_s$ and property~\ref{property-1} of $S(X_s)$, 
$(w,v)$ does not dominate $(u,v)$, and hence $c_{A^*}((w,v)) \in \{1,2\}$. 
Thus, any edge satisfies the constraints in \ref{LP2} and  $\cert$ is a dual certificate for $A^*$. \qed
\end{proof}

Let $S \subseteq V$, $s \in S$ and $A \subseteq E$ be an arborescence rooted at $s$ and spanning exactly the nodes in $S$. 
We say that $A$ is \emph{locally popular with respect to} $S$, if the set family $\cert := \{\{v\}\mid v \in S \setminus \{s\}\} \cup \{S\}$ 
fulfills the constraints of the dual LP induced by $c_A$, where we set $c_A(e):=1$ for each edge $e \in \delta^-(s)$  and $c_A(e):=0$ for every $e \in \delta^-(v)$ with $v \in V\setminus S$ (see \ref{LP2}).

\begin{lemma}
\label{lem:witnessInWv_weakorder}
Let $S \subseteq V$ such that there exists an arborescence $A \subseteq E$ which is rooted at $v\in S$ and locally popular with respect to $S$. 
Then, $S \subseteq X_v$. 
\end{lemma}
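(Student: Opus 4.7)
The plan is to establish $S \subseteq X_v$ by induction on the iteration index $i$ in step~\ref{step-1}, proving the invariant $S \subseteq X_v^i$ for every $i$. The base case $i=0$ is trivial since $X_v^0 = V$. For the inductive step, the key ingredient will be a structural observation about the locally popular arborescence $A$, namely that every edge of $A$ lies in $S(S)$.

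To verify this structural claim, I would unpack the definition of local popularity edge by edge. Consider any $w \in S \setminus \{v\}$ with incoming arborescence edge $A(w)$. For any $(u,w) \in E[S]$, this edge enters only the singleton set $\{w\}$ of the dual family $\mathcal{Y} = \{\{w\} : w \in S \setminus \{v\}\} \cup \{S\}$, so local popularity forces $1 \le c_A((u,w))$, i.e.\ $(u,w) \not\succ_w A(w)$; this is exactly property~\ref{property-1} required for membership in $S(S)$. For any $(u,w) \in \delta^-(S)$, the edge enters both $\{w\}$ and $S$, forcing $2 \le c_A((u,w))$, i.e.\ $A(w) \succ_w (u,w)$; this is property~\ref{property-2}. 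Thus $A(w) \in S(S)$ for every non-root $w \in S$, proving $A \subseteq S(S)$.

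With this in hand, the inductive step is immediate. Assuming $S \subseteq X_v^i$, the monotonicity of $S(\cdot)$ (as invoked in the proof of Lemma~\ref{lem:u-in-W_v}) gives $S(S) \subseteq S(X_v^i)$, hence $A \subseteq S(X_v^i)$. Since $A$ is a spanning arborescence of $S$ rooted at $v$, every node of $S$ is reachable from $v$ in the auxiliary graph $D_v^i = (X_v^i, S(X_v^i))$, and by the definition used in step~\ref{step-1} this places $S$ inside $X_v^{i+1}$. The iterative procedure terminates because the sequence $X_v^0 \supseteq X_v^1 \supseteq \cdots$ is non-increasing, and the limit value is $X_v$, so $S \subseteq X_v$ follows.

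There is no genuine obstacle here: the only subtlety is parsing the local-popularity constraints for edges inside $E[S]$ versus edges in $\delta^{-}(S)$ to recover exactly the two defining properties of $S(S)$, after which monotonicity together with the reachability definition of $X_v^{i+1}$ closes the induction.
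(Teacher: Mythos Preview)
Your argument is correct and follows essentially the same route as the paper. The paper proves this lemma by pointing to the proof of Lemma~\ref{lem:witnessInWv}, which also tracks the iterates $X_v^i$ and derives a contradiction from the dual constraints at the first index where $S$ (there $Y_v$) fails to be contained in $X_v^{i+1}$; your presentation simply front-loads the observation $A \subseteq S(S)$ and then runs a direct induction using monotonicity of $S(\cdot)$, which is a slightly cleaner packaging of the same idea.
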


\begin{proof}
The proof of this lemma is a direct analog of the proof of Lemma~\ref{lem:witnessInWv}: 
substituting $S$ for $Y_v$ and using the definition of local popularity instead of popularity, 
one can use the same arguments to obtain the statement of this lemma. \qed
\end{proof}

\begin{observation}
\label{obs:source-of-subtrees}
Let $\tilde{B}$ be a branching of maximum cardinality in $D'$ and $T \subseteq \tilde{B}$ be a maximal subarborescence of $\tilde{B}$ not containing $r$. 
Then, there exists $S \subseteq V(T)$ such that $\delta^{-}_{D'}(S) = \emptyset$.
\end{observation}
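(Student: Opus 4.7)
The plan is to prove Observation~\ref{obs:source-of-subtrees} by induction on $|V(T)|$, using a swap argument that exploits the maximality of $\tilde{B}$.

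For the base case $|V(T)| = 1$, write $V(T) = \{X^*\}$. I would show that $\delta^-_{D'}(X^*) = \emptyset$, so $S := V(T)$ works. Indeed, any edge $(Y, X^*) \in D'$ satisfies $Y \neq X^*$ and hence $Y \notin V(T)$; since $X^*$ is its only descendant in $\tilde{B}$, the edge $(Y, X^*)$ cannot close a directed cycle when added to $\tilde{B}$, which would yield a strictly larger branching, contradicting maximality.

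For the inductive step, assume the statement for all maximum branchings in $D'$ and all their maximal subarborescences of size strictly less than $|V(T)|$. If $\delta^-_{D'}(V(T)) = \emptyset$, take $S := V(T)$. Otherwise, pick an edge $(Y, X) \in D'$ with $Y \notin V(T)$ and $X \in V(T)$. If $X = X^*$, the argument from the base case again produces a strictly larger branching $\tilde{B} \cup \{(Y, X^*)\}$, contradicting maximality. Hence $X \neq X^*$, so $X$ has a parent $P$ in $T$. Define the swapped branching $\tilde{B}' := (\tilde{B} \setminus \{(P, X)\}) \cup \{(Y, X)\}$. I would verify that $\tilde{B}'$ is again a valid branching of the same cardinality: no in-degree exceeds $1$ (we merely rerouted the unique incoming edge of $X$), and no cycle is created because every descendant of $X$ in $\tilde{B}$ lies in $V(T)$, whereas $Y \notin V(T)$. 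Thus $\tilde{B}'$ is also a maximum branching, and its maximal subarborescence $T_1$ rooted at $X^*$ satisfies $V(T_1) = V(T) \setminus \{\text{descendants of } X \text{ in } T\} \subsetneq V(T)$, still avoiding $r$. Applying the inductive hypothesis to $\tilde{B}'$ and $T_1$ yields a set $S \subseteq V(T_1) \subseteq V(T)$ with $\delta^-_{D'}(S) = \emptyset$. Since the emptiness of $\delta^-_{D'}(S)$ depends only on $D'$ and not on the branching, this $S$ works for the original $\tilde{B}$ and $T$ as well.

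The only delicate step is verifying that the swap produces a valid branching of unchanged cardinality, which reduces to the straightforward observation that all descendants of $X$ in $\tilde{B}$ lie within $V(T)$, so the external endpoint $Y$ cannot complete a directed cycle. Everything else amounts to bookkeeping.
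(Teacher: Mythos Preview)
Your proof is correct, but it takes a different route from the paper's argument. The paper proceeds by direct contradiction: assuming $\delta^-_{D'}(S)\neq\emptyset$ for every nonempty $S\subseteq V(T)$, one observes that every node of $V(T)$ is reachable in $D'$ from $\{r\}\cup\mathcal{X}'\setminus V(T)$, and then replaces the edges of $T$ by a forest hanging $V(T)$ off nodes outside $V(T)$; this gives every node of $V(T)$ (including the former root $X^*$) an incoming edge, producing a branching with one more edge than $\tilde{B}$ and contradicting maximality. Your argument instead performs an explicit edge swap, detaching a proper subtree of $T$ and attaching it outside $V(T)$, and then invokes the inductive hypothesis on the strictly smaller remaining component $T_1$ of the new maximum branching $\tilde{B}'$. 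The key observation---that $\delta^-_{D'}(S)=\emptyset$ is a property of $D'$ alone, so a witness $S$ found for $(\tilde{B}',T_1)$ works just as well for $(\tilde{B},T)$---is what makes the induction go through. Your approach is slightly longer but entirely elementary and avoids the implicit reachability-to-spanning-forest step; the paper's version is shorter and more global, essentially reducing to the standard fact that a digraph in which every nonempty subset of a vertex set has an entering edge admits a spanning in-forest from the complement.
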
 

\begin{proof}
Assume for contradiction that $\delta^{-}_{D'}(S) \neq \emptyset$ for all $S \subseteq V(T)$. 
Hence, every $X \in V(T)$ is reachable from $\{r\} \cup \mathcal{X} \setminus V(T)$ in $D'$. 
Consequently, we can modify $\tilde{B}$ by attaching each $X \in V(T)$ to some node in $\{r\} \cup \mathcal{X}' \setminus V(T)$, one by one. 
This contradicts the maximality of $\tilde{B}$.
\qed
\end{proof}

 Let $A$ be any arborescence, and $\cert$ a \emph{dual certificate} for $A$.
 Since $c_A(e)=1$ for every $e \in A$, we know that each edge in $A$ enters at most one set in $\mathcal{Y}$. 
 If an edge $(u,v) \in A$ enters a set of $\cert$, we refer to this set as $Y_v$, and we say that $Y_v$ \emph{belongs} to $v$ in $\cert$. 
 In contrast to the case of popular arborescences, it can be the case that the same set belongs to two edges in $\cert$, i.e., $Y_v=Y_{v'}$ but $v \neq v'$. 
 We say that $\cert$ is \emph{complete} on $S \subseteq V$, if $|\{Y_v \mid v \in S \}| = |S|$; 
 this concept will be crucial in the proof of Theorem~\ref{thm:unpopmargin}.
 By a simple counting argument we obtain that if $\cert$ is complete on $S$, then $v,v' \in S, v\neq v'$ implies $Y_v \neq Y_{v'}$.

\begin{proof}[of Theorem~\ref{thm:unpopmargin}]
By Lemma~\ref{lem:l-roots-means-l-1-margin}, the algorithm returns an arborescence with unpopularity margin at most $\ell-1$, 
where $\ell$ is the number of maximal subtrees in $\tilde{B}$. 
Let $A$ be an arborescence with minimum unpopularity margin and $\cert$ a corresponding dual certificate. 

Take any maximal subtree $T$  of $\tilde{B}$ not containing $r$. 
By Observation~\ref{obs:source-of-subtrees}, there exists some $S \subseteq V(T)$ with $\delta^{-}_{D'}(S)=\emptyset$. 
Below we prove that $\cert$ is not complete on $S^* := \bigcup_{X \in S} X$. 
As there are $\ell-1$ maximal subtrees of $\tilde{B}$ not containing $r$, and each contains a set of nodes on which $\cert$ is not complete, 
we get $|\cert| \leq n-(\ell-1)$. This implies $\mu(A) \geq \ell-1$ by Lemma~\ref{lem:dual-characterization}, proving the theorem.

It remains to show that $\cert$ is not complete on $S^*$. 
Assume for contradiction that $\cert$ contains a set $Y_x$ belonging to each $x \in S^*$.
Recall that $D^*$ is the expanded version of $D'$.
Note that, by the construction of $\mathcal{X}$ in the algorithm, a most preferred edge $(u,v)$ can enter a set $X\in \mathcal{X}$ only at a candidate entry node. Thus,
$\delta^{-}_{D^*}(S^*)=\emptyset$ means that $S^*$ is not entered by any most preferred edge.

Since $A$ enters $S^*$ but $\delta^{-}_{D^*}(S^*)=\emptyset$, there exists $(u,v) \in A \cap \delta^{-}_{D}(S^*)$ which is not included in $D^*$. 

\begin{claim}
$Y_v \cap S^* \not\subseteq X_v$
\end{claim}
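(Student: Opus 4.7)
I will prove the claim by contradiction: suppose $Y_v \cap S^* \subseteq X_v$. The goal is to exhibit an edge $e$ with $c_A(e) = 0$ that enters some set of $\cert$, contradicting the dual feasibility of $\cert$.

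The starting point is that $(u,v) \in A$ lies in $\delta^-_D(S^*) \setminus E_{D^*}$ with $u \notin S^* \supseteq X(v)$, where $X(v)$ denotes the unique set in $\mathcal{X}'$ containing $v$ (this set exists and is uniquely determined by the partition structure of $\mathcal{X}'$). By the definition of $E_{D^*}$, exactly one of the following holds: (i) $X_v \neq X(v)$, so $v$ is not a candidate entry-point of its partition class, or (ii) $X_v = X(v)$ but $(u,v)$ is dominated in $\delta^-(X(v))$. In case (i), I would pick a generator $s \in V$ with $X_s = X(v)$; since $s$ reaches $v$ in $(X(v), S(X(v)))$, the last edge on a safe $s$-$v$ path has the form $(u',v) \in S(X(v))$, which by property~\ref{property-2} of $S(X(v))$ dominates every edge of $\delta^-(X(v)) \ni (u,v)$, giving $(u',v) \succ_v (u,v)$. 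In case (ii), I would pick $(u',v) \in \delta^-(X(v))$ with $(u',v) \succ_v (u,v)$ directly. In both cases $c_A((u',v)) = 0$.

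Dual feasibility then forces $(u',v)$ to enter no set of $\cert$, and $v \in Y_v$ yields $u' \in Y_v$. In case (ii) we have $u' \notin X(v)$; if additionally $u' \in S^*$, the hypothesis yields $u' \in Y_v \cap S^* \subseteq X_v \subseteq X(v)$, absurd, so $u' \in V \setminus S^*$. In case (i) we have $u' \in X(v) \subseteq S^*$, and the hypothesis gives $u' \in X_v$. The plan now is to argue that $(u',v)$ in fact lies in $S(X_v)$: undominatedness in $E[X_v]$ follows immediately from undominatedness in $E[X(v)] \supseteq E[X_v]$, and dominance over $\delta^-(X_v)$ splits into edges in $\delta^-(X(v))$ (handled directly by $(u',v) \in S(X(v))$) and edges from $X(v) \setminus X_v$ to $v$, where one uses the weak-ranking hypothesis and the iterative construction of $X_v$ to rule out $\sim_v$-ties. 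Once $(u',v) \in S(X_v)$ is established with $u' \in X_v$, one contradicts the terminality of $X_v$ by tracing back along the $s$-$v$ safe path in $S(X(v))$ and reapplying the same analysis to each successive edge, eventually reaching one whose tail lies in $X(v) \setminus X_v$, which yields the contradiction.

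The main obstacle is making the transfer step in case (i) rigorous: under weak rankings, an edge $(w,v)$ with $w \in X(v) \setminus X_v$ and $(w,v) \sim_v (u',v)$ would block $(u',v) \in S(X_v)$, so I need to exploit that $v$'s failure to absorb $w$ into $X_v$ across all iterations of the algorithm precludes such a tie. The subcase $u' \in V \setminus S^*$ in case (ii) requires a separate argument, most likely by iterating the dominator construction with the new in-edge of $u'$ in place of $(u,v)$, and exploiting the laminarity of $\cert$ together with the completeness of $\cert$ on $S^*$ to eventually expose a dual violation.
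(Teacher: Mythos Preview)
Your two-case split matches the paper, but both cases have genuine gaps, and in each case you are missing the key observation that makes the argument short.

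\textbf{Case (ii).} You pick any dominator $(u',v)\in\delta^-(X(v))$ and then worry about the subcase $u'\notin S^*$, proposing an open-ended iteration. This subcase is in fact impossible. Since preferences are weak rankings, you may take $(u',v)$ to be \emph{undominated} in $\delta^-(X(v))$; then $X_v=X(v)\in\mathcal{X}'$, $u'\notin X_v$, and undominatedness give $(u',v)\in E_{D^*}$. But the proof has already established $\delta^-_{D^*}(S^*)=\emptyset$, so $u'\in S^*$. Combined with $c_A((u',v))=0$ (hence $u'\in Y_v$) and $u'\notin X_v$, you get $u'\in (Y_v\cap S^*)\setminus X_v$ directly; no iteration is needed.

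\textbf{Case (i).} Your plan is to push $u'$ into $X_v$ and then argue $(u',v)\in S(X_v)$ and iterate along the safe path. You yourself flag the obstacle: an edge $(w,v)$ with $w\in X(v)\setminus X_v$ and $(w,v)\sim_v(u',v)$ blocks $(u',v)\in S(X_v)$, and nothing in the construction of $X_v$ rules this out (the fixed-point definition of $X_v$ controls reachability from $v$, not ties at $v$ with edges whose tails lie outside $X_v$). The paper avoids this entirely: instead of going down to $X_v$, it stays in $X=X(v)$ and shows $s\in Y_v$. If $s\notin Y_v$, the $s$--$v$ path $P\subseteq S(X)$ crosses into $Y_v$ at some edge $(u',v')$. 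If $v'=v$, then $(u',v)\in S(X)$ dominates $(u,v)\in\delta^-(X)$, so $c_A((u',v))=0$ yet it enters $Y_v$. If $v'\neq v$, then $(u',v')$ is a most-preferred edge (weak rankings plus $S(X)$), so $c_A((u',v'))\le 1$; but $v'\in S^*$ and completeness of $\cert$ on $S^*$ gives a distinct $Y_{v'}\ni v'$, and laminarity forces $Y_{v'}\subset Y_v$ (otherwise $A(v')$ would enter both $Y_{v'}$ and $Y_v$), so $(u',v')$ enters two sets of $\cert$. Either way, a dual violation. Hence $s\in Y_v$, and since $X_s=X\supsetneq X_v$ Lemma~\ref{lem:u-in-W_v} gives $s\notin X_v$; together with $s\in X\subseteq S^*$ this yields $s\in(Y_v\cap S^*)\setminus X_v$.

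In short, the missing ideas are: in case (ii), choose the dominator to lie in $E_{D^*}$ and invoke $\delta^-_{D^*}(S^*)=\emptyset$; in case (i), use completeness of $\cert$ on $S^*$ to trap the whole safe path inside $Y_v$ and exhibit $s$ as the desired witness. Your proposed iterations are neither needed nor, as you suspected, easy to make rigorous.
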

\begin{proof}
Let $X \in S$ be the contracted node  entered by $(u,v)$. 

\smallskip

\textbf{Case 1}: $v \notin \entry(X)$. 
Let $s \in \entry(X)$. Then there exists an $s$-$v$-path $P$ in $(X,S(X))$; recall that every edge on $P$ is most preferred and dominates all edges entering $X$. If $s \notin Y_v$, then there is an edge $(u', v') \in P$ entering $Y_v$. If $v' \neq v$, then the most-preferred edge $(u', v')$ crosses two sets of $\cert$, a contradiction. If $v' = v$, then $(u', v')$ dominates $(u, v)$ but crosses $Y_v$, again a contradiction. We conclude that $s \in Y_v$. However, as $v$ is not in $\entry(X)$, by Lemma~\ref{lem:u-in-W_v} we know that $s \notin X_v$. We obtain $Y_v \cap S^* \not\subseteq X_v$.   

\smallskip

\textbf{Case 2}:  $v \in \entry(X)$, i.e., $X = X_v$. Since $(u,v) \notin D^*$, there exists an edge $(u',v) \in D^*$ which dominates $(u,v)$ and $u' \in V \setminus X$. Hence, $(u',v)$ must not enter any set in $\cert$ and we obtain $u' \in Y_v$. Clearly, we get that $Y_v  \cap S^* \not\subseteq X_v$. 
\qed
\end{proof}

Now, we are going to show that $A$ induces a locally popular arborescence on $Y_v \cap S^*$, rooted at $v$. 
By the above claim, this contradicts Lemma \ref{lem:witnessInWv_weakorder}. 

Consider a node $x \in Y_v \cap S^*\setminus\{v\}$. If $f=(w,x)$ is a most-preferred edge in $x$, then $w \in Y_v \cap S^*$: 
indeed, $f$ cannot enter $S^*$ because $\delta^-_{D^*}(S^*)=\emptyset$, and moreover, $f$ cannot enter $Y_v$ because $c_A(f) \leq 1$ and 
thus cannot enter both $Y_v$ and $Y_x$ (recall that $\cert$ is complete on $S^*$ which contains $x$, so $\cert$ contains a set $Y_x$ corresponding to $x$).
If $x$ prefers $f$ to $A(x)$, then $c_A(f)=0$ and thus $f$ cannot enter $Y_x$, implying $w \in Y_x$. 
However, this contradicts the fact that $\cert$ is two-layered: since $w \in S^*$, there exists a set $Y_w \in \cert$ corresponding to $w$, 
and so $w$ is contained in three sets of $\cert$, $Y_w$, $Y_x$ and $Y_v$.
Hence, we obtain that $A(x)$ must be a most-preferred edge for $x$. Since this holds for each $x \in Y_v \cap S^*\setminus\{v\}$, 
$A':=A \cap E[Y_v \cap S^*]$ is an arborescence rooted at $v$, containing only most-preferred edges. 

It remains to show that
$\cert':= \{Y_v \cap S^*\} \cup \{\{u\} \mid u \in (Y_v \cap S^*) \setminus \{v\}\}$ fulfills all constraints in \ref{LP2} w.r.t. $A'$ (with $c_{A'}(e)=1$ for each $e \in \delta^-(v)$). Observe that we need to verify this only for edges that point from $Y_v \setminus S^*$ to $Y_v \cap S^*$, as all other edges enter the same number of sets in $\cert'$ as in $\cert$. So let $f=(w,x)$ be such an edge. If $x=v$, then $f$ enters only $Y_v \cap S^*$ from $\cert'$; by $c_{A'}(f)=1$ this satisfies \ref{LP2}. If $x \neq v$, then $f$ enters two sets $Y_v \cap S^*$ and $\{x\}$ from $\cert'$. 
Since $\delta^-_{D^*}(S^*)=\emptyset$, we know that $f$ is not a most-preferred edge, so $x$ prefers $A'(x)$ to $f$, yielding $c_{A'}(f)=2$; note that here we need that $\succ_x$ is a weak ordering. This proves that all edges satisfy the constraints in \ref{LP2}, so we can conclude that $A'$ is indeed locally popular and spans $Y_v \cap S^*$. \qed
\end{proof}

The following theorem shows that Algorithm \minmarg{} cannot be extended for the case where each node $v$ has a partial order over $\delta^-(v)$.

\begin{theorem}
Given a directed graph where each node has a partial preference order over its incoming edges and an integer $k \leq n$, it is $\mathsf{NP}$-hard to decide if there exists a branching with unpopularity margin at most~$k$. \label{thm:hardness_unpopmargin}
\end{theorem}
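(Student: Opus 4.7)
The plan is to prove NP-hardness via a polynomial-time reduction from a standard NP-hard problem such as Vertex Cover (or an equivalent like Independent Set or 3-SAT). The conceptual reason to expect such a reduction is that a partial order, unlike a weak ranking, can have several mutually incomparable ``top'' incoming edges at a node while still being strictly preferred to others; this extra expressiveness lets one build gadgets encoding binary Boolean choices, which is exactly the combinatorial flexibility that the weak-ranking algorithm \minmarg{} avoids. Given that the minimum margin problem is polynomial for weak rankings (Theorem~\ref{thm:unpopmargin}) but we are separating the two cases, a gadget reduction that uses nontransitive indifference at a single node as its core hardness ingredient is the natural route.

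Concretely, starting from a Vertex Cover instance $(H,t)$ with $H=(V_H,E_H)$, I would construct a digraph $G$ together with partial orders as follows. For every vertex $u \in V_H$ I introduce a \emph{selector} gadget, whose central node $x_u$ has two distinguished incoming edges $e_u^+$ and $e_u^-$ representing ``in the cover'' and ``out,'' together with helper nodes whose partial orders force essentially one of these two options to be consistent with any zero-margin dual certificate restricted to the gadget. For every $\{u,v\}\in E_H$ I introduce a \emph{constraint} node $y_{uv}$ whose incoming edges come from the two selector gadgets and are arranged in a partial order so that $y_{uv}$ contributes to the unpopularity margin unless at least one of $e_u^+, e_v^+$ is used. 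Finally, I attach a global ``budget'' gadget that prevents more than $t$ selectors from being switched to $+$ without incurring extra margin, and set the decision parameter $k$ to a fixed affine function of $|V_H|$ and $t$.

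Correctness in both directions would be argued through the dual-certificate characterization of Lemma~\ref{lem:dual-characterization}. For the forward direction, a vertex cover $C\subseteq V_H$ of size $t$ yields a branching by choosing $e_u^+$ for $u\in C$ and $e_u^-$ otherwise; I would then exhibit a laminar family of size $n-k$ certifying unpopularity margin at most $k$, built by placing one singleton dual set at every node except one designated ``defect'' per unselected constraint. For the converse, given any branching $B$ of margin at most $k$ and its maximum laminar dual certificate $\cert$, I would read $C$ as the set of $u$ for which $e_u^+ \in B$, invoke Observation~\ref{obs:safe-edges} to show that any uncovered edge $\{u,v\}$ forces $\cert$ to lose a set at $y_{uv}$, and use the budget gadget to bound $|C|\le t$.

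The main obstacle will be engineering the partial orders of the selector and constraint gadgets so that (i)~the only essentially optimal intra-gadget behaviours are the two intended ones, ruling out ``mixed'' configurations that exploit the nontransitive indifference to simulate an invalid covering more cheaply, and (ii)~the unpopularity contributions decompose additively across gadgets, so that the margin of a global branching is precisely $|V_H|-|C|$ plus a penalty counting uncovered edges. Because partial orders admit many more configurations than weak rankings, this soundness analysis is the delicate part; it will almost certainly proceed by a case distinction on which edges of the gadget are picked by $B$ and a local dual-certificate counting argument, leveraging that $Y_v\subseteq X_v$ (Lemma~\ref{lem:witnessInWv}) to confine each local argument to the relevant gadget.
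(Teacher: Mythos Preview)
Your submission is an outline, not a proof: you name a source problem (Vertex Cover) and assign roles to gadgets, but you never actually construct the partial orders or verify any of the claimed properties, so there is nothing to check. The paper takes a different route, reducing from \textsc{3D-Matching}, and its construction (parallel edges with carefully engineered, non-transitive domination patterns at paired ``upper'' and ``lower'' nodes) is quite intricate; you should expect a comparable amount of concrete detail before the argument can be assessed.

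Two specific concerns about your plan as stated. First, the structural tools you invoke for the soundness direction---Observation~\ref{obs:safe-edges} and Lemma~\ref{lem:witnessInWv}---are stated and proved only for \emph{popular} arborescences, i.e., the margin-zero case with a size-$n$ certificate. They need not hold for a branching of positive margin with a strictly smaller certificate, which is precisely the regime your converse must analyse; the paper's proof avoids these lemmas entirely and works directly with the constraints of \ref{LP2} and a counting argument over which nodes can possess a set $Y_v$ in~$\mathcal{Y}$. Second, your ``global budget gadget'' that caps the number of $+$-selectors at~$t$ is a red flag: such a gadget must interact with \emph{every} selector simultaneously, which is in tension with your own requirement that margin contributions decompose additively across gadgets. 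The paper sidesteps this by reducing from an exact-cover problem, where the cardinality constraint is already implicit in the covering requirement and no separate budget mechanism is needed; if you insist on Vertex Cover, a more standard trick is to make each $+$-choice itself cost one unit of margin (so that $k$ encodes the budget directly) rather than trying to enforce a hard cap via an auxiliary gadget.
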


\begin{proof}
We reduce from \textsc{$3$D-Matching} where we are given disjoint sets $X,Y,Z$ of equal cardinality and $T \subseteq X \times Y \times Z$, and we ask whether there exists $M \subseteq T$ with $|M| = |X|$ such that for distinct $(x,y,z), (x',y',z') \in M$ it holds that $x\neq x', y \neq y'$ and $z\neq z'$; such an $M$ is called a $3$D-\emph{matching}. W.l.o.g. we assume that $|X| > 3$ and every $x \in X\cup Y \cup Z$ is in some $t \in T$. 

We construct a digraph $D=(V\cup\{r\},E)$ together with a partial order $\succ_v$ over the incoming edges of $v$ for each $v \in V$ as follows. For every $x \in X\cup Y \cup Z$ we introduce a \emph{node gadget} consisting of a lower node $x_l$ and an upper node $x_u$. There exist two parallel edges, $d_x^{(1)}$ and $d_x^{(2)}$, from $x_u$ to $x_l$, 
and there exist two parallel edges, $r_x^{(1)}$ and $r_x^{(2)}$, from $r$ to $x_l$. 
Moreover, the upper node $x_u$ has an incoming edge from the upper node of every other node gagdet, i.e., $(x'_u,x_u) \in E$ for all $x' \in X\cup Y \cup Z \setminus \{x\}$. Lastly, there exists an incoming edge from $r$ to the upper node which we call $r_x^{(3)}$. 

For each $t \in T$ we introduce a \emph{hyperedge gadget} consisting of six edges in~$D$. More precisely, for each $x \in t$ we introduce two parallel edges from $x_l$ to $x_u$ which we call $t_x^{(1)}$ and $t_x^{(2)}$. This finishes the definition of $D$.

Let us now define the preferences $\{ \succ_v \mid v \in V\}$. 
A lower node $x_l$ has the following preferences over its incoming edges: 
\[d_x^{(1)}\succ r_x^{(1)} \text{, }\qquad d_x^{(2)} \succ r_x^{(2)},\] and all other pairs are not comparable. 
Let $t=(x,y,z)\in T $ and $\bar{t}:=\{x,y,z\}$. The preferences of an upper node $x_u$ are as follows: 
$$\begin{array}{lcl}
 (x'_u,x_u) \succ r_x^{(3)}      &\phantom{a}&\text{for each } x' \in X\cup Y \cup Z \setminus \{x\}, \\
    t_x^{(1)} \succ (x'_u, x_u)  &&\text{for each } x' \in X \cup Y \cup Z \setminus \bar{t},\\
    t_x^{(2)} \succ (x'_u, x_u)  &&\text{for each } x' \in \bar{t}\setminus\{x\},\\
   t_x^{(1)} \succ r_x^{(3)},    && t_x^{(2)} \succ r_x^{(3)},
\end{array} $$
and all other pairs are not comparable. See Figure~\ref{fig:3d-matching} for an illustration.

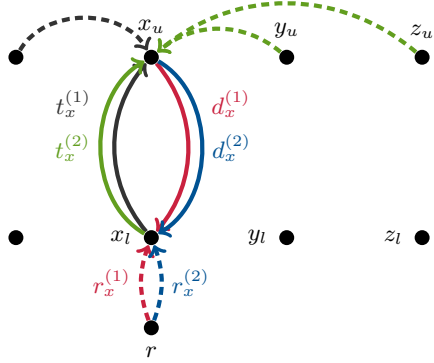
\begin{figure}[h]
\centering
\begin{tikzpicture}[scale=0.6,vertex/.style={circle,fill,inner sep=2pt}]
\node[vertex](r) at (3,-2){};\node[below=4pt] at (r){$r$};
\node[vertex](wu) at (0,4){};
\node[vertex](wl) at (0,0){};
\node[vertex](xu) at (3,4){};\node[above=6pt] at (xu){$x_u$};
\node[vertex](xl) at (3,0){};\node[left=4pt] at (xl){$x_l$};
\node[vertex](yu) at (6,4){};\node[above=4pt] at (yu){$y_u$};
\node[vertex](yl) at (6,0){};\node[left=4pt] at (yl){$y_l$};
\node[vertex](zu) at (9,4){};\node[above=4pt] at (zu){$z_u$};
\node[vertex](zl) at (9,0){};\node[left=4pt] at (zl){$z_l$};

\draw[red,densely dashed, bend left=20, ultra thick,->] (r) to node{} node[left] {$r_x^{(1)}$} (xl);
\draw[blue, densely dashed, bend right=20, ultra thick,->] (r) to node{} node[right] {$r_x^{(2)}$} (xl);

\draw[red,bend left=40, ultra thick,->, near start] (xu) to node{} node[right=8pt] {$d_x^{(1)}$} (xl);
\draw[blue, bend left=60, ultra thick,->] (xu) to node{} node[right] {$d_x^{(2)}$} (xl);

\draw[black!80,bend left=40, ultra thick,->, near end] (xl) to node{} node[left=8pt] {$t_x^{(1)}$} (xu);
\draw[green, bend left=60, ultra thick,->] (xl) to node{} node[left] {$t_x^{(2)}$} (xu);

\draw[green,densely dashed, bend right=40, ultra thick,->] (yu) to node{} node[left] {} (xu);
\draw[green, densely dashed, bend right=40, ultra thick,->] (zu) to node{} node[left] {} (xu);
\draw[black!80, densely dashed, bend left=60, ultra thick,->] (wu) to node{} node[left] {} (xu);

\end{tikzpicture}
\caption{Construction within the reduction for Theorem \ref{thm:hardness_unpopmargin}. A solid edge of a certain color dominates the dashed edge(s) of the same color; the figure assumes $(x,y,z) \in T$.}
\label{fig:3d-matching}
\end{figure}
Note that the digraph $D$ has the special property that every node $v \in V$ has at least one incomming edge from $r$. As a consequence of this structure, any branching $B$ in $D$ minimizing $\mu(B)$ must in fact be an arborescence rooted at $r$. Moreover, we can apply Lemma~\ref{lem:dual-characterization} to any given arborescence $A$ in $D$ as usual. 
In the following we show that there exists a $3$D-matching $M\subseteq T$ with $|M|=|X|$ iff there exists an $r$-arborescence in $D$ with unpopularity margin at most $2|X|$.

First, let $M \subseteq T$ be a $3$D-matching with $|M|=|X|$. We construct an arborescence $A$ together with a feasible dual certificate $\cert$ with $|\cert|=4|X|$. By Lemma~\ref{lem:dual-characterization}, this suffices to show that $A$ has unpopularity margin at most $6|X|-4|X|=2|X|$. We define \[A:= \{r_x^{(1)} \mid \text{for all } x \in X\cup Y \cup Z \} \cup \{t_w^{(1)} \mid \text{for all } t \in M \text{ and for all } w \in t\}\] and \[\mathcal{Y}:=\{\{x_u\} \mid \text{for all } x \in X \cup Y \cup Z \} \cup \{\{x_u,y_u,z_u,x_l,y_l,z_l\}\mid \text{for all } (x,y,z) \in M\}.\] 
Clearly, $A$ is indeed a $r$-arborescence. It remains to show that $\mathcal{Y}$ is a feasible dual solution. First consider a node $x_l$ for $x \in X \cup Y \cup Z$ which has four incoming edges. The edges $d_x^{(1)}$ and $d_x^{(2)}$ do not enter any set in $\cert$ and hence do not violate any constraint in the dual LP. Moreover, since node $x_l$ is indifferent between $r_x^{(1)}$ and $r_x^{(2)}$, we obtain $c_A(r_x^{(1)}) = c_A(r_x^{(2)}) = 1$ and hence, none of the corresponding constraints is violated. 

Now, consider $x_u$ for $x \in X \cup Y \cup Z$. Let $(x,y,z)$ be the hyperedge in $M$ containing $x$. 
We obtain that $c_A((y_u,x_u))=c_A((z_u,x_u))=c_A(t_x^{(2)})=1$ 
while for any other incoming edge $e$ of $x_u$ we get $c_A(e)=2$. By construction of $\cert$, none of the constraints is violated. This suffices to show the first direction of the equivalence. 

Now, let $A$ be an $r$-arborescence of unpopularity margin at most $2|X|$. Let $\cert$ be a corresponding laminar certificate of size $|\cert| = 4|X|$. 

Our first observation is that $x_l$ for any $x \in X\cup Y\cup Z$ is included in at most one set in $\cert$. This can be seen by a simple case distinction over the incoming edge of $x_l$ in $A$. No matter which of the four incoming edges to $x_l$ is selected in $A$, it always holds that $c_A(r_x^{(1)})=1$ or $c_A(r_x^{(2)})=1$, while both of them enter two sets of $\cert$. 

The first observation directly implies that a node gadget can intersect with at most two sets from $\cert$. Since the number of sets is greater than the number of node gadgets, there exist node gadgets which intersect with more than one set from $\cert$. Let $x \in X\cup Y \cup Z$ be a node such that the corresponding node gadget intersects with two sets from $\cert$. In the following we argue about the relation of these sets. Let $Y_1$ and $Y_2$ be the corresponding sets from $\cert$. We will show that w.l.o.g. $Y_2 \subseteq Y_1, \{x_u,x_l\} \subseteq Y_1,x_u \in Y_2, x_l \notin Y_2$. 

First, assume for contradiction that $Y_1 \cap Y_2 = \emptyset$. Then, $\{r_{x}^{(1)},r_x^{(2)}\}\cap A = \emptyset$ since otherwise $c_A(d_x^{(1)})=0$ or $c_A(d_x^{(2)})=0$, however, both of them enter a set from $\cert$. This implies that $\{t_{x}^{(1)},t_x^{(2)}\}\cap A = \emptyset$ for all $t \in T$ such that $x \in t$ since otherwise $A$ would contain a cycle. However, no matter which incoming edge of $x_u$ is included, there exists one edge $e$ pointing from $x_l$ to $x_u$ such that $c_A(e)=0$ but $e$ enters one set from $\cert$, a contradiction. We conclude that $Y_1$ and $Y_2$ need to intersect and since $\cert$ is laminar we can assume w.l.o.g. that $Y_2 \subseteq Y_1$. Second, assume for contradiction that $x_l \notin Y_1$. By the previous argumentation we know that $x_u$ can only be entered by edges pointing from $x_l$ to $x_u$, however, these enter two sets from $\cert$, a contradiction. We conclude that $Y_2 \subseteq Y_1, \{x_u,x_l\} \subseteq Y_1,x_u \in Y_2, x_l \notin Y_2$. 

We define $\mathcal{S} := \{Y \in \cert \mid Y \text{ is $\subseteq$-maximal and there exists } Y' \in \cert, Y' \subseteq Y\}.$ Elements in $\mathcal{S}$ are non-overlapping and by the above observations we know that $|\mathcal{S}|\geq|X|$. For every $Y_1\in \mathcal{S}$ we select one representative $x(Y_1) \in X\cup Y \cup Z$ such that the node gadget of $x$ intersects with $Y_1$ and one other set from $\cert$. Considering the node gadget of $x:=x(Y_1)$, we observe that $x_u$ can only be entered by edges pointing from $x_l$ to $x_u$. We argue that in particular, $x_u$ needs to be entered by $t_x^{(1)}$ for some $t \in T$. Assume for contradiction that $x_u$ is entered by $t_x^{(2)}$  for some $t \in T$. Then, there exist $3|X|-3$ edges which are uncomparable with $t_x^{(2)}$ and hence their tails need to be included in $Y_1$. Hence, $\mathcal{S}$ contains at most $3$ sets, a contradiction to the assumption that $|X|>3$. Therefore, $x_u$ is entered by $t_x^{(1)}$ for some $t=(x,y,z)\in T$. Then, we know that $\{x_u,y_u,z_u\} \subseteq Y_1$ since $c_A((y_u,x_u))=c_A((z_u,x_u))=1$. We conclude that neither $y$ nor $z$ are included in any other set of $\mathcal{S}$. Hence, $M:=\{t \in T \mid t_{x(Y_1)}^{(1)} \in A, Y_1 \in \mathcal{S} \}$ is a $3$D-matching of size $|X|$. 
\qed
\end{proof}

\section{Branchings with low unpopularity factor}
Recall the definition of {\em unpopularity factor} from Section~\ref{sec:intro}. 
As done in the previous section, instead of studying branchings within the digraph $G$, we look at $r$-arborescences within the digraph $D$. The unpopularity factor of any branching in $G$ is the same as the unpopularity factor of the corresponding arborescence in $D$.
Given any arborescence $A$ and value $t$, there is a 
simple method to verify if $u(A) \le t$ or not. This is totally analogous to our method from Section~\ref{sec:lp} to verify popularity and it involves 
computing a min-cost arborescence in $D$ with the following edge costs. For $e=(u,v)$ in $D$, define:
 \[c_A(e) := \begin{cases} 0 \quad \ \ \ \ \ \text{if } e \succ_v A(v) \\  1\quad \ \ \ \ \ \text{if } e \sim_v A(v) \\ t+1 \quad \text{if } e \prec_{v} A(v) \end{cases}\] 

 \begin{lemma}
   \label{prop1}
   Arborescence $A$ satisfies $u(A) \le t$ if and only if $A$ is a min-cost arborescence in $D$ with edge costs given by $c_A$ defined above.
 \end{lemma}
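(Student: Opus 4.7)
My plan is to mirror the argument for Proposition~\ref{prop0} exactly, replacing the cost $2$ used for popularity ($t=1$) by the general cost $t+1$. The key identity to establish is
\[
c_A(A') \;=\; n \;+\; t\cdot \phi(A,A') \;-\; \phi(A',A)
\]
for every arborescence $A'$ in $D$. From this, $A$ being a min-cost arborescence with respect to $c_A$ (note $c_A(A)=n$, since every $e\in A$ gives $e\sim_v A(v)$) is equivalent to $t\cdot \phi(A,A') \ge \phi(A',A)$ for every arborescence $A'$, which in turn is equivalent to $u(A)\le t$ by definition.

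To derive the identity, I would partition the node set $V$ according to how each node $v$ compares its incoming edge in $A'$ to its incoming edge in $A$: let $V^+ = \{v : A'(v) \succ_v A(v)\}$, $V^0 = \{v : A'(v) \sim_v A(v)\}$, and $V^- = \{v : A'(v) \prec_v A(v)\}$. Then by the definition of $c_A$, we have
\[
c_A(A') \;=\; 0\cdot |V^+| \;+\; 1\cdot |V^0| \;+\; (t+1)\cdot |V^-|.
\]
Since $A'$ is an $r$-arborescence spanning $V\cup\{r\}$, every $v\in V$ contributes exactly one incoming edge, so $|V^+|+|V^0|+|V^-|=n$. Substituting $|V^0| = n - |V^+| - |V^-|$ and observing that $|V^+| = \phi(A',A)$ and $|V^-| = \phi(A,A')$ yields the claimed identity.

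Having established this, the equivalence reads: $A$ is a min-cost arborescence w.r.t.\ $c_A$ iff $c_A(A')\ge c_A(A) = n$ for all $A'$, iff $t\cdot \phi(A,A') \ge \phi(A',A)$ for all $A'$, iff $u(A)\le t$. The only subtlety is handling the case $\phi(A,A')=0$: if additionally $\phi(A',A) > 0$, then $t\cdot \phi(A,A') \ge \phi(A',A)$ fails and simultaneously $u(A)$ is infinite (so $u(A)\le t$ fails), so both sides agree; otherwise $\phi(A',A) = 0$ as well and the inequality is trivial. There is no real obstacle here; the lemma is essentially a bookkeeping generalization of Proposition~\ref{prop0}, and the proof is a direct calculation.
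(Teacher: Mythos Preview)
Your proof is correct and follows essentially the same approach as the paper: establish the identity $c_A(A') = n + t\cdot\phi(A,A') - \phi(A',A)$, note $c_A(A)=n$, and conclude the equivalence. Your derivation of the identity via the partition $V^+,V^0,V^-$ and your explicit treatment of the $\phi(A,A')=0$ edge case are both a bit more detailed than the paper's terse version, but the argument is the same.
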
  
 \begin{proof}
For any arborescence $A'$, we now have $c_A(A') = t\cdot\phi(A,A') - \phi(A',A) + n$. We also have $c_A(A) = n$. Suppose $A$ is a min-cost arborescence 
in $D$. Then $c_A(A') \ge n$; so for any arborescence $A'$ {such that $\phi(A',A) > 0$}, we have:
\[t\cdot\phi(A,A') - \phi(A',A) \ge 0, \ \ \text{so}\ \  \frac{\phi(A',A)}{\phi(A,A')} \le t, \ \ i.e., \ \ u(A) \le t.\] 

Conversely, if $u(A) \le t$, then $t\cdot\phi(A,A') \ge \phi(A',A)$ for all arborescences $A'$. Thus $c_A(A') = t\cdot\phi(A,A') - \phi(A',A) + n \ge n$.
Since $c_A(A) = n$, $A$ is a min-cost arborescence in $D$. \qed
\end{proof}

Lemma~\ref{lem:unpop-factor} follows from Lemma~\ref{prop1} and LP-duality.

\begin{lemma}
\label{lem:unpop-factor}
Arborescence $A$ satisfies $u(A) \le  t$ if and only if there exists a dual feasible solution $y$ (see \ref{LP2} where $c_A(e)$ is as 
defined above) with $\sum_X y_X = n$.
\end{lemma}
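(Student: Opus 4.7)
The plan is to chain together Lemma~\ref{prop1} with LP-duality and the well-known integrality of the min-cost arborescence LP. First observe that, with the new cost function $c_A$, each edge of $A$ still satisfies $c_A(e)=1$, so $c_A(A)=n$. Moreover, for every nonempty $X\subseteq V$ there is an edge $e\in A\cap\delta^-(X)$, and therefore any feasible dual solution $y$ to \ref{LP2} (relative to these costs) satisfies $y_X\le c_A(e)=1$ for the set $X$ in question; while this bound on individual $y_X$ values is not needed here, it confirms that the dual LP makes sense with the modified $c_A$.

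The forward direction is the main one. If $u(A)\le t$, then by Lemma~\ref{prop1} $A$ is a min-cost arborescence in $D$ with respect to $c_A$, so its cost $n$ equals the optimum of \ref{LP1}. The linear program \ref{LP1} is the standard min-cost arborescence LP, which is known to be integral (this is exactly the result of Edmonds/Fulkerson used in Section~\ref{sec:lp}); hence its LP-optimum equals the integer optimum $n$. By LP-duality, the optimum of \ref{LP2} also equals $n$, which yields a feasible $y$ with $\sum_X y_X=n$.

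For the converse direction, suppose a dual feasible $y$ with $\sum_X y_X=n$ exists. By weak LP-duality applied to \ref{LP1} and \ref{LP2}, every feasible $x$ to \ref{LP1} has objective value at least $n$; in particular the min-cost arborescence in $D$ (under $c_A$) has cost at least $n$. Since $A$ itself achieves cost exactly $c_A(A)=n$, it must be a min-cost arborescence, and another application of Lemma~\ref{prop1} gives $u(A)\le t$.

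There is really no significant obstacle here: once Lemma~\ref{prop1} is available, the statement is just LP-duality applied to \ref{LP1}/\ref{LP2} together with the integrality of the primal. The only mild care required is to check that \ref{LP2} is still the correct dual under the modified costs $c_A$ (the only change is the right-hand side of the edge constraints, which becomes the new $c_A(e)\in\{0,1,t+1\}$), and that the primal remains integral for arbitrary nonnegative edge costs, which is precisely the classical result cited before Lemma~\ref{lem:dualLaminarity}.
\qed
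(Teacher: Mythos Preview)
Your proposal is correct and takes essentially the same approach as the paper, which simply states that the lemma ``follows from Lemma~\ref{prop1} and LP-duality.'' You have just spelled out the two directions of that duality argument (using integrality of \ref{LP1} for the forward direction and weak duality for the converse), which is exactly what the paper intends.
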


As before, $y_X \in \{0,1\}$ for every non-empty $X \subseteq V$---thus we can identify $y$ with the corresponding set family 
${\cal F}_y = \{X \subseteq V: y_X > 0\}$. Moreover, the family ${\cal F}_y$ has at most $t+1$ levels now, i.e., if 
$X_1 \subset \cdots \subset X_k$ is a chain of sets in ${\cal F}_y$, then $k \le t+1$.

\paragraph{\bf Proof of Theorem~\ref{thm:unpop-factor}.}
We now assume node preferences are {\em strict} (thus we may assume the graph to be simple, and nodes have preferences over their in-neighbors) and  modify our algorithm from Section~\ref{sec:algo} 
so that the new algorithm always computes an arborescence $A$ in $D = (V\cup\{r\},E)$ such that $u(A) \le \lfloor\log n\rfloor$.
\begin{enumerate}
\item Initially all nodes in $V$ are active. Set $X^0_v = \{v\}$ for all $v \in V$.
\item Initialize the current edge set $E' = \emptyset$; let $i = 1$.
\item\label{build-step} Let $E' = E' \cup \{(u,v): v \in V$ is active and $u$ is $v$'s most preferred in-neighbor such that $u \notin X_v^{i-1}\}$.
\item For every active node $v$ do:
\smallskip
\begin{itemize}
\item let $X^i_v =$ set of nodes reachable from $v$ using edges in $E'$.
\end{itemize}
\smallskip
\item Let ${\cal X} = \{X^i_v \text{ is } \subseteq\text{-maximal in }\mathcal{X}^i\}$ where ${\cal X}^i = \{X^i_v: v \ \text{is active}\}$.

\ \ \ \ \ $\{${\em note that ${\cal X}^i$ is a laminar family}$\}$
\item For each $X \in {\cal X}$ do: 
\smallskip
\begin{itemize}
\item select any active node $v$ such that $X^i_v = X$; 
\item {\em deactivate} all $u \in X\setminus\{v\}$.
\ \ \ \ \ $\{${\em now $v$ is the only active node in $X$}$\}$
\item if $v$ is reachable from $r$ using edges in $E'$, then deactivate $v$.

$\{${\em this means all nodes in $X$ are reachable from $r$}$\}$
\end{itemize}
\smallskip
\item If there exists any active node, then set $i = i+1$ and go to step~\ref{build-step} above.
\item\label{last-step} Compute an arborescence $A$ in $(V \cup \{r\},E')$ and return $A$.
\end{enumerate}

We reach step~\ref{last-step} only when there is no active node. This means when we reach step~\ref{last-step}, every node is reachable from $r$ using 
the edges in $E'$. Thus there exists an arborescence $A$ in $(V \cup \{r\},E')$. Our task now is to bound its unpopularity factor.

\begin{lemma}
\label{lemma:logn}
The while-loop runs for at most $\lfloor\log n\rfloor + 1$ iterations.
\end{lemma}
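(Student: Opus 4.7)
My plan is to prove the lemma via a doubling argument on the number of active nodes. Let $V_i \subseteq V$ denote the set of nodes that are active at the start of iteration $i$, so that $V_1 = V$ and the while-loop performs iteration $i$ exactly when $|V_i| \geq 1$. The goal is to show $|V_{i+1}| \leq |V_i|/2$ whenever $|V_i| \geq 1$; by induction this gives $|V_i| \leq n/2^{i-1}$, which forces $|V_{i+1}|=0$ as soon as $i \geq \lfloor\log n\rfloor + 1$ and thus bounds the total number of iterations by $\lfloor\log n\rfloor + 1$.

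The active nodes after iteration $i$ are in bijection with those maximal sets of $\mathcal{X}^i$ whose chosen representative is not reachable from $r$ in $E'$; I shall call such maximal sets \emph{alive}. The combinatorial heart of the argument is the following claim: every alive maximal set $X \in \mathcal{X}^i$ contains at least two distinct nodes of $V_i$. Combined with the algorithm's note that $\mathcal{X}^i$ is a laminar family---so that its maximal sets are pairwise disjoint and each node of $V_i$ lies in exactly one of them---summing this claim over all alive maximal sets yields $|V_i| \geq 2|V_{i+1}|$, as required.

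To prove the claim, let $X = X^i_v$ be an alive maximal set with representative $v$; since $v \in V_{i+1} \subseteq V_i$, this already exhibits one element of $X \cap V_i$. For the second, I shall inspect the edge $(u_v, v)$ added in step~\ref{build-step}: because $v$ remains active, $v$ is not $r$-reachable in $E'$, so in particular $u_v \neq r$ and $u_v$ is not $r$-reachable (otherwise $v$ would inherit $r$-reachability through the new edge). Since $E'$ only grows, $u_v$ was not $r$-reachable at iteration $i-1$ either, and thus lies in some maximal set of $\mathcal{X}^{i-1}$ whose chosen representative $w$ must also have escaped $r$-reachability, and hence survives into $V_i$; moreover $w \neq v$ because $u_v \in X^{i-1}_w \setminus X^{i-1}_v$. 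Finally, $w$ reaches $u_v$ via old edges of $E'$, and then $v$ via the new edge $(u_v, v)$, which gives $v \in X^i_w$ and hence $X^i_v \subseteq X^i_w$. Maximality of $X = X^i_v$ forces $X^i_v = X^i_w$, and so $w \in X$.

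The main obstacle I anticipate is the $r$-reachability chase needed to guarantee that the representative $w$ of $u_v$'s maximal set at iteration $i-1$ has not been deactivated: one needs the monotonicity of reachability in $E'$ and the observation that, had $w$ been $r$-reachable, the whole of $X^{i-1}_w$ (including $u_v$, and hence $v$ through the new edge) would have inherited that property, contradicting $v$'s survival. A secondary subtlety is the laminarity of $\mathcal{X}^i$, which the algorithm asserts but does not prove here; if needed, it can be derived by an induction on $i$ analogous to the argument used for Lemma~\ref{restate_lem:u-in-W_v}.
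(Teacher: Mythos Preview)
Your argument is correct and follows the same halving idea as the paper: show that the number of active nodes drops by a factor of two each iteration, whence at most $\lfloor \log n\rfloor + 1$ iterations occur. The paper phrases this via weakly connected components of $(V\cup\{r\},E')$ (each surviving active node shares its component with another active node, and step~6 leaves at most one active node per component), whereas you argue directly with the maximal sets of $\mathcal{X}^i$; the two viewpoints are equivalent here.

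One small point to tighten: your ``and thus lies in some maximal set of $\mathcal{X}^{i-1}$'' implicitly uses the invariant that every node not reachable from $r$ is reachable from some currently active node (so in particular is covered by $\bigcup_{w\in V_{i-1}} X^{i-1}_w$). This invariant is easy to maintain inductively---initially all nodes are active, and whenever an active $w$ is deactivated in step~6 either its representative $v$ still reaches $w$, or $w$ (hence everything $w$ reaches) becomes $r$-reachable---but it is worth stating, since without it your $w$ is not well-defined for $i\ge 2$. With this invariant made explicit (and the natural convention $\mathcal{X}^0=\{\{v\}:v\in V\}$ to cover $i=1$), your proof goes through cleanly.
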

\begin{proof}
Every node $v$ that is active at the start of some iteration either becomes reachable from $r$ in this iteration or it forms a weakly 
connected component that contains two or more active nodes. At the end of each iteration, there is at most one active node in each weakly 
connected component. 

So if $k$ is the number of active nodes at the start of some iteration then the number of active nodes at the end of that
iteration is at most $k/2$. Thus the number of active nodes at the end of the 
$i$-th iteration of the while-loop is at most $n/{2^i}$. Hence the while-loop can run for at most $\lfloor\log n\rfloor + 1$ iterations. \qed
\end{proof}

\begin{lemma}
\label{lemma:bound}
If the while-loop runs for $t+1$ iterations then $u(A) \le t$.
\end{lemma}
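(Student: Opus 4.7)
By Lemma~\ref{lem:unpop-factor} it suffices to construct a laminar family $\mathcal{Y}\subseteq 2^V$ with $|\mathcal{Y}|=n$ that is feasible for the dual LP with costs $c_A$ (threshold $t$). My plan is to extend iteratively the construction behind Theorem~\ref{thm:alg-rootedpop-safe} to the $t+1$ iterations of the while-loop. For each $v\in V$, let $i_v$ be the iteration at which $v$ is deactivated, and define $Y_v:=X^{i_v}_v$ if $v$ is deactivated as a representative and $Y_v:=\{v\}$ if $v$ is deactivated as a non-representative; put $\mathcal{Y}:=\{Y_v:v\in V\}$. The laminarity of $\mathcal{X}^i$ at every iteration~$i$ (which I would justify via a super-graph argument: the edges added in iteration~$i$ give each maximal set of $\mathcal{X}^{i-1}$ a unique predecessor, so descendant-sets in this functional super-graph form a laminar family) combined with the monotonicity $X^i_v\subseteq X^{i+1}_v$ yields laminarity of $\mathcal{Y}$; uniqueness of the representative within a maximal set gives $|\mathcal{Y}|=n$.

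The core structural observation is that every node $v$ lies in at most $t+1$ sets of $\mathcal{Y}$: for each iteration $j$ at most one set $Y_u$ with $i_u=j$ can contain $v$ (maximal sets at iteration $j$ are pairwise disjoint) and there are only $t+1$ iterations. This immediately bounds the separator count for edges $e=(w,v)$ with $e\prec_v A(v)$, where $c_A(e)=t+1$. For edges $e=(w,v)$ with $e\succ_v A(v)$, where $c_A(e)=0$, the argument is an adaptation of the safe-edge reasoning of Section~\ref{sec:lp}: choosing $A(v)=(p(v),v)$ as the most preferred in-neighbor of~$v$ in~$E'$ that does not close a cycle, any $w$ preferred over $p(v)$ must have been skipped by the algorithm at the iteration $i$ when $A(v)$ was added precisely because $w\in X^{i-1}_v\subseteq Y_v$; by laminarity $w\in Y$ for every $Y\in\mathcal{Y}$ containing $v$, so the separator count is~$0$.

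The delicate case is $e=A(v)$, where $c_A(e)=1$ demands that $Y_v$ be the \emph{unique} separator: $Y_v$ separates $v$ from $p(v)$ by construction, but I must also show that every strictly larger set $Y_u\supsetneq Y_v$ with $v\in Y_u$ already contains $p(v)$. To this end I would insist that the arborescence~$A$ in step~\ref{last-step} be chosen in accordance with the super-graph merging structure, so that whenever $v$'s maximal set is absorbed into the maximal set $Y_u=X^{i_u}_u$ at a later iteration, the tail $p(v)$ gets absorbed simultaneously. I anticipate this coordination between the iteration-wise absorption of $v$ and $p(v)$ and the choice of $A$ to be the main technical obstacle; a secondary subtlety is that in corner cases (e.g.\ a non-representative $v$ whose rep is itself rep-deactivated in the same iteration, which already in the toy case $n=2$ can create a chain of length $2$ when $t=0$) the definition of $Y_v$ may need to be refined—for instance by setting $Y_v:=X^{i_v-1}_v$ instead—to preserve simultaneously $|\mathcal{Y}|=n$ and the chain-depth bound.
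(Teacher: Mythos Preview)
Your overall plan---build a size-$n$ dual certificate whose chain depth is bounded by the number of iterations---is the paper's plan too, but your concrete definition of $Y_v$ is the wrong one, and the case split between representatives and non-representatives is what causes the trouble. The paper sets
\[
Y_v := X_v^{\,i_v-1}\qquad\text{uniformly for every }v\in V,
\]
i.e.\ the reachability set of $v$ \emph{just before} its final iteration. With this choice the depth bound is immediate: a node $w$ can lie in $X_v^{\,i_v-1}$ only if $v$ was still active at the end of iteration $i_v-1$, hence $X_v^{\,i_v-1}$ is the maximal set of $\mathcal{X}^{i_v-1}$ containing $w$, and there is exactly one such set per iteration. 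Your definition, by contrast, can put $w$ in both a singleton $\{w\}$ and the representative's set $X_u^{\,i_u}$ at the \emph{same} iteration; already for $n=2$ with a mutual top-choice pair this yields a chain of length~$2$ when $t=0$, so the depth bound fails. (You noticed this corner case and proposed exactly the fix $X_v^{i_v-1}$---that is not a refinement of your construction but its replacement, and once you adopt it the rep/non-rep distinction disappears.)

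With the uniform definition the feasibility argument is much shorter than what you outline. For any $v$, the edge $A(v)$ lies in $E'$, so it was added at some iteration $j\le i_v$ as $v$'s most preferred in-neighbour outside $X_v^{\,j-1}\subseteq Y_v$; hence every edge $(u,v)$ with $u\notin Y_v$ is strictly worse than $A(v)$ and has cost $t+1$, which dominates the number of dual sets it can enter. For $u\in Y_v$ one checks the nesting property $w\in Y_v\Rightarrow Y_w\subseteq Y_v$ (an easy consequence of reachability and $i_w<i_v$), so such an edge enters no dual set at all. There is no need to argue laminarity separately, nor to invoke a ``super-graph merging structure'' to coordinate the choice of $A$: the feasibility holds for \emph{any} arborescence in $E'$, because whichever iteration-$j$ edge $A$ picks at $v$, its tail is the most preferred vertex outside a subset of $Y_v$. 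Your ``delicate case'' $e=A(v)$ is thus less delicate than it looks once the right $Y_v$ is in place.
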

\begin{proof}
Let ${\cal Y} = \{X^{i-1}_v: v \in V \ \text{and}\ v\ \text{got deactivated in the}\ i \text{-th iteration}\}$. That is, $y_X = 1$ if 
$X \in {\cal Y}$ and $y_X = 0$ otherwise. For each node $v$, there is a corresponding set $X^{i-1}_v$ in ${\cal Y}$---note that $v$ is 
the {\em entry-point} for this set. We have $\sum_{X \subseteq V}y_X = n$. 

Since our algorithm runs for $t+1$ iterations, ${\cal Y}$ has at 
most $t+1$ levels.
For any node $v$, our algorithm ensures that the edge $(u^*,v) \in A$ is the most preferred edge entering $v$ with its tail outside 
$X^{i-1}_v$. So every other edge $e = (u,v)$ with $u \notin X^{i-1}_v$ is ranked worse than $(u^*,v) \in A$, thus $c_A(e) = t+1$. 
Hence we have $\sum_{X: \delta^-(X) \ni e}y_X \le c_A(e)$ for every edge $e$. This proves that $y$ is a feasible dual solution for $A$, so $u(A) \le t$. \qed
\end{proof}

Combining Lemmas~\ref{lemma:logn} and \ref{lemma:bound}, the first part of Theorem~\ref{thm:unpop-factor} follows. 

\paragraph{\bf A tight example.} We now describe an instance $G = (V,E)$ on $n$ nodes with strict preferences where every branching has unpopularity factor at least $\lfloor\log n\rfloor$. For convenience, let $n = 2^k$ for some integer $k$. Let $V = \{v_0,\ldots,v_{n-1}\}$.  Every node will have in-degree $k = \log n$. This instance is a generalization of the instance on 4 vertices $a, b, c, d$ given in Section~\ref{sec:intro}.
\begin{itemize}
\item For $0 \le i \le n/2-1$, the nodes $v_{2i}$ and $v_{2i+1}$ are each other's top in-neighbors. Thus $v_0,v_1$ are each other's top choice in-neighbors, $v_2,v_3$ are each other's top choice in-neighbors, and so on. 
\item The nodes $v_0,v_2$ are each other's second choice in-neighbors, similarly, $v_1,v_3$ are each other's second choice in-neighbors, and so on.
More generally, for any~$i$, if $i \in \{4j,\ldots,4j+3\}$, then the node $v_{\ell}$, where $\ell = 4j + (i+2 \bmod 4)$, is $v_i$'s second choice in-neighbor.
\item For any $i$ and any $t \in \{1,\ldots,k\}$, if $i \in \{j2^t,\ldots, (j+1)2^t - 1\}$
then the node $v_{\ell}$, where $\ell = j2^t + (i + 2^{t-1}\bmod 2^t)$, is $v_i$'s $t$-th choice in-neighbor. 
\end{itemize}

For example, $v_0$'s preference order is: $v_1 \succ v_2 \succ v_4 \succ v_8 \succ \cdots \succ v_{n/2}$.
The other preference orders are analogous. As a concrete example, let $n = 8$. So $V = \{v_0,v_1,\ldots,v_7\}$. The preference orders of all the nodes over their in-neighbors are given below.
\begin{eqnarray*}
  & v_0: \ v_1 \succ v_2 \succ v_4 \ \ \ \ \ \ \ \ \ \  & v_1: \ v_0 \succ v_3 \succ v_5 \\
  & v_2: \ v_3 \succ v_0 \succ v_6 \ \ \ \ \ \ \ \ \ \  & v_3: \ v_2 \succ v_1 \succ v_7 \\
  & v_4: \ v_5 \succ v_6 \succ v_0 \ \ \ \ \ \ \ \ \ \  & v_5: \ v_4 \succ v_7 \succ v_1 \\
  & v_6: \ v_7 \succ v_4 \succ v_2 \ \ \ \ \ \ \ \ \ \  & v_7: \ v_6 \succ v_5 \succ v_3.
\end{eqnarray*}
For any branching in the above instance (let us call it $G_k$) on $2^k$ nodes, we claim its unpopularity factor is at least $k$. We will prove this claim by induction on $k$. The base case, i.e., $k = 1$, is trivial. So let us assume that we have $u(\tilde{B}) \ge i$ for any branching $\tilde{B}$ in $G_i$. 

Consider $G_{i+1}$. Note that $v_{2j}$ and $v_{2j+1}$ are each other's top choice in-neighbors for $0 \le j \le 2^i - 1$. Let $B$ be any branching in
$G_{i+1}$. Suppose it is the case that in $B$, for some $j$: neither $v_{2j}$ is $v_{2j+1}$'s in-neighbor nor $v_{2j+1}$ is $v_{2j}$'s in-neighbor. 
Then $u(B) = \infty$, because by making $v_{2j}$ the in-neighbor of $v_{2j+1}$, no node is worse-off and $v_{2j}$ is better-off. 
We assume $u(B) < \infty$.
So it is enough to restrict our attention to the case where for each $j$ we have in $B$: 
\begin{itemize}
\item[$(\ast)$] either  $v_{2j}$ is $v_{2j+1}$'s in-neighbor or $v_{2j+1}$ is $v_{2j}$'s in-neighbor.
\end{itemize}
For each $j \in \{0,\ldots,2^i-1\}$, contract the set $\{v_{2j},v_{2j+1}\}$ into a single node in the graph $G_{i+1}$. The new graph 
(call it $G'_i$) is on $2^i$ nodes and it is exactly the same as $G_i$ except that there are 2 parallel edges between every adjacent pair of nodes 
now  -- both these edges have the same rank. 

Perform the same contraction step on the branching $B$ as well.
By $(\ast)$, it follows that the contracted $B$ (call it $B'$) is a branching such that $B'$ uses at most
1 edge in any pair of parallel edges in $G'_i$. Thus $B'$ is a branching in $G_i$ and we can use induction hypothesis to conclude that $u(B') \ge i$.

\begin{claim}
There is a branching $A'$ in  $G'_i$ such that $\phi(A',B') \ge i$ and  $\phi(B',A') = 1$. Moreover, the lone vertex that prefers $B'$ to $A'$
is a root in $A'$.
\end{claim}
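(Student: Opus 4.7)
The plan is to prove the Claim by strengthening the inductive hypothesis: rather than just the bound $u(\tilde B)\ge i$ used in the main argument, I would carry the statement of the Claim itself through the induction. The bare bound $u(B')\ge i$ only promises some $A'$ with $\phi(A',B')\ge i\cdot\phi(B',A')$ and gives no direct control over $\phi(B',A')$ or over whether the $B'$-preferring voters are roots in $A'$; hence the strengthening is necessary.

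Concretely, I would induct on $i$ on the statement $\mathcal{P}(i)$: for every branching $\tilde B$ in $G_i$ of finite unpopularity factor, there is a branching $\tilde A$ in $G_i$ with $\phi(\tilde A,\tilde B)\ge i$, $\phi(\tilde B,\tilde A)=1$, and the unique voter preferring $\tilde B$ to $\tilde A$ being a root of $\tilde A$. The base case $\mathcal{P}(1)$ is immediate: $G_1$ has two mutually top-ranked vertices, and taking $\tilde A$ to reverse $\tilde B$'s single edge makes the head of $\tilde B$'s edge a root of $\tilde A$ preferring $\tilde B$, while the tail of $\tilde B$'s edge gains its top choice and prefers $\tilde A$. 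The Claim at level $i$ is then just $\mathcal{P}(i)$ applied to $B'$, viewed as a branching in $G_i$ (since $G'_i$ is $G_i$ with tied parallel edges and $B'$ uses at most one edge per pair).

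For the inductive step $\mathcal{P}(i)\Rightarrow\mathcal{P}(i+1)$, I would contract each top-choice pair of $G_{i+1}$ to obtain $B'$ in $G'_i\cong G_i$, apply $\mathcal{P}(i)$ to get $A'$, and lift $A'$ back to $G_{i+1}$ by copying each non-root pair's intra-pair edge from $\tilde B$ and flipping the intra-pair edge at the pair that becomes the root of $A'$. A direct vote count---each non-root pair contributes the same net vote to the $\tilde A$-vs-$\tilde B$ comparison as its contracted counterpart in $G_i$ contributed to $A'$-vs-$B'$, while the flipped root pair contributes a $+1$ for $\tilde A$ from the vertex now gaining its top choice and a $+1$ for $\tilde B$ from the vertex now a root---yields $\phi(\tilde A,\tilde B)\ge i+1$, $\phi(\tilde B,\tilde A)=1$, and the lone $\tilde B$-preferrer being the root of $\tilde A$.

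The main obstacle I expect is verifying that $u(B')<\infty$ in $G_i$ whenever $u(B)<\infty$ in $G_{i+1}$, so that $\mathcal{P}(i)$ can actually be invoked on $B'$: the condition $(\ast)$ at level $i+1$ only constrains intra-pair edges at the current level and does not imply the analogous property at level $i$. I would address this either by a preliminary modification step---inserting any missing mutual-top-choice edge of $G_i$ into $B'$ (which strictly improves one vertex without harming any, and hence cannot weaken the final count when the resulting $A'$ is compared back to the original $B'$)---or by further strengthening $\mathcal{P}$ to cover arbitrary branchings of $G_i$, with an explicit swap construction handling the infinite-factor case.
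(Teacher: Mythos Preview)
Your approach is correct but takes a genuinely different route from the paper's.

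The paper does \emph{not} strengthen the induction. It proves the Claim directly from the weak hypothesis $u(B')\ge i$ by an averaging argument: take a branching $\tilde A$ achieving $u(B')=\phi(\tilde A,B')/\phi(B',\tilde A)$, make every $B'$-preferring node a root of $\tilde A$ (deleting its $\tilde A$-edge only helps), and then pick the sub-arborescence $\tilde A_t$ rooted at the $B'$-preferring root $u_t$ containing the most $\tilde A$-preferring nodes. Replacing everything outside $\tilde A_t$ by $B'$'s own edges yields $A'$ with $\phi(A',B')=n_t$ and $\phi(B',A')=1$; since $n_t\ge\tfrac{1}{j}\sum_i n_i=\phi(\tilde A,B')/\phi(B',\tilde A)\ge i$, the Claim follows. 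This argument is self-contained, needs no recursive structure of $G_i$, and in fact works for any branching with finite unpopularity factor at least~$i$ in any instance. Your inductive approach, by contrast, reuses the lifting construction from the main proof and is tied to the specific pairing structure of the $G_i$'s.

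On your flagged obstacle: $u(B)<\infty$ in $G_{i+1}$ does imply $u(B')<\infty$ in $G_i$. Given a witness $C'$ for $u(B')=\infty$, lift each edge of $C'$ to the parallel copy in $G'_i$ whose head matches $B$'s pattern and copy $B$'s intra-pair edges; the resulting $C$ in $G_{i+1}$ satisfies $\phi(C,B)=\phi(C',B')>0$ and $\phi(B,C)=\phi(B',C')=0$. The paper also uses this implication (to argue $\sum_i n_i=\phi(\tilde A,B')$), so your workarounds are unnecessary. Your ``insert the missing top-choice edge'' fix is also sound but more laborious than needed.
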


We will first assume the above claim and finish our proof on $u(B)$. Then we will prove this claim. Opening up the size-2 supernodes in $B'$ will create $B$: let us run the same  ``opening up'' step on $A'$ to create a branching $A$ in $G_{i+1}$. So $\phi(A,B) \ge i$ and  $\phi(B,A) = 1$. 
We will now modify $A$ to $A^*$ so that $\phi(A^*,B) \ge i+1$ and  $\phi(B,A^*) = 1$.

Let $v_{2j}$ be the lone vertex that prefers $B$ to $A$. By the ``opening up'' step in $B$, $v_{2j+1}$ has $v_{2j}$ as its in-neighbor. The branching $A^*$ will affect only the 2 nodes $v_{2j}$ and $v_{2j+1}$ in $A$. Every other node will have the same in-neighbor in $A^*$ as in $A$. 
The above claim tells us that $v_{2j}$ is a root in $A$.
Make $v_{2j+1}$ a root in $A^*$ and $v_{2j}$'s in-neighbor will be $v_{2j+1}$. The node $v_{2j}$ was the only node that preferred $B$ to $A$ and now $v_{2j}$ prefers $A^*$ to $B$. However there is one node that prefers $B$ to $A^*$: this is $v_{2j+1}$. Recall that $v_{2j+1}$'s in-neighbor in $B$, just as in $A$, is its top-choice neighbor $v_{2j}$ while $v_{2j+1}$ is a root in $A^*$. Thus $\phi(A^*,B) \ge i+1$ and  $\phi(B,A^*) = 1$.

\paragraph{Proof of Claim.} Let $\tilde{A}$ be a branching that maximizes $\phi(\tilde{A}, B')/\phi(B', \tilde{A})$. Let $\{u_1,\ldots,u_j\}$
be the nodes that prefer $B'$ to $\tilde{A}$. There is no loss in assuming that $u_1,\ldots,u_j$ are root nodes in $\tilde{A}$. For each $i$,
let $n_i$ be the number of nodes in the arborescence rooted at $u_i$ in $\tilde{A}$ that have different in-neighbors in $\tilde{A}$ and $B'$ -- note
that each of these nodes prefers $\tilde{A}$ to $B'$ (since the ones who prefer $B'$ to $\tilde{A}$ are root nodes in $\tilde{A}$). 

Let $n_t = \max\{n_i: 1 \le i \le j\}$. 
Let $\tilde{A}_t$ be the maximal sub-arborescence of $\tilde{A}$ rooted at $u_t$, 
and let $X$ be those $n_t$ nodes in $\tilde{A}_t$ that prefer $\tilde{A}$ to $B'$.
We construct a branching $A'$. 
Let us define an arborescence $A'_t$ rooted at $u_t$ by modifying $\tilde{A}_t$ as follows:
for each $w \notin \tilde{A}_t$ that is the descendant of some $v \in \tilde{A}_t$ in $B'$, we add $B'(w)$.
We define $A'$ as the branching that contains $A'_t$ and for which $A'(v)=B'(v)$ for each $v \notin A'_t$.
So each node in $\tilde{A}_t$ has the same in-neighbor in $A'$ as in $B'$, except for the nodes in $X \cup \{u_t\}$.  

The $n_t$ nodes in $X$ prefer $A'$ to $B'$, and $u_t$ prefers $B'$ to $A'$, so we have $\frac{\phi(A', B')}{\phi(B', A')} = n_t$.
Moreover, by $u(B)<\infty$ we also have $u(B')<\infty$, which implies that every node that prefers $\tilde{A}$ to $B'$ is contained 
in a sub-arborescence of $\tilde{A}$ rooted at one of the nodes $u_1, \dots, u_j$.
Therefore we have $\phi(\tilde{A},B') = \sum_{i=1}^j n_i$, which yields
\[ \frac{\phi(A', B')}{\phi(B', A')} \ = \ n_t \ \ge \ \frac{1}{j}\sum_{i=1}^j n_i \ = \ \frac{\phi(\tilde{A}, B')}{\phi(B', \tilde{A})}.\] 
Thus the claim follows. \qed

\section{Hardness Results: Proof of Theorem~\ref{thm:hardness_load}}

\repeattheorem{restate_thm:hardness_load}
In fact, we will show that Theorem~\ref{thm:hardness_load} holds for simple graphs, therefore in this section we will assume that nodes 
have preferences over their in-neighbors. Nevertheless, we will say that an edge $(u,v)$ is a \emph{top-choice edge}, if $u$ is the best choice for $v$.
The following lemma will be useful to prove Theorem~\ref{thm:hardness_load}.

\begin{lemma}
\label{lem:topcycle}
Let $D$ be a digraph where each node has {a strict ranking} over its in-neighbors, and
let $A$ be a popular arborescence in $D$ with dual certificate~$\mathcal{Y}$.
If $C$ is a directed cycle consisting of only top-choice edges, then $A$ enters $C$ exactly once.
Let $a$ be the unique edge in $A \cap \delta^-(C)$ (guaranteed by Lemma~\ref{lem:one-to-one-sets-and-edges}), and let $Y_a$ be the unique set in $\mathcal{Y}$ entered by $a$. 
Then $C \subseteq Y_a$.
\end{lemma}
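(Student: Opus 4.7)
My plan is to first establish $C \subseteq Y_a$ and then use this to deduce uniqueness. Since $A$ is an $r$-arborescence and $r \notin V(C)$ (as $r$ has no incoming edges in $D$), $A$ enters $C$ at least once. Fix any $a = (u,v) \in A \cap \delta^-(C)$ (so $v \in V(C)$ and $u \notin V(C)$) and let $Y_a \in \mathcal{Y}$ be the unique set of $\mathcal{Y}$ entered by $a$; by Lemma~\ref{lem:one-to-one-sets-and-edges}(3), $v$ is the entry-point of $Y_a$. I would label the cycle as $v_1 \to v_2 \to \cdots \to v_k \to v_1$ with $v_k = v$, so that $e_i := (v_{i-1}, v_i)$ (indices mod $k$) is the top-choice edge into $v_i$.

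Next I would show $C \subseteq Y_a$ by a backward induction along $C$, proving $v_{i-1} \in Y_a$ whenever $v_i \in Y_a$. The key sub-claim is that $e_i$ does \emph{not} enter $Y_a$; combined with $v_i \in Y_a$ this forces $v_{i-1} \in Y_a$. I would verify the sub-claim by splitting on whether $e_i \in A$. If $e_i \in A$, then $e_i \ne a$ since $a$'s tail lies outside $V(C)$ while $e_i$'s tail lies inside; so by Lemma~\ref{lem:one-to-one-sets-and-edges}(3), $Y_a$ is entered only by $a$, hence not by $e_i$. If $e_i \notin A$, then since $e_i$ is $v_i$'s unique top-choice edge and preferences are strict, $e_i \succ_{v_i} A(v_i)$, giving $c_A(e_i) = 0$; by dual feasibility of $\mathcal{Y}$, no set of $\mathcal{Y}$ is entered by $e_i$.

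For the uniqueness of the entering edge, I would argue as follows. Suppose $A$ also entered $C$ at some $v' \in V(C)$ with $v' \ne v$ via an edge $a' \in A$, and let $Y_{a'} \in \mathcal{Y}$ be the set it enters, whose entry-point is $v'$. Applying the induction above to $a'$ yields $C \subseteq Y_{a'}$, so in particular $v \in Y_{a'} \setminus \{v'\}$. Then Corollary~\ref{cor:singletons} forces $Y_v = \{v\}$; since $Y_v$ denotes the unique dual set with entry-point $v$, this is $Y_a$, contradicting $C \subseteq Y_a$ together with $|V(C)| \ge 2$. I expect the main obstacle to be the $e_i \in A$ case in the induction: the key insight is recognizing that $e_i \ne a$ (because their tails are on opposite sides of $C$) so that the one-to-one correspondence from Lemma~\ref{lem:one-to-one-sets-and-edges}(3) can be invoked to exclude $e_i$ from entering $Y_a$.
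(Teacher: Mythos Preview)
Your argument is correct and follows essentially the same route as the paper for the containment $C\subseteq Y_a$: both walk backwards along the cycle from the entry node, using that a top-choice cycle edge either lies in $A$ (and then cannot enter $Y_a$ since $a$ is the unique $A$-edge entering $Y_a$) or has cost $0$ (and then enters no dual set). Your case split on $e_i\in A$ versus $e_i\notin A$ is a clean repackaging of the paper's contradiction argument.

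One difference worth noting: the paper's proof text only establishes $C\subseteq Y_a$ and relegates the claim that $A$ enters $C$ exactly once to a parenthetical ``guaranteed by Lemma~\ref{lem:one-to-one-sets-and-edges}'' in the statement, without spelling out why. Your proposal actually supplies this argument explicitly via Corollary~\ref{cor:singletons}: if there were a second entering edge $a'$ at $v'\ne v$, then $C\subseteq Y_{a'}$ as well, so $v\in Y_{a'}\setminus\{v'\}$ forces $Y_v=\{v\}$, contradicting $C\subseteq Y_a=Y_v$ and $|V(C)|\ge 2$. This is a genuine addition that closes a gap the paper leaves to the reader; it is also the most natural way to extract uniqueness from Lemma~\ref{lem:one-to-one-sets-and-edges}, since $C$ itself is not known to lie in~$\mathcal{Y}$.
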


\begin{proof}
Observe that $c_A(e) \leq 1$ for any edge $e$ in $C$, as $e$ is a top-choice edge. Let $c_1,\dots, c_k$ be the nodes of $C$ in this order, 
with $a$ pointing to $c_k$.
Since $c_k$ prefers $c_{k-1}$ in $C$ to the tail of $a$, we get $c_A((c_{k-1},c_{k}))=0$ and thus $c_{k-1} \in Y_a$ by the constraints of \ref{LP2}.
Supposing $c_{k-2} \notin Y_a$ we get that $(c_{k-2},c_{k-1}) \notin A$ because exactly one edge of $A$ enters $Y_a$, by Lemma~\ref{lem:one-to-one-sets-and-edges}. 
Using that $c_{k-1}$ has strict linear preferences and prefers $c_{k-2}$ most, we obtain $c_A((c_{k-2},c_{k-1}))=0$, but this contradicts the constraints of \ref{LP2}. Hence we get $c_{k-2} \in Y_a$ as well. Repeatedly applying this argument, we get that $C \subseteq Y_a$.
\qed
\end{proof}

\medskip

\noindent{\bf Proof of Theorem~\ref{thm:hardness_load}, part~(a).}  
The reduction is from the $\mathsf{NP}$-hard problem \textsc{3-sat} where we are given a 3-CNF formula $\varphi= \bigwedge_{j=1}^m c_j$ over variables $x_1, \dots, x_n$ with each clause $c_j$ containing at most 3 literals; the task is to decide whether $\varphi$ can be satisfied. It is well known that the special case where each variable occurs at most 3 times is $\mathsf{NP}$-hard as well, so we assume this holds for $\varphi$. 

We define a digraph $D_{\varphi}$ as follows. For each variable $x_i$ we define a variable-gadget consisting of a directed 9-cycle $A_i$ on nodes $a_i^1, \dots, a_i^9$, together with nodes $t_i$ and $f_i$, both having in-degree 0 in $D_\varphi$. The top choice for any node $a_i^k$ on $A_i$ is its in-neighbor $a_i^{k-1}$ on $A_i$, its second choice is $t_i$ if $k=1$ and $f_i$ otherwise.\footnote{Throughout the rest of the proof, we treat superscripts in a circular way, that is, modulo length of the cycle in question.} 
Next, for each clause $c_j$ we define a clause-gadget as a directed cycle $C_j$ on nodes $c_j^1,\dots, c_j^h$ where $h$ is the number of literals in $c_j$; we may assume $h \in \{2,3\}$. 
The top choice for any node $c_j^k$ on $C_j$ is its in-neighbor on $C_j$. The second choice of $c_j^k$ depends on the $k$-th literal ${\ell}_j^k$ in $c_j$: it is $t_i$ if ${\ell}_j^k=x_i$, and it is $f_i$ if $\ell_j^k=\overline{x}_i$. We claim that the digraph $D_\varphi$ defined this way admits a popular branching where every node has at most 9 descendants if and only if $\varphi$ is satisfiable. 

First let us suppose that we have a satisfying truth assignment for $\varphi$; we create a branching $B$. 
If variable $x_i$ is true, then we add to $B$ the edge $(f_i,a_i^2)$ and all edges of $A_i$ except for $(a_i^1,a_i^2)$; if $x_i$ is false we add to $B$ the edge $(t_i,a_i^1)$ and all edges of $A_i$ except for $(a_i^9,a_i^1)$. For each $j \in [m]$ let us choose a literal $\ell_j^k$ in clause $c_j$ that is true according to our truth assignment. If $\ell_j^k=x_i$, then we let $B$ contain the edge $(t_i,c_j^k)$; if $\ell_j^k= \overline{x}_i$, then we let $B$ contain the edge $(f_i,c_j^k)$. 
In either case, we also add to $B$ all edges of $C_j$ but the one going into $c_j^k$; this finishes the definition of $B$. 
Observe that if $x_i$ is true, then the descendants of $f_i$ in $B$ are the nodes of $A_i$, and the descendants of $t_i$ are among the nodes of those cycles $C_j$ where $x_i$ is a literal of $C_j$; the case when $x_i$ is false is analogous. Hence, each node in $B$ has at most 9 descendants as promised. 

Let us prove that $B$ is popular. To this end, we define the graph $D'_{\varphi}$ by adding a new dummy root $r_0$ to $D_{\varphi}$ and making it the worst choice for every node in $D_{\varphi}$; moreover, we define an arborescence $A$ in $D'_{\varphi}$ by adding an edge from $r_0$ to each root of $B$. 
Then $B$ is a popular branching in $D_{\varphi}$ if and only if $A$ is a popular arborescence in $D'_{\varphi}$.
To show the latter, we define a dual certificate $\mathcal{Y}$ that contains the set $V(A_i)$ for each $i \in [n]$, the set $V(C_j)$ for each $j \in [m]$, and a singleton for each node except for those at which an edge of $B$ enters some cycle $A_i$ or $C_j$. It is straightforward to check that $\mathcal{Y}$ is indeed a dual solution proving the popularity of $A$ in $D'_{\varphi}$, and therefore of $B$ in $D_{\varphi}$.

\smallskip

Let us now suppose that we have a popular branching $B$ with each node having at most 9 descendants; we are going to define a satisfying truth assignment for $\varphi$. 
Note that the only possible roots in $B$ are the nodes in $R=\bigcup_{i \in [n]}\{ t_i,f_i\}$, since any other node $v$ has an in-neighbor in $R$ (assuming $v$ to be a root in $B$, adding an edge from $R$ to $v$ results in a branching more popular than $B$). 
Let $A$ be the popular arborescence corresponding to $B$, and let $\mathcal{Y}$ be a dual certificate proving the popularity of $A$. 

Let $e_i$ be the edge entering $A_i$ in $B$, and let $Y_{e_i}$ be the unique set in $\mathcal{Y}$ entered by $e_i$.
Similarly, let $e'_j$ be the edge entering $C_j$ in $B$, and let $Y_{e'_j}$ be the unique set in $\mathcal{Y}$ entered by $e'_j$. 
By Lemma~\ref{lem:topcycle}, we know that $A_i \subseteq Y_{e_i}$ and $C_j \subseteq Y_{e'_j}$. 

Let us define a truth assignment by setting $x_i$ true if and only if the head of $e_i$ is $f_i$. 
Note that all 9 nodes of $A_i$ are descendants of the head of $e_i$. Hence, the head of an edge $e'_j$ can only be $f_i$ if $x_i$ is false, 
and similarly, it can only be $t_i$ if $x_i$ is true. Thus, any cycle $C_j$ must be the descendant of a node representing a true literal (where $t_i$ and $f_i$ 
represent $x_i$ and $\overline{x}_i$, respectively). By the construction of $D_\varphi$, we have that any clause contains a literal set to true by the truth assignment, 
so $\varphi$ is satisfiable, proving the theorem.
\qed

\medskip 

\noindent{\bf Proof of Theorem~\ref{thm:hardness_load}, part~(b).}  
We give a reduction from the variant of the \textsc{Directed Hamiltonian Path} problem where the input digraph has a root $r$ with in-degree 0 that is the parent of all other nodes; 
it is easy to see that this version is also $\mathsf{NP}$-hard.
Let $G=(V \cup \{r\},E)$ be our given input. 
For each node we fix an arbitrary ordering on its in-neighbors, and we denote by $n(v,i)$ the $i$-th in-neighbor of a node $v \in V$.

We are going to construct a digraph $D$ that consists of a node gadget $\mathcal{G}_v$ for each $v \in V$, together with extra nodes $r$ (having in-degree 0) and $r'$. 
The gadget $\mathcal{G}_v$ consists of a \emph{core cycle} $C_v$ together with \emph{pendant cycles} $P_{v,1}, \dots, P_{v,d_v}$, each of length $d_v$, where $d_v$  denotes the in-degree of $v$ in $G$. The nodes in the core cycle are $c_v^1, \dots, c_v^{d_v}$, those in the $i$-th pendant cycle $P_{v,i}$ are $p_{v,i}^1, \dots, p_{v,i}^{d_v}$; we treat superscripts modulo $d_v$. 
The top choice for any node on these cycles is its in-neighbor within the cycle. 
The preferences are as follows, where for simplicity we define $c_r^1:=r$. 
\begin{eqnarray*}
  c_v^j &:& \ c_v^{j-1} \succ c_{n(v,j)}^1 \succ c_{n(v,j+1)}^1 \succ \dots \succ c_{n(v,d_v)}^1 \succ c_{n(v,1)}^1 \succ \dots \succ c_{n(v,j-1)}^1; \\
  p_{v,i}^j &:& \ p_{v,i-1} \succ c_v^j; \\
  r' &:& \ r. 
\end{eqnarray*}
This finishes the definition of $D$. 

We claim that $G$ has a Hamiltonian path if and only if $D$ has a popular branching with out-degree at most 2. 

For the first direction, suppose that $D$ has such a branching $B$. 
Clearly, $r$ is a root of $B$, since it has in-degree 0.
We claim that $B$ is an arborescence with root~$r$. 
First observe that any pendant cycle must be entered by $A$ once, as otherwise there exists a root of $B$ in the cycle, 
and adding the second-choice edge of this root node to $B$ (coming form a core cycle unreachable from the pendant cycle) 
we obtain a branching $B'$ that is more popular than $B$. 
Since  $(r,v) \in E$ for each $v \in V$, each node $c_v^j$ in a core cycle has an incoming edge from $r$, so such a node cannot be a root in $B$ either, proving that $B$ is indeed an arborescence. 
In particular, $\delta^-(C_v) \cap B \neq \emptyset$ for each $v \in V$.

By Lemma~\ref{lem:topcycle} we know that $B$ enters any core cycle $C_v$ exactly once, and therefore $|B \cap C_v|=d_v-1$. 
In addition, there are exactly $d_v$ edges of $B$ pointing from $C_v$ to the pendant cycles $P_{v,j}$, $j \in [d_v]$, because $B$ is an arborescence. 
This implies that there can be at most 1 edge of $B$ leaving $C_v$ and pointing to another core cycle $C_u$, as otherwise the $d_v$ nodes in $C_v$ would together have more than $2d_v$ outgoing edges in $B$, yielding that at least one of them would have out-degree 3 in $B$, a contradiction. 

Let us now define a set $H$ of edges in $G$ as follows: for each $u,v \in V$, we add $(u,v)$ to $H$ if and only if there is an edge from $C_u$ to $C_v$ in $B$.
Furthermore, we add the edge $(r,v)$ to $H$ if and only if there is an edge from $r$ to $C_v$ in $B$; 
note that there can be at most one such edge, because $(r,r') \in B$ and $B$ has out-degree at most~2. Observe that by the construction of $D$, we have $H \subseteq E$. 
Recall that by the previous paragraph, $|\delta^+(v) \cap H| \leq 1$ for each $v \in V \cup \{r\}$, and that $|\delta^-(v) \cap H| \geq 1$ for each $v \in V$. 
Moreover, $H$ is acyclic, since any cycle in $H$ would imply the existence of a cycle in $B$ as well.
Therefore, $H$ must be a Hamiltonian path. 

\smallskip

For the other direction, let $H$ be a Hamiltonian path in $G$, starting from $r$.
We define a popular branching $B$ that happens to be an arborescence.
First, for each $(u,v) \in H$ we add $(c_u^1,c_v^j)$ to $B$ where $u$ is the $j$-th in-neighbor of $v$, 
and we also add all edges of $C_v$ to $B$ except for the one pointing to $c_v^j$;
note that here we cover the case where $u=r$ as well. 
Notice that for each $v \in V$ there are at most $d_v$ edges of $B$ whose tail is in $C_v$.
Hence, there exist $d_v$ edges in $\delta^+(C_v)$ whose addition to $B$ does not violate our bound on the out-degree 
and such that each of these edges points to a distinct pendant cycle $P_{v,j}$ (note that any pendant cycle can be connected to any node on $C_v$).
Let us add these edges to~$B$ as well, together with all edges in $P_{v,j}$ except for the one whose head already has an incoming edge in~$B$, 
for each $j \in [d_v]$. Finally, we add the edge $(r,r')$ to $B$. 
It is easy to verify that the edge set $B$ obtained this way is indeed an arborescence with root~$r$, 
and has out-degree at most 2. 

It remains to show that $B$ is popular.
To this end, we define the graph $D'$ by adding a new dummy root $r_0$ to $D$ and making it the worst choice for every node in $D$; 
moreover, we define an arborescence $A$ in $D'$ by adding the edge $(r_0,r)$ to~$B$. 
Then $B$ is a popular branching in $D$ if and only if $A$ is a popular arborescence in $D'$.
To prove the latter, we provide a dual certificate $\mathcal{Y}$ as follows. 
For each core or pendant cycle $C$, we put the set $V(C)$ into $\mathcal{Y}$, together with a singleton $\{v\}$ for each $v \in V(C)$ 
except for the node at which $B$ enters $C$. We also add singletons $\{r'\}$ and $\{r\}$.
The set system $\mathcal{Y}$ so obtained contains exactly $|V(D)|$ sets, so it remains to show that it fulfills the conditions of
\ref{LP2}. First note that any edge may enter at most two sets from $\mathcal{Y}$. 
Note also that if $v$ is a node such that $B$ enters a core or pendant cycle $C$ at $v$, then $\delta^-(v) \cap B$ is the second choice for $v$ 
(and its best choice is within $C$, the set of $\mathcal{Y}$ corresponding to $v$);
otherwise $\delta^-(v) \cap B$ is the best choice for $v$. From these facts it is straightforward to verify the constraints of \ref{LP2}, 
so the popularity of $B$ and hence the theorem follows.
\qed

\section{Popular mixed branchings}
A \emph{mixed branching} $P$ is a probability distribution (or lottery) over branchings in $G$,
i.e., $P = \{(B_1,p_1)\ldots,(B_k,p_k)\}$, where $B_i$ is a branching in $G$ for each~$i$ and
$\sum_{i=1}^k p_i = 1$, $p_i \ge 0$ for all $i$. Popular mixed {\em matchings} were studied in \cite{KMN09} where it was
shown that popular mixed matchings always exist and can be efficiently computed. Using the proof and method in \cite{KMN09}, we now show
that popular mixed branchings also always exist and such a mixed branching can be computed in polynomial time.

The function $\phi(B,B')$ that allowed us to compare two branchings $B, B'$ generalizes to mixed
branchings in a natural way. For mixed branchings $P = \{(B_1,p_1)\ldots,(B_k ,p_k)\}$ and
$Q = \{(B'_1,q_1)\ldots,(B'_{\ell},q_{\ell})\}$, the function $\phi(P,Q)$ is the expected number of nodes that prefer
$B$ to $B'$ where $B$ and $B'$ are drawn from the probability distributions $P$ and $Q$ respectively; in other words,
\[   \phi(P,Q) = \sum_{i=1}^k \sum_{j=1}^l\ p_i\, q_j\, \phi(B_i, B'_j).\]

\begin{definition}
  \label{def:mix-pop-branching}
  A mixed branching $P$ is popular if $\phi({P},{Q}) \geq \phi({Q},{P})$ for all mixed branchings~$Q$. 
\end{definition}

Consider the instance on 4 nodes $a, b, c, d$ described in Section~\ref{sec:intro} that did not admit any
popular branching. Let $B_1 = \{(a,b),(b,d),(d,c)\}$, $B_2 = \{(b,a),(a,c)$, $(c,d)\}$, $B_3 = \{(c,d),(d,b),(b,a)\}$, and
$B_4 = \{(d,c),(c,a),(a,b)\}$. It can be verified that the mixed matching $P = \{(B_1,1/4),(B_2,1/4),(B_3,1/4)$, $(B_4,1/4)\}$
is popular.

\begin{proposition}
\label{prop:mixed}
 Every instance $G$ admits a popular mixed branching.
 \end{proposition}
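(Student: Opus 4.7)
The plan is to cast the problem as a symmetric two-player zero-sum game and invoke the minimax theorem, following the approach of \cite{KMN09} for popular mixed matchings. Define the payoff matrix $M$ indexed by pairs of branchings in $G$ via $M(B,B') = \phi(B,B') - \phi(B',B)$. Since the number of branchings in $G$ is finite, $M$ is a finite matrix, and it is skew-symmetric: $M(B,B') = -M(B',B)$. By the minimax theorem there exists a mixed strategy $P^*$ attaining $\max_{P}\min_{Q}\sum_{i,j} p_i q_j\, M(B_i,B_j)$, and this maximin value equals the minimax value. Skew-symmetry forces this common value to be $0$, since for any mixed strategy $P$ the expected payoff $\sum_{i,j} p_i p_j\, M(B_i,B_j)$ is~$0$.

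It follows that for every pure branching $B'$ we have $\phi(P^*,B') - \phi(B',P^*) \ge 0$, and by linearity of $\phi$ in each argument the same inequality extends to every mixed branching $Q$. Thus $P^*$ satisfies $\phi(P^*,Q) \ge \phi(Q,P^*)$ for all mixed $Q$, which is exactly Definition~\ref{def:mix-pop-branching}, proving existence.

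For polynomial-time computation, the natural LP has one variable per branching and is therefore of exponential size, but it can be solved via the ellipsoid method once an efficient separation oracle for the dual is available. The separation oracle amounts to a best-response computation: given a mixed branching $P$, find a branching $B'$ maximizing $\phi(B',P) - \phi(P,B')$. Observe that $\phi(B',P) - \phi(P,B') = \sum_i p_i(\phi(B', B_i) - \phi(B_i, B'))$, which, using the edge-cost trick from Section~\ref{sec:lp}, can be rewritten as a linear function of the incidence vector of $B'$ (each edge $e$ receives a cost that is the expectation of $c_{B_i}(e)-1$ over the support of $P$). Hence the best response reduces to a min-cost arborescence problem in $D$, solvable in polynomial time; feeding this into the ellipsoid method produces $P^*$ with polynomially many branchings in its support.

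The main obstacle is verifying that the best-response quantity $\phi(B',P) - \phi(P,B')$ is genuinely linear in the incidence vector of $B'$ so that a single min-cost arborescence computation suffices as a separation oracle; this is the step where the argument departs from a pure existence proof and actually mirrors the algorithmic analysis of \cite{KMN09}. Existence itself, by contrast, is a direct consequence of skew-symmetry and the minimax theorem and requires no further combinatorial input.
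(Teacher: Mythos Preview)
Your existence argument is correct and essentially identical to the paper's: both cast the problem as a finite two-player zero-sum game with skew-symmetric payoff matrix $M(B,B')=\phi(B,B')-\phi(B',B)$, invoke the minimax theorem, and use skew-symmetry to conclude the game value is~$0$, so a maximin strategy is a popular mixed branching. The computational discussion you append (separation via min-cost arborescence) goes beyond what Proposition~\ref{prop:mixed} asserts, but it too mirrors the paper's subsequent treatment leading to Theorem~\ref{thm:mixed}.
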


The proof of the above proposition is the same as the one given in \cite{KMN09} for popular mixed matchings.
Consider a two-player zero-sum game where the rows and columns of the payoff matrix $M$ are indexed by all branchings
$B_1,\ldots,B_N$ in $G$. The $(i,j)$-th entry of the matrix $M$ is  $\Delta(B_i,B_j) = \phi(B_i,B_j) - \phi(B_j,B_i)$.
A mixed strategy of the row player is a probability distribution $\langle p_1, \ldots, p_N\rangle$ over the rows
of~$M$; similarly, a mixed strategy of the column player is a probability distribution
$\langle q_1, \ldots, q_N\rangle$ over the columns of~$M$.

The row player seeks to find a mixed branching $P$ that maximizes $\min_Q\Delta(P,Q)$. The column player
seeks to find a mixed branching $Q$ that minimizes $\max_P\Delta(P,Q)$. We have:
\[0 \ \le \ \min_Q\max_P\Delta(P,Q) \ = \ \max_P\min_Q\Delta(P,Q) \ \le \ 0,\]
where the first inequality follows by taking $P = Q$, the last inequality follows by taking $Q = P$, and
the (middle) equality follows from Von~Neumann's minimax theorem. Thus $\max_P\min_Q\Delta(P,Q) = 0 $, i.e.,
there exists a probability distribution $P$ over branchings such that $\Delta(P,Q) \ge 0$ for all mixed branchings $Q$. In other words,
 $P$ is a popular mixed branching. 
 
 \paragraph{\bf Computing a popular mixed branching.} {Since branchings in $G$ and $r$-arborescences in $D = (V\cup\{r\},E)$ are equivalent with respect to popularity}, we will work 
 in the graph $D$ now. Analogous to \cite{KMN09}, instead of {\em mixed} arborescences, it will be more convenient to deal with {\em fractional} arborescences. 
 
 A fractional arborescence $x$ is a point in the arborescence polytope ${\cal A}$ of $D$, i.e., $x$ is a point that satisfies constraints~\eqref{constr1}-\eqref{constr2}. So $x$ is a convex combination of arborescences in $D$, i.e., it is a mixed arborescence
 $\{(A_1,\alpha_1)\ldots,(A_k ,\alpha_k)\}$ where $x = \sum_j \alpha_jI_{A_j}$ (note that there may be multiple ways of expressing $x$ as a mixed
 arborescence). 
 
 Conversely, every mixed arborescence $P = \{(A'_1,p_1)\ldots,(A'_t ,p_t)\}$ maps to a fractional arborescence $\sum_k p_kI_{A'_k}$, where $I_{A'_k}$ is the incidence vector of arborescence $A'_k$. Thus there is a many-to-one mapping between mixed arborescences and fractional arborescences. Given a 
 fractional arborescence $x$, we can efficiently find an equivalent mixed arborescence whose support is at most $m$ using Carath\'{e}odory's theorem.
 
 For any two fractional arborescences $x, y$, define $\Delta(x,y)$ as follows:
 \[\Delta(x,y) = \sum_{u\in V}\sum_{\substack{e\in\delta^-(u)\\e'\in\delta^-(u)}} x_e\, y_{e'}\,\vote_u(e,e'),\] 
 where $\vote_u(e,e')$ is 1, 0, -1 corresponding to $e \succ_u e'$, $e \sim_u e'$, and $e \prec_u e'$, respectively. Let $P, Q$ be
 two mixed arborescences and let $x,y$ be their corresponding fractional arborescences. It is easy to show that $\Delta(P,Q) = \Delta(x,y)$.
 
 A popular fractional arborescence $x$ is popular if $\Delta(x,y) \ge 0$ for all fractional arborescences $y$. It follows from Proposition~\ref{prop:mixed}
 that popular fractional arborescences always exist in $D$. The following linear program finds a popular fractional arborescence $x$.
\begin{linearprogram}
  {
    \label{mixed-LP} 
    \minimize 0
  }
  \Delta(x,A) & \ \geq \ 0 \ \ \  \forall\, \text{arborescences}\ A \ \text{in}\ D\notag \\
  x & \ \in\ {\cal A}  \notag
\end{linearprogram}

The feasible region of \ref{mixed-LP} is the set of fractional arborescences that do not lose to any {\em integral} arborescence. This immediately implies
that such a fractional arborescence is a popular fractional arborescence. 

There are 2 sets of exponentially many constraints in \ref{mixed-LP}. Both sets of constraints admit efficient separation oracles: to decide if $x \in {\cal A}$ or not, a min $r$-cut needs to be computed in $D$ with edge capacities given by $x$. If this cut $(S\cup\{r\}, V\setminus S)$ has value less
than 1, then the set $V \setminus S$ forms a violating constraint w.r.t. \eqref{constr1}; else $x \in {\cal A}$.

To decide if $\Delta(x,A) \ge 0$ for all arborescences $A$, we compute a min-cost arborescence in $D$ with the following edge costs:
\[ c_x(e) \ = \ \sum_{e' \succ_u e} x_{e'} - \sum_{e' \prec_u e} x_{e'} \ \ \ \forall e \in E.\]
It is simple to check that for any arborescence $A$, we have $c_x(A) = \Delta(x,A)$. Thus $x$ is {\em unpopular} if and only if there is an arborescence
$A$ with $c_x(A) < 0$. 

Since a min-cost arborescence can be computed in polynomial time~\cite{Edmo67a,KoVy06a}, we can efficiently find a violating
constraint $\Delta(x,A) < 0$ if $x$ is unpopular. Thus we can compute a popular mixed arborescence in polynomial time using the ellipsoid method.
Hence we have shown the following theorem.

\begin{theorem}
\label{thm:mixed}
A popular mixed branching in a digraph $G$ where every node has preferences in arbitrary partial order over its incoming edges can be computed in polynomial time.
\end{theorem}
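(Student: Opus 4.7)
The plan is to pass from branchings in $G$ to $r$-arborescences in $D = (V \cup \{r\}, E)$ via the popularity-preserving correspondence already used throughout the paper, and then to produce a \emph{fractional} rather than mixed arborescence. A fractional arborescence is a point $x$ in the arborescence polytope $\mathcal{A}$ of $D$; any such $x$ is a convex combination of integral arborescences, and by Carath\'{e}odory's theorem an explicit decomposition with support at most $m$ can be extracted in polynomial time. So it suffices to compute in polynomial time a fractional arborescence $x$ that is popular in the bilinear sense, i.e., $\Delta(x, y) \geq 0$ for every fractional arborescence $y$.

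Next, I would set up the linear program that asks for $x \in \mathcal{A}$ together with the constraints $\Delta(x, A) \geq 0$ for every integral $r$-arborescence $A$ of $D$ (this is \ref{mixed-LP}). Proposition~\ref{prop:mixed} guarantees that the feasible region is non-empty. Conversely, feasibility implies genuine popularity: since $\Delta(x, \cdot)$ is linear in its second argument, $\Delta(x, A) \geq 0$ for every integral $A$ extends by convex combination to $\Delta(x, y) \geq 0$ for every fractional arborescence $y$, and hence $\phi(x, y) \geq \phi(y, x)$ for the corresponding mixed branchings.

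The crux is to run the ellipsoid method on this LP, which requires a polynomial-time separation oracle for each of the two exponentially large families of constraints. For membership in $\mathcal{A}$, one computes a minimum $r$-cut in $D$ with edge capacities $x_e$; either its value is at least $1$ (so $x \in \mathcal{A}$) or the co-root side of a cut of value less than $1$ exhibits a violated constraint of type \eqref{constr1}. For the popularity constraints, observe that if one defines
\[
c_x(e) \;=\; \sum_{e' \succ_u e} x_{e'} \;-\; \sum_{e' \prec_u e} x_{e'} \qquad \text{for } e \in \delta^-(u),
\]
then for every integral arborescence $A$ one has $c_x(A) = \Delta(x, A)$ by summing $\vote_u(\cdot,\cdot)$ over the unique edge of $A$ entering each node. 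Computing a minimum-cost $r$-arborescence in $D$ with costs $c_x$ (in polynomial time by Edmonds' algorithm, as cited) either certifies $\Delta(x, A) \geq 0$ for all $A$ or returns a violating $A$.

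The main technical point to verify carefully is the identity $\Delta(x, A) = c_x(A)$ under arbitrary partial-order preferences: since $\vote_u$ is antisymmetric and vanishes on incomparable or tied pairs, the contribution of node $u$ to $\Delta(x, A)$ collapses to $c_x(A(u))$, but one must double-check the indifference case so that no tied pair contributes spuriously. Once this identity is confirmed, the ellipsoid method yields a feasible (hence popular) fractional arborescence $x$ in polynomial time, and Carath\'{e}odory decomposition converts $x$ to a polynomially supported popular mixed branching in $G$.
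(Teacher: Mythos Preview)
Your proposal is correct and follows essentially the same approach as the paper: pass to fractional $r$-arborescences in $D$, solve \ref{mixed-LP} via the ellipsoid method with the two separation oracles (min $r$-cut for membership in $\mathcal{A}$, and min-cost arborescence with the costs $c_x$ for the popularity constraints), invoke Proposition~\ref{prop:mixed} for feasibility, and finish with Carath\'{e}odory. The paper's argument is identical in structure and detail, so there is nothing to add.
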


\end{document}